\providecommand{\algorithmname}{Algorithm}
\newcommand{\lyxaddress}[1]{
	\par {\raggedright #1
	\vspace{1.4em}
	\noindent\par}
}
\theoremstyle{definition}
\newtheorem{defn}{\protect\definitionname}
\theoremstyle{remark}
\newtheorem{rem}{\protect\remarkname}
\theoremstyle{plain}
\newtheorem{thm}{\protect\theoremname}
\theoremstyle{plain}
\newtheorem{prop}{\protect\propositionname}
\theoremstyle{definition}
 \newtheorem{example}{\protect\examplename}
\theoremstyle{plain}
\newtheorem{lem}{\protect\lemmaname}
\theoremstyle{plain}
\newtheorem{cor}{\protect\corollaryname}
\providecommand{\corollaryname}{Corollary}
\providecommand{\definitionname}{Definition}
\providecommand{\examplename}{Example}
\providecommand{\lemmaname}{Lemma}
\providecommand{\propositionname}{Proposition}
\providecommand{\remarkname}{Remark}
\providecommand{\theoremname}{Theorem}
\begin{document}
\title{A general perspective on the Metropolis--Hastings kernel}
\author{Christophe Andrieu$^{*}$, Anthony Lee$^{*}$ and Sam Livingstone$^{\dagger}$}
\maketitle

\lyxaddress{$^{*}$School of Mathematics, University of Bristol, U.K. \\
$^{\dagger}$Department of Statistical Science, University College
London, U.K.}
\begin{abstract}
Since its inception the Metropolis--Hastings kernel has been applied
in sophisticated ways to address ever more challenging and diverse
sampling problems. Its success stems from the flexibility brought
by the fact that its verification and sampling implementation rests
on a local ``detailed balance'' condition, as opposed to a global
condition in the form of a typically intractable integral equation.
While checking the local condition is routine in the simplest scenarios,
this proves much more difficult for complicated applications involving
auxiliary structures and variables. Our aim is to develop a framework
making establishing correctness of complex Markov chain Monte Carlo
kernels a purely mechanical or algebraic exercise, while making communication
of ideas simpler and unambiguous by allowing a stronger focus on essential
features --- a choice of embedding distribution, an involution and
occasionally an acceptance function --- rather than the induced,
boilerplate structure of the kernels that often tends to obscure what
is important. This framework can also be used to validate kernels
that do not satisfy detailed balance, i.e. which are not reversible,
but a modified version thereof.
\end{abstract}
\tableofcontents{}

\section{Introduction}

Assume one is interested in sampling from a probability distribution
$\pi$, defined on some probability space $(\mathsf{Z},\mathscr{Z})$.
A Markov chain Monte Carlo algorithm (MCMC) consists of simulating
a realization of a time-homogeneous Markov chain $(Z_{0},Z_{1},Z_{2}\ldots)$,
of say kernel $P$, with the property that the distribution of $Z_{n}$
becomes arbitrarily close to $\pi$ as $n\to\infty$ irrespective
of the distribution of $Z_{0}$. A property the kernel $P$, or its
components in the case of mixtures or composition of kernels, must
satisfy is to leave the distribution $\pi$ invariant, that is $\pi$
should be a fixed point of the Markov kernel. This is often referred
to as a ``global balance'' condition in the physics literature and
is most often not tractable to verify. Instead one can consider the
stronger ``detailed balance'' condition, or reversibility, a more
tractable property due to its local character which has led in particular
to the celebrated Metropolis-Hastings (MH) kernel \citep{metropolis1953equation,hastings1970monte},
the cornerstone of MCMC simulations, and a multitude of successful
variations. It is difficult to overstate the importance of detailed
balance when discussing the widespread application of MH kernels:
one can view such a kernel as being defined by a pair $(\pi,Q)$,
where $Q$ is a proposal Markov kernel, and the algorithm requires
only simulation according to $Q$ and computing densities associated
with $\pi$ and $Q$. This ease of use has lead to MH algorithms being
used in increasingly sophisticated contexts, leading to sometimes
spectacular practical improvements but also increased complexity when
establishing correctness (which we will take throughout to mean ensure
that $\pi$ is left invariant by $P$) and communicating their structure.
The aim of this paper is to develop a simple and general framework
to address these issues. In particular, the proposed framework defines
an invariant MH kernel $\Pi$ using a triple $(\mu,\phi,a)$, where
$\mu$ is the invariant distribution of $\Pi$, $\phi$ is an involution
and $a$ is an acceptance function, and retains similar ease-of-use
properties to those described above: one is required only to be able
to simulate from an appropriate conditional distribution of $\mu$,
calculate $\phi$ and ratios of densities involving $\mu$ and $\phi$.

\subsection{Contributions}

We consider a framework, extending \citet{tierney1998}, for defining
a $\mu$-reversible Markov kernel $\Pi$ of the Metropolis-Hastings
type, which only requires the specification of a triplet $(\mu,\phi,a)$
where $\mu$ is a probability measure on some space $(\mathsf{E},\mathscr{E})$,
$\phi\colon\mathsf{E}\rightarrow\mathsf{E}$ an involution, and $a\colon\mathbb{R}_{+}\rightarrow[0,1]$
an acceptance function. As we shall see, this covers most scenarios
of interest where sampling from $\pi$ as above is of interest by
letting $\pi$ be a marginal of $\mu$. More specifically for $\xi=(\xi_{0},\xi_{-0})\sim\mu$
such that $\xi_{0}\sim\pi$, $\xi_{-0}$ is a set of instrumental
random variables involved in the design of MH kernels--often referred
to as ``proposals'' for standard algorithm, but we refrain from
using this reductive terminology. Then the involution $\phi$ is applied,
defining $\xi':=\phi(\xi_{0})$, and $\xi'_{0}$ is the next state
of the Markov chain with a probability entirely determined by the
triplet $(\mu,\phi,a)$, or the Markov chain remains at $\xi_{0}$.
What is remarkable is that a correct algorithm is mathematically entirely
determined by this triplet--in particular there is, again at a theoretical
level, no need to determine an expression for the ``acceptance ratio'':
it exists! 

Practical implementation requires determining a tractable expression
for the acceptance ratio which is, fundamentally, of a measure theoretic
nature. Measure theoretic arguments are often overlooked in the literature
and indeed do not need to be considered in detail in most simple scenarios.
However this is not the case for more involved cases, where such issues
can lead to excruciating and \emph{ad hoc} contortions, and we have
made an effort here not to ignore them. We hope to convince the reader
that doing so is truly valuable and brings both generality and clarity
to the arguments. The background required is minimal and we provide
key results in the text: extensive knowledge of measure theory is
not a prerequisite to read the manuscript.

As we shall see we focus primarily on the choice of $(\mu,\phi)$
since the choice of $a$ is, at least theoretically, independent of
the choice of $(\mu,\phi)$ and can be determined optimally thanks
to the results of \citet{peskun1973optimum,tierney1998} in the reversible
setup and \citet{Andrieu2019} for nonreversible extensions. We revisit
numerous examples, some particularly simple for pedagogical purposes,
but also dedicate full sections (Sections~\ref{sec:NUTS} and \ref{sec:Delayed-rejection})
to popular examples which, we know, have baffled more than one researcher
before. This includes the No U-Turn Sampler \citep{hoffman2014no},
the extra-chance algorithm \citep{sohl2014hamiltonian,campos2015extra}
or event chain algorithms \citep{michel:2016}. In fact, We provide
generalizations and in some cases completely novel versions of these
algorithms.

We neither address the issues of convergence to equilibrium or ergodic
averages, nor answer the question of what is the best possible involution.
These are completely separate issues but we note that the ideas of
\citet{thin2020metflow}, or more generally adaptive MCMC \citep{andrieu2008tutorial},
could be used for the latter purpose while \citet{durmus2017convergence,thin:2020b}
provide some ideas concerning general results to establish irreducibility
and aperiodicity, the additional sufficient ingredients needed to
ensure convergence. There are in our view too many degrees of freedom
involved in the choice of good involutions, auxiliary variables and
their distributions and we do not believe that a theorem can, yet,
replace intuition, creativity and commonsense when designing good
MCMC schemes. Our aim here is rather to make checking that one's intuition
is correct a purely algebraic exercise, removing in particular the
need to revisit common points every time the question of correctness
arises, while helping with efficient and unambiguous communication
of potentially very complex schemes--see \citet{andrieu-doucet-yildirim-chopin-2020}
for an attempt at implementing this point of view.

We limit probabilistic arguments and notation to a minimum and, in
contrast with accepted common wisdom, most often use lower case fonts
for both random variables and their realizations in order to alleviate
notation. We hope this does not cause confusion.

\subsection{MCMC and involutions in the literature}

This work is strongly influenced by \citet{tierney1998} where the
possibility of using involutions as ``deterministic proposals''
is suggested, but not developed as a unifying tool as in the present
paper, and the treatment of densities therein is the direct source
of inspiration for our own treatment. The papers \citet{fang2014compressible,campos2015extra}
were complementary, and revealed to us the importance and generality
of the involution point of view, both in the reversible and nonreversible
setups, although not always in an explicit manner. A statement of
the main abstract result (Theorem~\ref{thm:invo-rev}) was given
in \citet[Proposition 3.5]{Andrieu2019} and presented in a series
of lectures organized at the Higher School of Economics lectures in
St. Petersburg in August 2019 \citep{andrieu:2019}, together with
various applications, while a preliminary version of the results concerned
with NUTS were presented at BayesComp 2020 in Florida in January 2020.
We have recently become aware of \citet[p. 64]{graham2018auxiliary}
where the possibility of using an involution as an update was suggested,
drawing on an analogy to \citet{green1995reversible}, but not developed.
In fact the involutive framework underpins \citet{green1995reversible}
but is not made explicit. The term ``Involutive MCMC'', perhaps
a tautology, was coined in \citet{neklyudov2020involutive} where
classical algorithms are revisited in turn following this perspective,
but no connection to earlier literature was made; we also note \citet{cusumanotowner2020automating}
with earlier claims and the interesting very recent contribution by
\citet{glattholtz2020acceptreject}. \citet{thin2020metflow} exploit
this type of representation of the MH kernel to design normalising
flows and \citet{thin:2020b} establish necessary conditions mirroring
\citet{tierney1998} in the skew detailed balance scenario, but also
general conditions ensuring aperiodicity and periodicity. 

\subsection{Notation and definitions}
\begin{itemize}
\item All real-valued functions we consider are Borel measurable.
\item If $\mu$ is a measure on $(E,\mathscr{E})$ and $f:\mathsf{E}\to\mathbb{R}$
is a $\mu$-integrable function then we denote the integral $\mu(f):=\int_{E}f(x)\mu({\rm d}x)$.
\item $\min\{a,b\}=a\wedge b$, $\max\{a,b\}=a\vee b$.
\item $f\cdot g$ is pointwise product $f\cdot g=x\mapsto f(x)g(x)$, $f/g=x\mapsto f(x)/g(x)$.
\item For a set $A\subset E,$ the function $\mathbf{1}_{A}$ is the indicator
function of set $A$, i.e. 
\[
\mathbf{1}_{A}(x)=\begin{cases}
1 & \text{if }x\in A,\\
0 & \text{otherwise}.
\end{cases}
\]
We also use the notation $\mathbb{I}\{x\in A\}:=\mathbf{1}_{A}(x)$
when the definition of $A$ is explicit and long.
\item $\mathbf{1}$ used to denote the constant function $x\mapsto\mathbf{1}$,
usage is clear from context.
\item For a given $x$, $\delta_{x}$ is the Dirac measure at $x$: $\delta_{x}(A)={\bf 1}_{A}(x)$.
\item If $(E,\mathscr{E})$ and $(F,\mathscr{F})$ are measurable spaces,
the product measurable space is $(E\times F,\mathscr{E}\otimes\mathscr{F})$
where $\mathscr{E}\otimes\mathscr{F}$ is the product $\sigma$-algebra
$\sigma(\{A\times B:A\in\mathscr{E},B\in\mathscr{F}\})$. If $\mu$
is a measure on $(E,\mathscr{E})$ and $\nu$ a measure on $(F,\mathscr{F})$
then their product measure on $(E\times F,\mathscr{E}\otimes\mathscr{F})$
is $\mu\otimes\nu$ where $(\mu\otimes\nu)(A,B)=\mu(A)\nu(B)$ and
define recursively $\mu^{\otimes n}=\mu^{\otimes(n-1)}\otimes\mu$
for $n\in\mathbb{N}_{*}$.
\item If $\mu$ is a measure on $(E,\mathscr{E})$ then the restriction
of $\mu$ to $C\in\mathscr{E}$ is a measure $\mu_{C}$ on $(E,\mathscr{E})$
satisfying $\mu_{C}(A):=\mu(A\cap C)$ for any $A\in\mathscr{E}$.
\item If $\mu({\rm d}x,{\rm d}y)$ is a probability measure, we write $\mu_{x}$
to refer to a conditional probability measure for $Y$ given $X=x$.
(Polish space)
\item A cycle of two Markov kernels $P:E\times\mathscr{E}\to[0,1]$ and
$Q:E\times\mathscr{E}\to[0,1]$ is the Markov kernel
\[
PQ(x,A)=\int P(x,{\rm d}y)Q(y,A),\qquad x\in E,A\in\mathscr{E}.
\]
\item We adopt the standard conventions for products and sums that for $b<a$
, $\prod_{i=a}^{b}\cdot=1$ and $\sum_{i=a}^{b}\cdot=0$ whatever
the nature of the argument.
\item For $x\in\mathbb{R}$, ${\rm sgn}(x)\in\{-1,0,1\}$ is the sign of
$x$.
\item We define $\mathbb{N}=\{0,1,\ldots,\}$ and $\mathbb{N}_{*}=\{1,2,\ldots\}$.
\item We define $\left\llbracket i,j\right\rrbracket =\{i,i+1,\ldots,j\}$
for integers $i\leq j$, and $\left\llbracket i\right\rrbracket =\left\llbracket 1,i\right\rrbracket $
for $i\in\mathbb{N}_{*}$.
\end{itemize}

\section{Motivating example \label{sec:Motivating-example}}

Assume one is interested in sampling from a probability distribution
$\pi$, defined on some probability space $(\mathsf{Z},\mathscr{Z})$.
A Markov chain Monte Carlo (MCMC) algorithm consists of simulating
a realization $\{Z_{i};i\geq0\}$ of a Markov chain such that 
\[
\mathbb{P}(Z_{n}\in A)\rightarrow\pi(A),\qquad A\in\mathscr{Z},
\]
as $n\rightarrow\infty$ and/or for functions $f\in L_{1}(\mathsf{Z},\pi)$,
\[
\lim_{n\rightarrow\infty}\frac{1}{n}\sum_{i=1}^{n}f(Z_{i})=\pi(f),
\]
One of the fundamental properties required to ensure the above is
that, with $P$ denoting the transition probability of the Markov
chain, $\pi$ is left invariant by $P$. That is, the ``global balance''
condition holds:
\begin{equation}
\int\pi({\rm d}z)P(z,A)=\pi(A),\qquad z\in\mathsf{Z},A\in\mathscr{Z}.\label{eq:global-balance}
\end{equation}
It is very difficult to verify (\ref{eq:global-balance}) directly,
complicating the design of Markov kernels satisfying this property.
A successful approach often consists instead of verifying the stronger,
local property of ``detailed balance'' or $\pi-$reversibility.
\begin{defn}[Reversible Markov kernel]
For a finite measure $\mu$ on $(E,\mathscr{E})$, a Markov kernel
$P:E\times\mathscr{E}\to[0,1]$ is $\mu$-reversible if the measures
$\mu({\rm d}\xi)P(\xi,{\rm d}\xi')$ and $\mu({\rm d}\xi')P(\xi',{\rm d}\xi)$
are identical. That is, if,
\[
\int_{A}\mu({\rm d}\xi)P(\xi,B)=\int_{B}\mu({\rm d}\xi)P(\xi,A),\qquad A,B\in\mathscr{E}.
\]
\end{defn}
It is straightforward to deduce that (\ref{eq:global-balance}) holds
if $P$ is $\pi$-reversible by taking $A=E$ in the definition.
\begin{rem}
The definition of $\mu$-reversibility is equivalent to: for all measurable
$F,G:E\to[0,1]$,
\begin{equation}
\int F(\xi)G(\xi')\mu({\rm d}\xi)P(\xi,{\rm d}\xi')=\int G(\xi)F(\xi')\mu({\rm d}\xi)P(\xi,{\rm d}\xi').\label{eq:rev-equiv}
\end{equation}
In particular, we recover the definition with $F={\bf 1}_{A}$ and
$G={\bf 1}_{B}$, and for the other direction, we use the identity
$\mu({\rm d}\xi)P(\xi,{\rm d}\xi')=\mu({\rm d}\xi')P(\xi',{\rm d}\xi)$.
\end{rem}
Metropolis--Hastings (MH) kernels are a flexible class of $\pi-$reversible
Markov kernels for which simulation of the corresponding Markov chain
can often be implemented on a computer. A textbook derivation is as
follows. Assume that $\mathsf{Z}=\mathbb{R}^{d}$ and let $\{Q(z,\cdot),z\in\mathsf{Z}\}$
be a family of probability distributions on $(\mathsf{Z},\mathscr{Z})$
from which it is easy to sample. Assume for presentational simplicity
that for any $z\in\mathsf{Z}$, $\pi$ and $Q(z,\cdot)$ have strictly
positive densities with respect to the Lebesgue measure, denoted $\varpi$
and $q(z,\cdot)$. The MH kernel defined by $\pi$ and $Q$ is given
by
\[
P(z,{\rm d}z')=\alpha(z,z')q(z,z'){\rm d}z'+s(z)\delta_{z}({\rm d}z'),
\]
where $\alpha_{{\rm }}(z,z')=1\wedge r(z,z')$, $s(z)=1-\int\alpha(z,z')q(z,z'){\rm d}z'$
and 
\[
r(z,z')=\frac{\varpi(z')q(z',z)}{\varpi(z)q(z,z')}.
\]
Letting $\rho(z,z'):=\varpi(z)q(z,z')$, verifying $\pi-$reversibility
can be reduced to checking that for $f,g\colon\mathsf{Z}\rightarrow[0,1]$
\begin{equation}
\int f(z)g(z')\rho(z,z')\alpha(z,z'){\rm d}z{\rm d}z'=\int g(z)f(z')\rho(z,z')\alpha(z,z'){\rm d}z{\rm d}z',\label{eq:reduced-detailed-balance-MH}
\end{equation}
since 
\[
\int f(z)g(z')\varpi(z)s(z)\delta_{z}({\rm d}z'){\rm d}z=\int f(z)g(z)\varpi(z)s(z){\rm d}z=\int g(z)f(z')\varpi(z)s(z)\delta_{z}({\rm d}z'){\rm d}z.
\]
It is a standard exercise to show that $\rho(z,z')\alpha(z,z')=\rho(z',z)\alpha(z',z)$
and conclude that (\ref{eq:reduced-detailed-balance-MH}) holds. We
outline now a less direct way, which however has the benefit of highlighting
important generic properties required. 

Define $\xi:=(z,z')$, ${\rm d}\xi={\rm d}z{\rm d}z'$, $\phi(z,z')=(z',z)$
and $F_{0}(z,z')=f(z)$ and $G_{0}(z,z')=g(z)$ then (\ref{eq:reduced-detailed-balance-MH})
can be re-expressed as
\begin{equation}
\int F_{0}(\xi)G_{0}\circ\phi(\xi)\rho(\xi)\alpha(\xi){\rm d}\xi=\int F_{0}\circ\phi(\xi)G_{0}(\xi)\rho(\xi)\alpha(\xi){\rm d}\xi.\label{eq:DB-with-F0-G0}
\end{equation}
Further notice that the acceptance ratio is of the form $r(\xi)=\rho\circ\phi/\rho\:(\xi)$
and that, using that $\phi\circ\phi={\rm Id}$,
\[
r\circ\phi(\xi)=\frac{\rho\circ\phi}{\rho}\circ\phi(\xi)=\frac{\rho}{\rho\circ\phi}(\xi)=\frac{1}{r}(\xi),
\]
therefore implying
\begin{equation}
r(\xi)\alpha\circ\phi(\xi)=r(\xi)\left[1\wedge r\circ\phi(\xi)\right]=\alpha(\xi).\label{eq:property-accept-ratio-phi}
\end{equation}

We now show that (\ref{eq:DB-with-F0-G0}) holds for any measurable
$F,G:\mathsf{Z}^{2}\to[0,1]$ 
\begin{align*}
\int F(\xi)G\circ\phi(\xi)\rho(\xi)\alpha(\xi){\rm d}\xi & =\int F(\xi)G\circ\phi(\xi)\rho(\xi)r(\xi)\alpha\circ\phi(\xi){\rm d}\xi\\
 & =\int F(\xi)G\circ\phi(\xi)\rho\circ\phi(\xi)\alpha\circ\phi(\xi){\rm d}\xi\\
 & =\int F\circ\phi(\xi')G(\xi')\rho(\xi')\alpha(\xi'){\rm d}\xi',
\end{align*}
where we have used $\alpha(\xi)=r(\xi)\alpha\circ\phi(\xi)$, $r(\xi)=\rho\circ\phi/\rho(\xi)$
the change of variable $\xi'=\phi(\xi)$ and the fact that $\phi$
is an involution with Jacobian $\left|{\rm det}\phi'(\xi)\right|=1$
(see Theorem~\ref{thm:cov-leb}). This therefore implies (\ref{eq:reduced-detailed-balance-MH})
and in turn that $P$ is $\pi-$reversible. In fact, letting $\mu({\rm d}\xi):=\rho(\xi){\rm d}\xi$,
we notice that this establishes $\mu-$reversibility of an MH kernel
targeting the extended probability distribution $\mu$.

This presentation has the advantage of highlighting a set of generic
properties sufficient to establish $\pi-$reversibility:
\begin{enumerate}
\item the distribution $\pi$ is a marginal of a probability distribution
$\mu$,
\item the proposed state is obtained by applying an involution $\phi$ to
$\xi$,
\item it holds that $\alpha(\xi)\mu({\rm d}\xi)=\alpha\circ\phi(\xi)\mu^{\phi}({\rm d}\xi)$
with $\mu^{\phi}$ the probability distribution of $\xi'=\phi^{-1}(\xi)=\phi(\xi)$,
\end{enumerate}
suggesting that more general choices of $\mu,\phi$ and $\alpha$
can also define $\pi-$reversible Markov kernels. It can be shown
(Theorem~\ref{sec:Beyond-reversibility-and}) that the first two
properties automatically imply the mathematical existence of $\alpha$
such that the third property holds, highlighting the fundamental rôle
played by the involutory nature of $\phi$. Practical implementation
of the algorithm requires two additional properties of $\mu$: the
existence of a tractable probability density to compute $\alpha$
and ease of sampling from the conditional distribution in $\mu({\rm d}\xi)=\pi({\rm d}\xi_{0})\mu_{\xi_{0}}({\rm d}\xi_{-0})$.

The clear benefit of this approach is that establishing correctness
becomes a purely mechanical, or ``algebraic'', exercise, therefore
improving clarity of arguments and facilitating communication.

\section{General scenario}

In order to gain generality and clarify we will appeal to a very small
number of standard measure theoretical notions and results related
to change of variables and Radon--Nykodim derivatives. Although it
is always a good idea to check the proof of classical results, there
is no need to do so in order to understand the content of this manuscript.
\begin{defn}[Pushforward]
\label{def:pushforward}Let $\mu$ be a measure on $(E,\mathscr{E})$
and $\varphi:(E,\mathscr{E})\to(F,\mathscr{F})$ a measurable function.
The pushforward of $\mu$ by $\varphi$ is defined by 
\[
\mu^{\varphi}(A)=\mu(\varphi^{-1}(A)),\qquad A\in\mathscr{F},
\]
where $\varphi^{-1}(A)=\{x\in E:\varphi(x)\in A\}$ is the preimage
of $A$ under $\varphi$.
\end{defn}
For example, if $\mu$ is a probability distribution then $\mu^{\varphi}$
is the probability measure associated with $\varphi(X)$ when $X\sim\mu$.
\begin{defn}[Dominating and equivalent measures]
 For two measures $\mu$ and $\nu$ on the same measurable space
$(E,\mathscr{E})$, 
\begin{enumerate}
\item $\mu$ is said to dominate $\nu$ if for all measurable $A\in\mathscr{E}$,
$\nu(A)>0\Rightarrow\mu(A)>0$ -- this is denoted $\mu\gg\nu$. 
\item $\mu$ and $\nu$ are equivalent, written $\mu\equiv\nu$, if $\mu\gg\nu$
and $\nu\gg\mu$.
\end{enumerate}
\end{defn}
We will need the notion of Radon-Nikodym derivative:
\begin{thm}[Radon--Nikodym]
\label{thm:Radon-Nikodym}Let $\mu$ and $\nu$ be $\sigma$-finite
measures on $(E,\mathscr{E})$. Then $\nu\ll\mu$ if and only if there
exists an essentially unique, measurable, non-negative function $f$
such that
\[
\int_{A}f(\xi)\mu({\rm d}\xi)=\nu(A),\qquad A\in\mathscr{E}.
\]
Therefore we can view ${\rm d}\nu/{\rm d}\mu:=f$ as the density of
$\nu$ w.r.t $\mu$ and in particular if $g$ is integrable w.r.t.
$\nu$ then
\[
\int g(\xi)\frac{{\rm d}\nu}{{\rm d}\mu}(\xi)\mu({\rm d}\xi)=\int g(\xi)\nu({\rm d}\xi).
\]
\end{thm}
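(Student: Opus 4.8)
The plan is to dispose of the routine parts first and then concentrate effort on the existence of $f$. The sufficiency direction is immediate: if $f\ge0$ is measurable with $\int_{A}f\,{\rm d}\mu=\nu(A)$ for all $A\in\mathscr{E}$, then $\mu(A)=0$ forces $\int_{A}f\,{\rm d}\mu=0$ and hence $\nu(A)=0$, i.e.\ $\nu\ll\mu$. Essential uniqueness is equally standard: if $f_{1},f_{2}$ both represent $\nu$, then by $\sigma$-finiteness I may test against $A=\{f_{1}>f_{2}\}\cap E_{n}$ for an exhausting sequence $E_{n}\uparrow E$ with $\mu(E_{n})<\infty$, and $\int_{A}(f_{1}-f_{2})\,{\rm d}\mu=0$ with a strictly positive integrand on $A$ forces $\mu(A)=0$; the symmetric choice handles $\{f_{1}<f_{2}\}$, giving $f_{1}=f_{2}$ $\mu$-a.e. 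I would also reduce the existence claim to the case of finite measures: partition $E$ into countably many disjoint measurable blocks on each of which both $\mu$ and $\nu$ are finite (a common refinement of the two exhausting sequences), solve the problem on each block, and glue the resulting densities, the sum being measurable and representing $\nu$ by countable additivity.

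For the existence of $f$ when $\mu,\nu$ are finite I would use von Neumann's Hilbert-space argument. Set $\lambda:=\mu+\nu$, a finite measure, and consider the linear functional $T(g):=\int g\,{\rm d}\nu$ on $L^{2}(\lambda)$. Cauchy--Schwarz gives $|T(g)|\le\int|g|\,{\rm d}\nu\le\int|g|\,{\rm d}\lambda\le\lambda(E)^{1/2}\|g\|_{L^{2}(\lambda)}$, so $T$ is bounded, and the Riesz representation theorem (the one standard external tool I invoke) yields $h\in L^{2}(\lambda)$ with $\int g\,{\rm d}\nu=\int gh\,{\rm d}\lambda$ for all $g\in L^{2}(\lambda)$; equivalently
\[
\int g(1-h)\,{\rm d}\nu=\int gh\,{\rm d}\mu,\qquad g\in L^{2}(\lambda).
\]
Testing with $g=\mathbf{1}_{\{h<0\}}$ and then $g=\mathbf{1}_{\{h\ge1\}}$, and using $\mu,\nu\ge0$, pins both sides to zero and shows $0\le h<1$ holds $\lambda$-a.e.; the delicate point is the set $\{h=1\}$, where the identity forces $\mu(\{h=1\})=0$ and hence, invoking $\nu\ll\mu$ for the one and only time, $\nu(\{h=1\})=0$. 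I then set $f:=h/(1-h)\ge0$ and recover the defining identity by taking $g=\mathbf{1}_{A}(1-h)^{-1}\mathbf{1}_{\{h\le1-1/n\}}\in L^{2}(\lambda)$ and letting $n\to\infty$, monotone convergence turning the displayed identity into $\nu(A)=\int_{A}f\,{\rm d}\mu$.

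Finally, the concluding identity $\int g\,({\rm d}\nu/{\rm d}\mu)\,{\rm d}\mu=\int g\,{\rm d}\nu$ follows from the defining property by the usual bootstrap: it holds for indicators by definition of $f$, extends to nonnegative simple functions by linearity, to arbitrary nonnegative measurable $g$ by monotone convergence along an increasing sequence of simple functions, and to $\nu$-integrable $g$ by splitting $g=g^{+}-g^{-}$. The hard part is the existence of $h$ in the middle step; everything else is bookkeeping. I expect the main obstacle to be the careful verification that $0\le h<1$ almost everywhere --- in particular the use of $\nu\ll\mu$ precisely to annihilate $\{h=1\}$ --- since this is the single place where the absolute-continuity hypothesis is actually consumed, and it is exactly what guarantees that $f=h/(1-h)$ is finite $\mu$-a.e.
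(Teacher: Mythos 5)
Your proposal proves a statement that the paper itself never proves: Theorem~\ref{thm:Radon-Nikodym} is stated as background and immediately discharged by a citation to Billingsley's Theorems 32.2 and 16.11, so there is no in-paper argument to compare against. What you supply is the classical von Neumann proof, and it is essentially correct: the reduction to finite measures by a common refinement of the two exhausting sequences, the boundedness of $T(g)=\int g\,{\rm d}\nu$ on $L^{2}(\lambda)$ with $\lambda=\mu+\nu$, the Riesz representation giving $h$, the sign analysis pinning $0\le h$ and $\mu(\{h\ge1\})=0$, and the truncation $g=\mathbf{1}_{A}(1-h)^{-1}\mathbf{1}_{\{h\le1-1/n\}}$ followed by monotone convergence all go through as written. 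Your diagnosis of where $\nu\ll\mu$ is consumed is exactly right and is the crux of the argument: on $\{h=1\}$ the identity $\int g(1-h)\,{\rm d}\nu=\int gh\,{\rm d}\mu$ is vacuous on the $\nu$ side (the integrand $1-h$ vanishes there), so only absolute continuity, applied to the $\mu$-null set $\{h=1\}$, rules out mass of $\nu$ sitting where $f=h/(1-h)$ would blow up. What the paper's choice buys is brevity and a precise pointer into a standard reference; what yours buys is self-containment and an explicit display of the single point at which the hypothesis $\nu\ll\mu$ does any work, which is pedagogically valuable in a paper whose theme is making measure-theoretic verifications mechanical.

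One small repair is needed in your uniqueness step. With $A=\{f_{1}>f_{2}\}\cap E_{n}$ and only $\mu(E_{n})<\infty$ assumed, you cannot pass from $\int_{A}f_{1}\,{\rm d}\mu=\int_{A}f_{2}\,{\rm d}\mu$ to $\int_{A}(f_{1}-f_{2})\,{\rm d}\mu=0$, since both integrals may be infinite (they equal $\nu(A)$, which need not be finite). Use the same common refinement you invoke for existence, so that $\nu(E_{n})<\infty$ as well, or intersect additionally with $\{f_{2}\le m\}$ and let $m\to\infty$; either device makes the subtraction legitimate, and the rest of your argument is unchanged.
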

This is covered by \citet[Theorems 32.2 \& 16.11]{billingsley1995probability}.

If $\mu$ is a measure and $f$ a non-negative, measurable function
then $\mu\cdot f$ is the measure $(\mu\cdot f)(A)=\int{\bf 1}_{A}(x)f(x)\mu({\rm d}x)$,
i.e. the measure $\nu=\mu\cdot f$ such that the Radon--Nikodym derivative
of ${\rm d}\nu/{\rm d}\mu=f$.
\begin{thm}[Change of variables]
\label{thm:change-of-variables}A function $f:F\to\mathbb{R}$ is
integrable w.r.t. $\mu^{\varphi}$ if and only if $f\circ\varphi$
is integrable w.r.t. $\mu$, in which case
\begin{equation}
\int_{F}f(\xi)\mu^{\varphi}({\rm d}\xi)=\int_{E}f\circ\varphi(\xi)\mu({\rm d}\xi).\label{eq:change-of-variable}
\end{equation}
\end{thm}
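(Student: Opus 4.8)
The plan is to prove the identity by the ``standard machine'': establish (\ref{eq:change-of-variable}) first for indicator functions, then extend it successively to nonnegative simple functions, to nonnegative measurable functions, and finally to general integrable functions. Measurability of $f\circ\varphi$ is never an issue, since $\varphi$ is measurable by assumption, so $\varphi^{-1}(A)\in\mathscr{E}$ for every $A\in\mathscr{F}$ and compositions of measurable maps are measurable.

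First I would verify the base case $f=\mathbf{1}_{A}$ for $A\in\mathscr{F}$. Here $f\circ\varphi=\mathbf{1}_{A}\circ\varphi=\mathbf{1}_{\varphi^{-1}(A)}$, because $\mathbf{1}_{A}(\varphi(\xi))=1$ exactly when $\varphi(\xi)\in A$, i.e. when $\xi\in\varphi^{-1}(A)$. The left-hand side of (\ref{eq:change-of-variable}) is then $\mu^{\varphi}(A)$ and the right-hand side is $\mu(\varphi^{-1}(A))$, and these agree by Definition~\ref{def:pushforward}. By linearity of the integral, the identity extends immediately to any nonnegative simple function $f=\sum_{i=1}^{n}c_{i}\mathbf{1}_{A_{i}}$ with $c_{i}\geq0$ and $A_{i}\in\mathscr{F}$, using $f\circ\varphi=\sum_{i=1}^{n}c_{i}\mathbf{1}_{\varphi^{-1}(A_{i})}$.

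Next, for an arbitrary nonnegative measurable $f$ I would take an increasing sequence of nonnegative simple functions $s_{n}\uparrow f$ pointwise, which exists by the usual construction. The key observation is that composition preserves this monotone limit: $s_{n}\circ\varphi\uparrow f\circ\varphi$ pointwise on $E$. Applying the monotone convergence theorem to $s_{n}$ against $\mu^{\varphi}$ and to $s_{n}\circ\varphi$ against $\mu$, together with the simple-function identity already established, yields (\ref{eq:change-of-variable}) for every nonnegative measurable $f$ as an identity in $[0,\infty]$. In particular, applying this to $|f|$ and noting $|f|\circ\varphi=|f\circ\varphi|$ gives $\int_{F}|f|\,\mu^{\varphi}({\rm d}\xi)=\int_{E}|f\circ\varphi|\,\mu({\rm d}\xi)$, so $f$ is $\mu^{\varphi}$-integrable if and only if $f\circ\varphi$ is $\mu$-integrable, which is the equivalence asserted in the statement.

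Finally, when this common value is finite, I would decompose $f=f^{+}-f^{-}$ into positive and negative parts, apply the nonnegative identity to $f^{+}$ and $f^{-}$ separately, and subtract, using $(f^{\pm})\circ\varphi=(f\circ\varphi)^{\pm}$. There is no genuine obstacle here; the result is the prototypical instance of the approximation argument. The only points demanding a moment's care are the measurability of $f\circ\varphi$, dispatched once via measurability of $\varphi$, and the requirement of finiteness before splitting into $f^{+}$ and $f^{-}$, so that the subtraction $\int f^{+}-\int f^{-}$ is well defined on both sides. Everything else is the routine transfer along the three stages of approximation.
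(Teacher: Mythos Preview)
Your argument is correct and is exactly the standard proof via the monotone-class/``standard machine'' approach. The paper does not actually prove this statement at all: it simply cites \citet[Theorem~16.13]{billingsley1995probability}, so there is nothing to compare beyond noting that your write-up supplies what the paper defers to a reference.
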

This can be found in \citet[Theorem 16.13]{billingsley1995probability}.

\subsection{An abstract result}

The following result is central to the design of MH-based MCMC, formalizes
the observations made in Section~\ref{sec:Motivating-example} and
generalizes parts of \citet[Proposition 1 and Theorem 2]{tierney1998},
concerned with the specific involution $\phi(z,z')=(z',z)$ and a
particular form of distribution $\mu$. We do not pursue necessity
conditions here, to keep the presentation brief and focused on practical
consequences: \citet{tierney1998} discusses such issues, while \citet{thin:2020b}
revisits these issues in a particle nonreversible setup (see Section
\ref{sec:Beyond-reversibility-and}). The proof can be found in Appendix
\ref{sec:appendix-some-proofs}. This result mirrors \citet[Proposition 2]{Andrieu2019}.
\begin{thm}
\label{thm:invo-rev} Let $\mu$ be a finite measure on $(E,\mathscr{E})$,
$\phi:E\to E$ an involution. Then 
\begin{enumerate}
\item there exists a set $S=S(\mu,\mu^{\phi})\in\mathscr{E}$ such that
\begin{enumerate}
\item $\phi(S)=S$,
\item with $\mu_{S}(A):=\mu(A\cap S)$ for any $A\in\mathscr{E}$ we have
$\mu_{S}^{\phi}\equiv\mu_{S}$,
\item $\mu$ and $\mu^{\phi}$ are mutually singular on $S^{\complement}$,
i.e. there exist sets $A,B\in\mathscr{E}$ such that $A\cap B=\emptyset$,
$A\cup B=S^{\complement}$ and $\mu(A)=\mu^{\phi}(B)=0$.
\end{enumerate}
\item \label{enu:invo-rev-part2}defining for $\xi\in\Xi$, 
\begin{equation}
r(\xi):=\begin{cases}
{\rm d}\mu_{S}^{\phi}/{\rm d}\mu_{S}(\xi) & \xi\in S,\\
0 & \text{otherwise},
\end{cases}\label{eq:r-generic-1}
\end{equation}
and letting $a:[0,\infty)\to[0,1]$ such that
\[
a(r)=\begin{cases}
0 & r=0\\
ra(1/r) & r>0
\end{cases},
\]
we have that,
\begin{enumerate}
\item \label{enu:invo-rev-alpha-decomp}for $\xi\in\Xi$,
\[
\alpha(\xi):=a\circ r(\xi)=\begin{cases}
r(\xi)\cdot\alpha\circ\phi(\xi) & \xi\in S,\\
0 & \text{otherwise},
\end{cases}
\]
 
\item for any measurable $F,G:E\to[0,1]$,
\[
\int F(\xi)G\circ\phi(\xi)\alpha(\xi)\mu({\rm d}\xi)=\int F\circ\phi(\xi)G(\xi)\alpha(\xi)\mu({\rm d}\xi),
\]
\item the Markov kernel $\Pi$ defined by
\[
\Pi(\xi,\{\phi(\xi)\})=\alpha(\xi)=1-\Pi(\xi,\{\xi\}),
\]
is $\mu$-reversible.
\end{enumerate}
\end{enumerate}
\end{thm}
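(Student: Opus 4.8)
The plan is to reduce the whole statement to a symmetric Lebesgue decomposition of the pair $(\mu,\mu^{\phi})$ that is forced by the involutive structure of $\phi$. The natural reference measure is $\nu:=\mu+\mu^{\phi}$, which is finite since $\mu^{\phi}(E)=\mu(\phi^{-1}(E))=\mu(E)<\infty$, hence $\sigma$-finite, so Theorem~\ref{thm:Radon-Nikodym} applies. The first thing I would record is that $\nu$ is $\phi$-invariant: because $\phi$ is an involution, $(\mu^{\phi})^{\phi}=\mu^{\phi\circ\phi}=\mu$, so $\nu^{\phi}=\mu^{\phi}+\mu=\nu$. Writing $p:={\rm d}\mu/{\rm d}\nu$, the key identity is ${\rm d}\mu^{\phi}/{\rm d}\nu=p\circ\phi$, which I would extract from Theorem~\ref{thm:change-of-variables}: the invariance $\nu^{\phi}=\nu$ gives $\int h\,{\rm d}\nu=\int h\circ\phi\,{\rm d}\nu$ for every test function $h$, and applying this to $h=\mathbf{1}_{A}\cdot(p\circ\phi)$ yields $\int_{A}(p\circ\phi)\,{\rm d}\nu=\int(\mathbf{1}_{A}\circ\phi)\,p\,{\rm d}\nu=\mu^{\phi}(A)$.

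With $p$ and $q:=p\circ\phi$ fixed I would set $S:=\{p>0\}\cap\{q>0\}$. Part~1(a) is then immediate: $\phi^{-1}(S)=\{p\circ\phi>0\}\cap\{q\circ\phi>0\}=\{q>0\}\cap\{p>0\}=S$, and since $\phi=\phi^{-1}$ this gives $\phi(S)=S$. For 1(b), $\phi$-invariance of $S$ gives $\mu_{S}^{\phi}=(\mu^{\phi})_{S}$, so that $\mu_{S}=(p\mathbf{1}_{S})\cdot\nu$ and $\mu_{S}^{\phi}=(q\mathbf{1}_{S})\cdot\nu$ have densities that are simultaneously positive exactly on $S$, whence $\mu_{S}\equiv\mu_{S}^{\phi}$. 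For 1(c) I would take $A:=\{p=0\}$ and $B:=\{p>0\}\cap\{q=0\}$, which partition $S^{\complement}$ and satisfy $\mu(A)=\int_{\{p=0\}}p\,{\rm d}\nu=0$ and $\mu^{\phi}(B)=\int_{B}q\,{\rm d}\nu=0$.

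For Part~2 I would read off from the above that on $S$ the density is $r=q/p=(p\circ\phi)/p$, and that $r\circ\phi=1/r$ on $S$, using $\phi\circ\phi={\rm Id}$ and $\phi(S)=S$. Then 2(a) follows by feeding $r>0$ into the functional equation for $a$: on $S$, $\alpha=a\circ r=r\cdot a(1/r)=r\cdot(a\circ r\circ\phi)=r\cdot\alpha\circ\phi$, while $\alpha=a(0)=0$ off $S$. For the detailed balance identity 2(b), since $\alpha$ vanishes off $S$ both integrals are over $S$; I would substitute $\alpha(\xi)=r(\xi)\,\alpha\circ\phi(\xi)$, use $r\,{\rm d}\mu_{S}={\rm d}\mu_{S}^{\phi}$ from Theorem~\ref{thm:Radon-Nikodym}, and then apply the change of variables $\int h\,{\rm d}\mu_{S}^{\phi}=\int h\circ\phi\,{\rm d}\mu_{S}$ with $h=F\cdot(G\circ\phi)\cdot(\alpha\circ\phi)$, whose composition with $\phi$ is exactly $(F\circ\phi)\cdot G\cdot\alpha$; this is the mechanical analogue of the computation in Section~\ref{sec:Motivating-example}. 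Finally for 2(c), writing $\Pi(\xi,{\rm d}\xi')=\alpha(\xi)\delta_{\phi(\xi)}({\rm d}\xi')+(1-\alpha(\xi))\delta_{\xi}({\rm d}\xi')$ and invoking the equivalent reversibility criterion~\eqref{eq:rev-equiv}, the rejection part contributes $\int FG(1-\alpha)\,{\rm d}\mu$ symmetrically on both sides while the acceptance part is precisely 2(b), so $\mu$-reversibility follows.

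The step I expect to be the main obstacle, and the one deserving the most care, is the measure-theoretic backbone of Part~1: proving the identity ${\rm d}\mu^{\phi}/{\rm d}\nu=p\circ\phi$ and then handling the almost-everywhere ambiguity of Radon--Nikodym derivatives consistently. The clean way to avoid repeated ``a.e.'' caveats is to fix one version of $p$ once and for all, \emph{define} $q:=p\circ\phi$ (which the identity certifies as a legitimate version of ${\rm d}\mu^{\phi}/{\rm d}\nu$), and build $S$, $r$ and the relation $r\circ\phi=1/r$ from these fixed representatives, so that the pointwise manipulations underlying 1(a) and 2(a) are genuinely valid rather than merely $\nu$-almost everywhere.
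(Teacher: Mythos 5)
Your proposal is correct and follows essentially the same route as the paper's own proof: the same symmetrized reference measure $\mu+\mu^{\phi}$ (the paper calls it $\lambda$, you call it $\nu$), the same identity ${\rm d}\mu^{\phi}/{\rm d}\nu=p\circ\phi$ obtained by change of variables, the same set $S=\{p\wedge(p\circ\phi)>0\}$, the same singularity decomposition $A=\{p=0\}$, $B=\{p>0,\,p\circ\phi=0\}$, and the same detailed-balance computation followed by splitting $\Pi$ into acceptance and rejection sub-kernels. Your closing remark about fixing a single version of the density before defining $q$, $S$ and $r$ is a sensible refinement of a point the paper treats implicitly, but it does not change the argument.
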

\begin{rem}
The condition on $a$ is satisfied by $a(r)=1\wedge r$ (corresponding
to the Metropolis--Hastings acceptance probability), and also $a(r)=r/(1+r)$
(Barker's acceptance probability; see Example~\ref{eg:mh-barker-2}),
therefore ensuring the existence of $\Pi$ and $P$.
\end{rem}
In practice one is interested in the component $\xi_{0}$ of $\mu$,
which is distributed according to $\pi$. In fact, the Markov kernel
$\Pi$ in Theorem~\ref{thm:invo-rev} can be used to define a $\pi$-invariant
Markov kernel $P$. The proof can be found in Appendix~\ref{sec:appendix-some-proofs}.
\begin{prop}
\label{cor:rev-coordinate} Let $\pi$ be a probability distribution
on $(\mathsf{Z},\mathscr{Z})$ and let $\mu$ be a probability distribution
on $(E,\mathscr{E})$ such that 
\[
\mu({\rm d}\xi):=\pi({\rm d}\xi_{0})\mu_{\xi_{0}}({\rm d}\xi_{-0}),
\]
where $\mu_{\xi_{0}}$ denotes the conditional distribution of $\xi_{-0}$
given $\xi_{0}$ under $\mu$. Then the Markov kernel 
\[
P(\xi_{0},A):=\int{\bf 1}_{A}(\xi'_{0})\mu_{\xi_{0}}({\rm d}\xi_{-0})\Pi(\xi;{\rm d}\xi'),\qquad A\in\mathscr{Z},
\]
is $\pi$-reversible. 
\end{prop}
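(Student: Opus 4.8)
The plan is to deduce $\pi$-reversibility of $P$ directly from the $\mu$-reversibility of $\Pi$ established in Theorem~\ref{thm:invo-rev}, using the assumed disintegration $\mu({\rm d}\xi)=\pi({\rm d}\xi_{0})\mu_{\xi_{0}}({\rm d}\xi_{-0})$ as the bridge between the two spaces. By the remark following the definition of reversibility it suffices to check the functional characterization, namely that for all measurable $f,g:\mathsf{Z}\to[0,1]$ one has
\[
\int f(\xi_{0})\,g(\xi_{0}')\,\pi({\rm d}\xi_{0})\,P(\xi_{0},{\rm d}\xi_{0}')=\int g(\xi_{0})\,f(\xi_{0}')\,\pi({\rm d}\xi_{0})\,P(\xi_{0},{\rm d}\xi_{0}').
\]

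First I would substitute the definition of $P$ into the left-hand side and merge the outer integration against $\pi({\rm d}\xi_{0})$ with the inner integration against the conditional $\mu_{\xi_{0}}({\rm d}\xi_{-0})$, invoking the disintegration to recombine them as $\mu({\rm d}\xi)$. This turns the left-hand side into $\int f(\xi_{0})\,g(\xi_{0}')\,\mu({\rm d}\xi)\,\Pi(\xi,{\rm d}\xi')$, an integral living entirely on $E$ against $\mu$ and $\Pi$. The key step is then to lift the test functions: writing $F(\xi):=f(\xi_{0})$ and $G(\xi):=g(\xi_{0})$ for the functions on $E$ depending only on the first coordinate, one has $g(\xi_{0}')=G(\xi')$, so the integrand is exactly $F(\xi)\,G(\xi')$ and the left-hand side reads $\int F(\xi)\,G(\xi')\,\mu({\rm d}\xi)\,\Pi(\xi,{\rm d}\xi')$. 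Applying the $\mu$-reversibility of $\Pi$ in its functional form~\eqref{eq:rev-equiv} to the pair $(F,G)$ swaps their roles, yielding $\int G(\xi)\,F(\xi')\,\mu({\rm d}\xi)\,\Pi(\xi,{\rm d}\xi')$; reversing the disintegration step then reproduces the right-hand side above, and the proof concludes.

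The argument is essentially bookkeeping, and the only points requiring care are measure-theoretic. One must check that the regular conditional distributions $\mu_{\xi_{0}}$ exist and recombine with $\pi$ to yield $\mu$ --- guaranteed here by the Polish-space convention recorded in the notation --- that the coordinate projection is measurable so that $F$ and $G$ are admissible $[0,1]$-valued test functions on $E$, and that the iterated integrals may be reordered by Tonelli, all integrands being non-negative and bounded. Since reversibility is tested against $\pi({\rm d}\xi_{0})$, any $\pi$-null ambiguity in the choice of version of $\mu_{\xi_{0}}$ is immaterial. I do not anticipate a genuine obstacle: once the computation has been transported to $E$, the statement is an immediate corollary of the reversibility already in hand, with no new construction or estimate needed.
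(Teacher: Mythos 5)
Your proof is correct and follows exactly the paper's argument: lift $f,g$ to $F(\xi)=f(\xi_0)$, $G(\xi)=g(\xi_0)$ on $E$, use the disintegration $\mu({\rm d}\xi)=\pi({\rm d}\xi_0)\mu_{\xi_0}({\rm d}\xi_{-0})$ to rewrite the bilinear form for $P$ as one for $\Pi$, apply the $\mu$-reversibility of $\Pi$, and unroll. The additional measure-theoretic caveats you note (regular conditional distributions, Tonelli) are sound but do not change the substance.
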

An algorithmic description of $P$ is given in Alg. \ref{alg:RPhi}
highlighting the practical requirement that sampling from $\mu_{\xi_{0}}(\cdot)$
for $\xi_{0}\in\mathsf{Z}$ should be tractable.

\begin{algorithm}
\caption{\label{alg:RPhi}To sample from $P(\xi_{0},\cdot)$}

\begin{enumerate}
\item Given $\xi_{0}$ , sample $\xi_{-0}\sim\mu_{\xi_{0}}$,
\item Compute $\xi'=\phi(\xi)$,
\item With probability $\alpha(\xi)$ return $\xi_{0}'$, otherwise return
$\xi_{0}$.
\end{enumerate}
\end{algorithm}

The implication of these results should be clear. If sampling from
$\pi$ is of interest, any choice of $\mu$ of the form
\begin{equation}
\mu({\rm d}\xi)=\pi({\rm d}\xi_{0})\mu_{\xi_{0}}({\rm d}\xi_{-0}),\label{eq:mu-to-sample-pi}
\end{equation}
together with an involution $\phi$ and an acceptance function $a$
defines a $\pi-$reversible Markov kernel/chain. It turns out that
all MH-type kernels we are aware of, including advanced and complex
implementations, can be described and immediately justified using
this framework. 
\begin{rem}
\label{rem:non-unique}The framework specified is very flexible: to
define a $\pi$-reversible Markov kernel $P$, whose simulation is
described in Algorithm~\ref{alg:RPhi}, it is sufficient to define
a triple $(\mu,\phi,a)$ such that $\pi$ is the $\xi_{0}$-marginal
of $\mu$. This is analogous to the definition of a traditional Metropolis--Hastings
kernel via the choice $(\pi,Q)$ in Section~\ref{sec:Motivating-example}.
Importantly the nature of $\xi_{-0}$ is \emph{a priori} arbitrary
and does not have to coincide with that of $\xi_{0}$, therefore providing
great freedom. In general, the association is not unique: there are
several $(\mu,\phi,a)$ triples corresponding to the same Markov kernel
$P$. In the sequel we will focus primarily on the the measure-involution
pair $(\mu,\phi)$, since the choice of $a$ can be taken independently
of the choice of $(\mu,\phi)$ from a theoretical perspective.
\end{rem}
In the sequel we will consider Markov kernels $\Pi$ as in Theorem~\ref{thm:invo-rev},
or derivatives such as $P$ in Proposition~\ref{cor:rev-coordinate}
as Metropolis--Hastings type kernels. 
\begin{rem}
\label{rmk:use-maire-douc-olsson}In the context of Proposition~\ref{cor:rev-coordinate}
it is natural to ask whether theoretical properties, such as optimality
in terms of optimal variance of $\Pi$ translate into optimality for
$P$. The answer is yes and follows by application of the results
of \citet{maire2014comparison}, later extended in \citet{Andrieu2019}
to the nonreversible scenario treated in Section \ref{sec:Beyond-reversibility-and}.
\end{rem}
We now provide examples of commonly used Markov kernels, which can
be recognized by the particular form of $\mu$ and $\phi$, and for
which expressions of the corresponding acceptance ratios is left to
Section~\ref{subsec:Densities-and-the}. This highlights the fact
that the acceptance ratio is a function depending only on $\mu$ and
$\phi$.
\begin{example}
\label{ex:MH-textbook}The textbook presentation of the MH kernel
considered in the introduction corresponds to the choice of a family
of conditional probability distributions $\{\mu_{z}(\cdot)=Q(z,\cdot),z\in\mathsf{Z}\}$
on $(\mathsf{Z},\mathscr{Z})$, $\xi=(z,z')\in E=\mathsf{Z}\times\mathsf{Z}$,
$\phi(z,z')=(z',z)$ and $a=r\mapsto1\wedge r$.
\end{example}
\begin{example}
\label{eg:rwm-1}The Random Walk Metropolis (RWM) can be thought of
as corresponding to the choice $\{\mu_{z}(\cdot)=\kappa(\cdot),z\in\mathsf{Z}\}$
for some probability distribution $\kappa$ on $(\mathsf{Z},\mathscr{Z})$,
$\xi=(z,v)$ and $\phi(\xi)=(z+v,-v)$. Alternatively, one may express
the RWM as a special case of Example~\ref{ex:MH-textbook} so that
$\{\mu_{z}(\cdot)=Q(\cdot),z\in\mathsf{Z}\}$ , $\xi=(z,z')$ and
$\phi(z,z')=(z',z)$.
\end{example}
\begin{example}[Metropolis--Hastings, Barker, etc.]
\label{eg:mh-barker-2}Let $\mu$ be as in Example~\ref{ex:MH-textbook},
and let $\xi=(z,z')$ with $\xi_{0}=z$. Then Alg~\ref{alg:RPhi}
corresponds to simulating from the Metropolis--Hastings (resp. Barker)
kernel when $\alpha=a\circ r$, with $a(v)=1\wedge v$ (resp. $a(v)=1/(1+v)$).
This corresponds to the presentation adopted by \citet{tierney1998}
and commonly adapted in the literature.
\end{example}
The requirement that $\phi$ be an involution may appear restrictive,
but in fact for a given invertible function one can define a corresponding
involution by extending the space.
\begin{rem}
\label{rem:invertible-to-involution}Let $\mu$ be a measure admitting
$\pi$ as a marginal and $\phi:E\to E$ be invertible, but not an
involution. Then $(\mu_{0},\phi_{0})$ is a corresponding measure-involution
pair, where $\mu_{0}({\rm d}\xi,{\rm d}v):=\mu({\rm d}\xi)\mathbb{I}\{v\in\{-1,1\}\}/2$
and $\phi_{0}(\xi,v):=(\phi^{v}(\xi),-v)$ on $E_{0}:=E\times\{-1,1\}$.
Since $\mu_{0}$ admits $\mu$ as a marginal, it also admits $\pi$
as a marginal.
\end{rem}
\begin{example}[Ordered overrelaxation \citep{neal1998suppressing}]
A Gibbs sampler can be thought of as a MH update where conditional
distributions of the target distribution $\pi$ on $(\mathsf{Z},\mathscr{Z})$
are used in the proposal mechanism. To fix ideas assume $\mathsf{Z}=\mathsf{X}\times\mathsf{Y}$
where $\mathsf{Y}\subset\mathbb{R}$ and let $\eta$ be one such conditional
distribution on $(\mathsf{Y},\mathscr{Y})$ from which sampling is
tractable. The goal of the method is to develop a numerical implementation
the following remark. Let $(x,y)\in\mathsf{Z}$ and let $F_{\eta}$
be the cumulative distribution function (cdf) corresponding to $\eta$,
where $x$ is implicit. Then $y'=F_{\eta}^{-1}\big(1-F_{\eta}(y)\big)$
is antithetic to $y$--in fact for $u\sim{\rm Uniform}(0,1)$ the
pair $\big(F_{\eta}^{-1}(u),F_{\eta}^{-1}\big(1-u)\big)$ is the lower
bound in the Fréchet class of bivariate distributions of marginals
$\eta$. The numerical approximation of this remark exploits the link
between empirical cdf and order statistics. One can sample multiple
times independently from $\eta$, leading to the probability distribution,
for $n\in\mathbb{N_{*}}$, on $(\mathsf{Z}\times\mathsf{Y}^{n},\mathsf{Z}\otimes\mathsf{Y}^{\otimes n})$
\[
\mu=\pi\otimes\eta^{\otimes n}.
\]
Let $z=(x,y_{0})$ and let $\sigma\colon\llbracket0,n\rrbracket\rightarrow\llbracket0,n\rrbracket$
be the $\xi:=(z,y_{1},\ldots,y_{n})$-dependent permutation such that
\[
y_{\sigma(0)}\leq\cdots\leq y_{\sigma(n)}
\]
and let $r\in\llbracket0,n\rrbracket$ be the integer such that $\sigma(r)=0$
i.e. $y_{0}$ is the $r-$th rank order statistic among $y_{0},y_{1},\ldots,y_{n}$.
Now we consider the following involution, for $r\in\llbracket2,n-1\rrbracket$
\[
\phi(z,y_{1}\ldots,y_{n})=(x,y_{\sigma(n-r)},y_{1},\ldots,y_{\sigma(n-r)-1},y_{0},y_{\sigma(n-r)+1},\ldots,y_{n}),
\]
with straightforward adaptation if $r\in\{0,1,n\}$. It should be
clear, from the exchangeability conditional upon $x\in\mathsf{X}$,
that for $\xi\in S\big(\mu,\mu^{\phi}\big)$, $r(\xi)=1$. One can
naturally replace $\eta$ with a proposal distribution of our choosing,
but the acceptance ratio is then not identically equal to $1$. 
\end{example}
Adopting this point of view makes establishing reversibility routine,
even in complex scenarios. However practical implementation of the
update requires an explicit expression for the acceptance ratio $r$
in (\ref{eq:r-generic-1}), not provided by the results above.
\begin{rem}
\label{rem:lazy}Alg. \ref{alg:RPhi} is conceptually simple, but
in practice it may be expedient to avoid a direct implementation.
What is actually required to simulate from $P(\xi_{0},\cdot)$ is
to sample a ${\rm Bernoulli}(\alpha(\xi))$ random variable, where
$\xi_{-0}\sim\mu_{\xi_{0}}$ and to compute $\phi(\xi)_{0}$. In particular,
it may not be necessary to simulate or store $\xi_{-0}$ in its entirety
to perform these task, e.g. when $\xi_{-0}$ is large or even infinite-dimensional.
Some examples are provided in Section \ref{sec:Beyond-reversibility-and}.

We will primarily focus on Alg.~\ref{alg:RPhi} in the sequel. Hence,
for examples and applications of this framework we will identify an
appropriate $(\mu,\phi)$, hence defining $\Pi$ in Theorem~\ref{thm:invo-rev}
up to the choice of $a$. The corresponding $\pi$-reversible Markov
kernel is then defined by $P$ in Proposition~\ref{cor:rev-coordinate}.
There are, of course, other $\mu$-invariant kernels that can be constructed
using $\Pi$. For example, letting $R$ define the refreshment kernel
\[
R(\xi,{\rm d}\xi')=\delta_{\xi_{0}}({\rm d}\xi_{0}')\mu_{\xi_{0}}({\rm d}\xi_{-0}'),
\]
Alg.~\ref{alg:RPhi} corresponds to tracking the $\xi_{0}$-coordinate
of $R\Pi(\xi_{0},\cdot)$. One could instead define a $\mu$-invariant
kernel as $\gamma R+(1-\gamma)\Pi$ for some $\gamma\in(0,1)$. Even
more generally, one could replace $R$ with another Markov kernel
that only leaves the conditional distribution $\mu_{\xi_{0}}$ invariant.
The cycle $R\Pi$ is then $\mu$-invariant and would sometimes be
referred to as a Metropolis-within-Gibbs (MwG) kernel, although we
note that in this case the corresponding $\xi_{0}$-coordinate of
the $\mu$-invariant Markov chain would in general not be Markov.
More generally we will refer to an algorithm involving a mixture (``random-scan'')
or cycle (``deterministic scan'') of kernels targetting the same
distribution as a MwG, a widely accepted misnomer.
\end{rem}

\subsection{Densities and the acceptance ratio\label{subsec:Densities-and-the}}

In order to compute the acceptance ratio $r$ in Theorem~\ref{thm:invo-rev},
one must identify $S$ and have an expression for ${\rm d}\mu_{S}^{\phi}/{\rm d}\mu_{S}$.
We show below how to phrase these objects in terms of a density $\rho={\rm d}\mu/{\rm d}\lambda$,
where $\lambda$ is an appropriate reference measure. Such a density
is often available \emph{a priori} in practice.
\begin{prop}
\label{prop:r-density}Let $\mu$ be a finite measure on $(E,\mathscr{E})$,
$\phi:E\to E$ an involution, let $\lambda\gg\mu$ be a $\sigma$-finite
measure satisfying $\lambda\equiv\lambda^{\phi}$ and let $\rho={\rm d}\mu/{\rm d}\lambda$.
Then we can take $S=S(\mu,\mu^{\phi})$ to be $S=\{\xi:\rho(\xi)\wedge\rho\circ\phi(\xi)>0\}$
and
\begin{equation}
r(\xi)=\begin{cases}
\frac{\rho\circ\phi}{\rho}(\xi)\frac{{\rm d}\lambda^{\phi}}{{\rm d}\lambda}(\xi) & \xi\in S,\\
0 & \text{otherwise},
\end{cases}\label{eq:r-rho}
\end{equation}
in Theorem~\ref{thm:invo-rev}.
\end{prop}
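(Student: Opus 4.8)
The plan is to reduce everything to a single computation --- the Radon--Nikodym derivative of the pushforward $\mu^{\phi}$ with respect to $\lambda$ --- from which the three properties of $S$ required by Theorem~\ref{thm:invo-rev} and the expression~(\ref{eq:r-rho}) for $r$ will all follow by inspection of densities.

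First I would compute ${\rm d}\mu^{\phi}/{\rm d}\lambda$. For integrable $f$, the change of variables formula (Theorem~\ref{thm:change-of-variables}) gives $\int f\,{\rm d}\mu^{\phi}=\int f\circ\phi\,{\rm d}\mu=\int(f\circ\phi)\cdot\rho\,{\rm d}\lambda$. The key algebraic step, exploiting that $\phi$ is an involution, is to rewrite $(f\circ\phi)\cdot\rho=\big(f\cdot(\rho\circ\phi)\big)\circ\phi$, using $(\rho\circ\phi)\circ\phi=\rho$. Applying the change of variables formula to $\lambda$ with integrand $g=f\cdot(\rho\circ\phi)$ converts $\int g\circ\phi\,{\rm d}\lambda$ into $\int g\,{\rm d}\lambda^{\phi}$, and since $\lambda^{\phi}\ll\lambda$ (from $\lambda\equiv\lambda^{\phi}$) this equals $\int g\cdot({\rm d}\lambda^{\phi}/{\rm d}\lambda)\,{\rm d}\lambda$. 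This yields ${\rm d}\mu^{\phi}/{\rm d}\lambda=(\rho\circ\phi)\cdot({\rm d}\lambda^{\phi}/{\rm d}\lambda)$. I would then record that $J:={\rm d}\lambda^{\phi}/{\rm d}\lambda>0$ holds $\lambda$-a.e.: if $N=\{J=0\}$ then $\lambda^{\phi}(N)=\int_{N}J\,{\rm d}\lambda=0$, and $\lambda\ll\lambda^{\phi}$ forces $\lambda(N)=0$.

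With the two $\lambda$-densities $\rho$ and $(\rho\circ\phi)\cdot J$ in hand, the three properties of $S=\{\rho\wedge(\rho\circ\phi)>0\}$ become routine. The invariance $\phi(S)=S$ follows because the defining condition ``$\rho>0$ and $\rho\circ\phi>0$'' is symmetric under precomposition with $\phi$ (again by $\phi\circ\phi={\rm Id}$), so $\phi^{-1}(S)=S$. For mutual singularity on $S^{\complement}$ I would take $A=\{\xi\in S^{\complement}:\rho(\xi)=0\}$ and $B=\{\xi\in S^{\complement}:\rho(\xi)>0\}$; then $A\cap B=\emptyset$, $A\cup B=S^{\complement}$, while $\mu(A)=\int_{A}\rho\,{\rm d}\lambda=0$ and, since every $\xi\in B$ has $\rho\circ\phi(\xi)=0$, also $\mu^{\phi}(B)=\int_{B}(\rho\circ\phi)\cdot J\,{\rm d}\lambda=0$.

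For the equivalence $\mu_{S}^{\phi}\equiv\mu_{S}$ and the formula for $r$, I would first note that $\phi$-invariance of $S$ gives $\mu_{S}^{\phi}=(\mu^{\phi})_{S}$, because $\phi^{-1}(A)\cap S=\phi^{-1}(A)\cap\phi^{-1}(S)=\phi^{-1}(A\cap S)$. Hence $\mu_{S}$ and $\mu_{S}^{\phi}$ have $\lambda$-densities $\rho\cdot\mathbf{1}_{S}$ and $(\rho\circ\phi)\cdot J\cdot\mathbf{1}_{S}$; both are strictly positive exactly on $S$ up to a $\lambda$-null set (using $J>0$ $\lambda$-a.e.), which gives $\mu_{S}^{\phi}\equiv\mu_{S}$. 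The chain rule for Radon--Nikodym derivatives then yields, on $S$, ${\rm d}\mu_{S}^{\phi}/{\rm d}\mu_{S}=(\rho\circ\phi)\cdot J/\rho$, matching~(\ref{eq:r-rho}), with $r=0$ off $S$ by the convention in~(\ref{eq:r-generic-1}). The one point demanding care throughout is the null-set bookkeeping --- securing $J>0$ $\lambda$-a.e. and that the support identities hold up to $\lambda$-negligible sets --- so that the density comparison and the chain rule are legitimately applied.
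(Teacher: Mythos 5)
Your proof is correct and follows essentially the same route as the paper's: both hinge on the change-of-variables identity expressing the density of $\mu^{\phi}$ as $(\rho\circ\phi)$ times the factor ${\rm d}\lambda^{\phi}/{\rm d}\lambda$, followed by Radon--Nikodym calculus restricted to $S$. The only difference is organizational --- the paper computes ${\rm d}\mu^{\phi}/{\rm d}\lambda^{\phi}=\rho\circ\phi$ and verifies the ratio formula by a direct integral over $A\subseteq S$, whereas you compute ${\rm d}\mu^{\phi}/{\rm d}\lambda$ and invoke the chain rule, and you spell out points the paper leaves implicit or defers (that $\phi(S)=S$, that ${\rm d}\lambda^{\phi}/{\rm d}\lambda>0$ holds $\lambda$-a.e., and that $\mu_{S}^{\phi}=(\mu^{\phi})_{S}$).
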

The proof can be found in Appendix \ref{subsec:app-proofs-measure}.
In many situations $\lambda$ will be the Lebesgue or counting measure,
but can also be a product of these, or an infinite-dimensional probability
measure such as a Gaussian measure \citet{hairer2014spectral} or
the law of a Markov chain (this is treated in Subsection~\ref{subsec:doubly-infinite-general}).
Computing (\ref{eq:r-rho}) involves additionally computing the density
${\rm d}\lambda^{\phi}/{\rm d}\lambda$.
\begin{rem}
If in Proposition~\ref{prop:r-density} $\lambda$ is invariant under
$\phi$, i.e. $\lambda=\lambda^{\phi}$ then $r=\rho\circ\phi/\rho$.
In theory, it is always possible to find a reference measure invariant
under $\phi$, e.g. one could instead of $\lambda$ take $\lambda_{0}:=\lambda+\lambda^{\phi}$
or even $\lambda_{0}:=\mu+\mu^{\phi}$, which underpins the proof
of Theorem~\ref{thm:invo-rev}. However, it may not be straightforward
or natural to compute the density ${\rm d}\mu/{\rm d}\lambda_{0}$,
while there is often a natural choice of $\lambda$ for which ${\rm d}\mu/{\rm d}\lambda$
can be computed. 

A standard scenario is when $\lambda$ is the Lebesgue measure on
$E=\mathbb{R}^{d}$ and $\phi$ is a diffeomorphism, in which case
${\rm d}\lambda^{\phi}/{\rm d}\lambda$ corresponds to the absolute
value of the determinant of the Jacobian, since then for any $\lambda$-integrable
$f$ (see Theorem~\ref{subsec:app-proofs-measure} in Appendix~\ref{sec:Measure-theory-tools})
\[
\int f\circ\phi(\xi)\left|{\rm det}\phi'(\xi)\right|\lambda({\rm d}\xi)=\int f(\xi)\lambda({\rm d}\xi)=\int f\circ\phi(\xi)\lambda^{\phi}({\rm d}\xi),
\]
while for an arbitrary, measurable, non-negative $g:E\to\mathbb{R}$
we can take $f=g\circ\phi^{-1}$ to obtain $g=f\circ\phi$ and hence,
\[
\int g(\xi)\left|{\rm det}\phi'(\xi)\right|\lambda({\rm d}\xi)=\int g(\xi)\lambda^{\phi}({\rm d}\xi).
\]
The example of the introduction corresponds to this scenario, but
where in addition $\phi$ is an involution and the reference measure
is invariant under $\phi$. 
\end{rem}
\begin{rem}
\label{rem:d-lambda-phi-common}There are several ways one can determine
${\rm d}\lambda^{\phi}/{\rm d}\lambda$ in common situations. For
example:
\begin{enumerate}
\item Let $E=\mathbb{R}^{d}$ with $\xi=(z_{1},\ldots,z_{d})$, and $\phi$
be an involution that permutes its input, i.e. $\phi(z_{1},\ldots,z_{d})=(z_{\sigma(1)},\ldots,z_{\sigma(d)})$
for some permutation $\sigma$ of $\{1,\ldots,d\}$. Then since $\phi'(\xi)$
is the corresponding permutation matrix and all permutations have
a determinant in $\{-1,1\}$, we obtain $\left|{\rm det}\phi'(\xi)\right|=1$.
So if $\lambda$ is the Lebesgue measure on $\mathbb{R}^{d}$ then
$\lambda^{\phi}=\lambda$.
\item Let $\mu$ be a measure with countable support $\mathsf{X}$, and
let $\lambda$ be the counting measure on $E=\mathsf{X}\cup\phi(\mathsf{X})=\mathsf{X}\cup\{\phi(x):x\in\mathsf{X}\}$.
Then for an arbitrary, measurable $A\subseteq E$ we have $\lambda^{\phi}(A)=\lambda(\phi^{-1}(A))=|A|=\lambda(A)$
since $\phi$ is an involution. Hence $\lambda=\lambda^{\phi}$ so
${\rm d}\lambda^{\phi}/{\rm d}\lambda=1$.
\end{enumerate}
\end{rem}
In some of our applications, $\mu$ has continuous and discrete components,
and a density with respect to a product of a Lebesgue measure and
a counting measure. When the involution for the discrete component
does not depend on the continuous component, we have the following
result.
\begin{lem}
\label{lemnu:leb-count-combi}Let $\lambda_{X}$ be the Lebesgue measure
on $\mathsf{X}$, $\lambda_{Y}$ the counting measure on $\mathsf{Y}$
and $g:\mathsf{Y}\to\mathsf{Y}$ be an involution with $g(\mathsf{Y})\subseteq\mathsf{Y}$.
Let $f:\mathsf{X}\times\mathsf{Y}\to\mathsf{X}$ be a function such
that 
\[
\phi(x,y)=(f(x,y),g(y)),
\]
is an involution. Then ${\rm d}\lambda^{\phi}/{\rm d}\lambda=\left|{\rm det}f_{y}'(x)\right|$.
\end{lem}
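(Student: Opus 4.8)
The plan is to reduce the claim to a pointwise change-of-variables computation, slice by slice in the discrete coordinate. By the definition of the Radon--Nikodym derivative (Theorem~\ref{thm:Radon-Nikodym}) together with the $\mu\cdot f$ notation introduced after it, establishing ${\rm d}\lambda^{\phi}/{\rm d}\lambda=\left|{\rm det}f_{y}'(x)\right|$ amounts to showing that for every non-negative measurable $h:\mathsf{X}\times\mathsf{Y}\to\mathbb{R}$ one has $\int h(\xi)\lambda^{\phi}({\rm d}\xi)=\int h(\xi)\left|{\rm det}f_{y}'(x)\right|\lambda({\rm d}\xi)$, where $\xi=(x,y)$ and $\lambda=\lambda_{X}\otimes\lambda_{Y}$. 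The change-of-variables theorem (Theorem~\ref{thm:change-of-variables}) rewrites the left-hand side as $\int h\circ\phi(\xi)\lambda({\rm d}\xi)$, so the whole lemma reduces to the identity
\[
\int h\circ\phi(\xi)\lambda({\rm d}\xi)=\int h(\xi)\left|{\rm det}f_{y}'(x)\right|\lambda({\rm d}\xi).
\]

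To prove this I would first exploit the product structure. Since $\lambda_{Y}$ is counting measure, integration against $\lambda$ decomposes, by Tonelli's theorem, as a sum over $y\in\mathsf{Y}$ of Lebesgue integrals over $\mathsf{X}$, giving $\int h\circ\phi\,{\rm d}\lambda=\sum_{y\in\mathsf{Y}}\int_{\mathsf{X}}h\big(f(x,y),g(y)\big)\lambda_{X}({\rm d}x)$. For fixed $y$ write $f_{y}:=f(\cdot,y):\mathsf{X}\to\mathsf{X}$ for the $x$-section. The involution property $\phi\circ\phi={\rm Id}$ unfolds, using $g\circ g={\rm Id}$, into the relation $f\big(f(x,y),g(y)\big)=x$ for all $(x,y)$, i.e. $f_{g(y)}\circ f_{y}={\rm Id}$; applying this with $y$ replaced by $g(y)$ gives the reverse composition, so each $f_{y}$ is a diffeomorphism of $\mathsf{X}$ with inverse $f_{g(y)}$. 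This is the structural heart of the argument.

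With the sections identified as diffeomorphisms, I would apply the classical Lebesgue change of variables on each slice via the substitution $u=f_{y}(x)$, so that $x=f_{g(y)}(u)$ and $\lambda_{X}({\rm d}x)=\left|{\rm det}f_{g(y)}'(u)\right|\lambda_{X}({\rm d}u)$, yielding $\int_{\mathsf{X}}h\big(f_{y}(x),g(y)\big)\lambda_{X}({\rm d}x)=\int_{\mathsf{X}}h\big(u,g(y)\big)\left|{\rm det}f_{g(y)}'(u)\right|\lambda_{X}({\rm d}u)$. Summing over $y$ and reindexing the sum through the bijection $y'=g(y)$ of $\mathsf{Y}$ (available precisely because $g$ is an involution) turns $g(y)$ into $y'$ and $f_{g(y)}'$ into $f_{y'}'$, recovering exactly $\int h(\xi)\left|{\rm det}f_{y}'(x)\right|\lambda({\rm d}\xi)$ and closing the argument.

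The main obstacle I anticipate is bookkeeping rather than conceptual: one must ensure that the Jacobian emerging from the substitution is $\left|{\rm det}f_{g(y)}'\right|$ evaluated at the new variable, and that the subsequent reindexing by $g$ converts it into $\left|{\rm det}f_{y'}'\right|$ rather than its reciprocal or $\left|{\rm det}f_{g(y')}'\right|$; the involution identity is what makes these two steps align. Implicit regularity is also needed for the statement to be meaningful --- $\mathsf{X}$ should be an open subset of $\mathbb{R}^{d}$ and each $f_{y}$ continuously differentiable, so that $f_{y}'$ and the determinant exist and the diffeomorphism change of variables applies; I would state these as standing assumptions consistent with the appearance of $f_{y}'$ in the conclusion.
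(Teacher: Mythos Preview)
Your proposal is correct and follows essentially the same approach as the paper: decompose $\lambda$ into slices over the discrete coordinate, apply the Lebesgue change of variables on each $x$-slice, and use that $g$ is a bijection of $\mathsf{Y}$ to reindex the sum. The only cosmetic difference is the order of the two steps: the paper first reindexes $y\mapsto g(y)$ (using $\lambda_Y^{g}=\lambda_Y$) and then applies Theorem~\ref{thm:cov-leb} in the forward direction to obtain $\left|\det f_{y}'(x)\right|$ immediately, whereas you substitute $u=f_{y}(x)$ first---producing $\left|\det f_{g(y)}'(u)\right|$---and then reindex to convert it to $\left|\det f_{y'}'(u)\right|$; your explicit derivation of $f_{y}^{-1}=f_{g(y)}$ from the involution property is a point the paper leaves implicit.
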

We are now in a position to provide expressions for the acceptance
ratios in Examples~\ref{ex:MH-textbook}--\ref{eg:rwm-1}.
\begin{example}[Metropolis--Hastings acceptance ratio]
\label{eg:mh-acceptance-ratio}Let $\pi$ and $\{Q(z,\cdot),z\in\mathsf{Z}\}$
be probability measures on $(\mathsf{Z},\mathscr{Z})$ such that with
$\nu$ the Lebesgue or counting measure we have $\nu\gg\pi$ and $\nu\gg Q(z,\cdot)$
for each $z\in\mathsf{Z}$. Let $\varpi(z)={\rm d}\pi/{\rm d}\nu(z)$
and $q(z,z')={\rm d}Q(z,\cdot)/{\rm d}\nu(z')$ for all $(z,z')\in\mathsf{Z}^{2}$.
With $\xi=(z,z')$ we let $\mu({\rm d}\xi)=\pi({\rm d}z)Q(z,{\rm d}z')$,
and $\phi(z,z')=(z',z)$. Then with $\lambda^{\phi}=\lambda:=\nu\times\nu$
we obtain $\rho(\xi)=\varpi(z)q(z,z')$ and $\rho\circ\phi(\xi)=\varpi(z')q(z',z)$
and the acceptance ratio is, for $\xi\in S(\mu,\mu^{\phi})=\{\xi:\rho(\xi)\wedge\rho\circ\phi(\xi)>0\}$,
\[
r(\xi)=\frac{\rho\circ\phi(\xi)}{\rho(\xi)}=\frac{\varpi(z')q(z',z)}{\varpi(z)q(z,z')}.
\]
\end{example}
\begin{example}[Random walk Metropolis ratio]
The setup is similar to above but we assume that $\mathsf{Z}=\mathbb{R}^{d}$,
$\nu$ is the Lebesgue measure, $q(z,v)=q(v):={\rm d}Q/{\rm d}\nu(v)$
for $(z,v)\in\mathsf{Z}^{2}$ and $q(v)=q(-v)$ for $v\in\mathsf{Z}$.
Here $\lambda=\nu\times\nu$, $\xi=(z,v)\in\mathsf{Z}^{2}$, $\phi(\xi)=(z+v,-v)$
and $\left|{\rm det}\phi'(\xi)\right|=1$, leading to 
\begin{align*}
r(\xi) & =\frac{\rho\circ\phi(\xi)}{\rho(\xi)}=\frac{\varpi(z+v)q(-v)}{\varpi(z)q(v)}=\frac{\varpi(z+v)}{\varpi(z)}.
\end{align*}
\end{example}
It is possible to consider the setting where $\xi_{0}=\xi$ and $\phi$
is an involution with non-unit Jacobian. Such situations are related,
e.g., to the Monte Carlo Markov kernels based on deterministic transformations
proposed by \citet{dutta2014markov}.
\begin{example}
Assume $\rho={\rm d}\mu/{\rm d\lambda}$ with $\{\xi\in E:\rho(\xi)>0\}=(0,1)$
with $\lambda$ the Lebesgue measure on $\mathbb{R}$ and let $\phi(\xi)=1/(2\xi)$.
One can deduce that $\lambda$ and $\lambda^{\phi}$ are equivalent
with ${\rm d}\lambda^{\phi}/{\rm d}\lambda(\xi)=\left|\phi'(\xi)\right|=1/(2\xi^{2})$.
We obtain $S=S(\mu,\mu^{\phi})=\{\xi\in E:\rho(\xi)\wedge\rho\circ\phi(\xi)>0\}=(0,1)\cap\phi^{-1}(0,1)=[1/2,1)$.
Therefore $r(\xi)=\rho\circ\phi(\xi)/(\rho(\xi)2\xi^{2})$ for $\xi\in S$
and $r(\xi)=0$ otherwise.
\end{example}
\begin{example}
Consider $\pi$ a probability measure on $\mathbb{R}$ dominated by
Lebesgue and $\varphi(x)=x^{3}$, which is invertible but not an involution.
Then following Remark~\ref{rem:invertible-to-involution}, we can
extend the space to $E=\mathbb{R}\times\{-1,1\}$ and define $\xi=(x,k)$,
$\mu({\rm d}\xi)=\pi({\rm d}x)\mathbb{I}(k\in\{-1,1\})/2$, $\lambda$
to be the product of the Lebesgue measure and the counting measure,
and $\phi(\xi)=\phi(x,k)=(\varphi^{k}(x),-k)$. Following Lemma~\ref{lemnu:leb-count-combi},
we obtain
\[
\frac{{\rm d}\lambda^{\phi}}{{\rm d}\lambda}(x,k)=\left|{\rm det}(\varphi^{k})'(x)\right|=\begin{cases}
3x^{2} & k=1,\\
\frac{1}{3}\left|x\right|^{-2/3} & k=-1.
\end{cases}
\]
\end{example}
A slightly more general version of Example~\ref{eg:mh-acceptance-ratio}
above can be used when $Q$ is reversible w.r.t. some measure.
\begin{example}
Let $\pi$ be probability measures on $(\mathsf{Z},\mathscr{Z})$,
$\nu$ be a reference measure such that $\pi\ll\nu$ and assume that
$Q$ is $\nu$-reversible. Then with $\xi=(z,z')$ and $\varpi={\rm d}\pi/{\rm d}\nu$,
\[
\mu({\rm d}\xi)=\pi({\rm d}z)Q(z,{\rm d}z')=\varpi(z)\nu({\rm d}z)Q(z,{\rm d}z'),
\]
that is $\rho(\xi)=\varpi(\xi_{0})$ with $\lambda({\rm d}\xi)=\nu({\rm d}\xi_{0})Q(\xi_{0},{\rm d}\xi_{-0})$
and by assumption $\lambda^{\phi}=\lambda$ for $\phi(z,z')=(z',z)$
for $(z,z')\in\mathsf{Z}^{2}$. Therefore
\[
r(\xi)=\frac{\rho\circ\phi}{\rho}(\xi)=\frac{\varpi(z')}{\varpi(z)},\qquad\xi\in S=\{(z,z')\in\mathsf{Z}^{2}:\varpi(z)\wedge\varpi(z')>0\}.
\]

In many common RWM kernels, $\nu$is the Lebesgue (resp. counting)
measure on a continuous (resp. discrete) state space.
\end{example}
\begin{example}
(Simplified Neal tempering) Let $\pi$ be a multimodal distribution
on $(\mathsf{Z},\mathscr{Z})$. A strategy proposed by \citet{neal1996sampling}
to mitigate the effect of multimodality on consists of using an instrumental
distribution $\tilde{\pi}\equiv\pi$ also defined on $(\mathsf{Z},\mathscr{Z})$,
related to $\pi$ but less multimodal, to improve the rate of moves
between modes of $\pi$. More specifically define 
\[
\mu({\rm d}\xi)=\pi({\rm d}\xi_{0})Q(\xi_{0},{\rm d}\xi_{1})\tilde{Q}(\xi_{1},{\rm d}\xi_{2})Q(\xi_{2},{\rm d}\xi_{3}),
\]
where $\tilde{Q}$ (resp. $Q$) is $\tilde{\pi}-$reversible (resp.
$\pi-$reversible) and consider the involution on $E=\mathsf{Z}^{4}$
such that $\phi(\xi_{0},\xi_{1},\xi_{2},\xi_{3})=(\xi_{3},\xi_{2},\xi_{1},\xi_{0})$.
Using these properties, we obtain
\begin{align*}
\mu^{\phi}({\rm d}\xi) & =\pi({\rm d}\xi_{3})Q(\xi_{3},{\rm d}\xi_{2})\tilde{Q}(\xi_{2},{\rm d}\xi_{1})Q(\xi_{1},{\rm d}\xi_{0})\\
 & =\pi({\rm d}\xi_{2})Q(\xi_{2},{\rm d}\xi_{3})\tilde{Q}(\xi_{2},{\rm d}\xi_{1})Q(\xi_{1},{\rm d}\xi_{0})\\
 & =\frac{{\rm d}\pi}{{\rm d}\tilde{\pi}}(\xi_{2})Q(\xi_{2},{\rm d}\xi_{3})\tilde{\pi}({\rm d}\xi_{2})\tilde{Q}(\xi_{2},{\rm d}\xi_{1})Q(\xi_{1},{\rm d}\xi_{0})\\
 & =\frac{{\rm d}\pi}{{\rm d}\tilde{\pi}}(\xi_{2})Q(\xi_{2},{\rm d}\xi_{3})\tilde{Q}(\xi_{1},{\rm d}\xi_{2})\tilde{\pi}({\rm d}\xi_{1})Q(\xi_{1},{\rm d}\xi_{0})\\
 & =\frac{{\rm d}\pi}{{\rm d}\tilde{\pi}}(\xi_{2})\frac{{\rm d}\tilde{\pi}}{{\rm d}\pi}(\xi_{1})Q(\xi_{2},{\rm d}\xi_{3})\tilde{Q}(\xi_{1},{\rm d}\xi_{2})\pi({\rm d}\xi_{1})Q(\xi_{1},{\rm d}\xi_{0})\\
 & =\frac{{\rm d}\pi}{{\rm d}\tilde{\pi}}(\xi_{2})\frac{{\rm d}\tilde{\pi}}{{\rm d}\pi}(\xi_{1})Q(\xi_{2},{\rm d}\xi_{3})\tilde{Q}(\xi_{1},{\rm d}\xi_{2})Q(\xi_{0},{\rm d}\xi_{1})\pi({\rm d}\xi_{0})\\
 & =\frac{{\rm d}\pi}{{\rm d}\tilde{\pi}}(\xi_{2})\frac{{\rm d}\tilde{\pi}}{{\rm d}\pi}(\xi_{1})\mu({\rm d}\xi).
\end{align*}

It follows that we can take $S=\{\xi\in E\colon{\rm d}\tilde{\pi}/{\rm d}\pi(\xi_{1})\:{\rm d}\pi/{\rm d}\tilde{\pi}({\rm d}\xi_{2})>0\}$,
and we obtain,
\[
r(\xi)=\frac{{\rm d}\tilde{\pi}}{{\rm d}\pi}(\xi_{1})\frac{{\rm d}\pi}{{\rm d}\tilde{\pi}}(\xi_{2}),\qquad\xi\in S.
\]
In this case, we can think of $\lambda=\mu\equiv\mu^{\phi}=\lambda^{\phi}$,
$\rho\equiv1$ and ${\rm d}\lambda^{\phi}/{\rm d}\lambda=r(\xi)$
on $S$.  In practice, computation of the acceptance ratio may be
facilitated by convenient densities for $\tilde{\pi}$ and $\pi$
with respect to a common dominating measure. The above can be viewed
as the justification for the tempered transitions kernel introduced
by \citet{neal1996sampling}, where several instrumental distributions
are used; these ideas are also related to the methodology in \citet{neal2005taking}.
\end{example}
\begin{example}[Penalty method \citet{ceperley1999penalty}]
In this scenario $\mu\big({\rm d}(z,w)\big)=\mu_{0}({\rm d}z)Q_{z}({\rm d}w)$,
$\xi=(z,w)\in\mathsf{Z}\times\mathsf{W}$ with $\mathsf{W}\subset\mathbb{R}_{+}^{*}$,
$\phi(z,w)=\big(\phi_{0}(z),1/w\big)$ for $\phi_{0}\colon\mathsf{Z}\rightarrow\mathsf{Z}$
an involution and $w\cdot Q_{z}({\rm d}w)=Q_{\phi_{0}(z)}^{1/\cdot}({\rm d}w)$
for $w>0$, where for any $f\colon\mathsf{Z}\rightarrow[0,1]$
\[
\int f(w)Q_{z}^{1/\cdot}({\rm d}w):=\int f\big(w^{-1}\big)Q_{z}({\rm d}w),
\]
therefore implying for $\xi\in S$
\[
r(\xi)=\frac{{\rm d}\pi^{\phi_{0}}}{{\rm d}\pi}(z)\frac{Q_{\phi_{0}(z)}^{1/\cdot}({\rm d}w)}{Q_{z}({\rm d}w)}=\frac{{\rm d}\pi^{\phi_{0}}}{{\rm d}\pi}(z)w.
\]
The motivation for this setup is concerned with the situation where
a noisy version of the acceptance ratio ${\rm d}\pi^{\phi_{0}}/{\rm d}\pi(z)$
is available, where the noise is additive in the log-domain, corresponding
to noisy energies in Physics. The condition on $Q_{z}$ is satisfied
by the random variable $W:=\exp(-\sigma_{z}^{2}/2+\sigma_{z}Z)$ with
$Z\sim\mathcal{N}(0,1)$ for $z\mapsto\sigma_{z}=\sigma_{\phi_{0}(z)}$
because
\begin{align*}
\int f(w)Q_{z}({\rm d}w) & =\int f\big[\exp(-\sigma_{z}^{2}/2+\sigma_{z}x)\big]\mathcal{N}({\rm d}x;0,1)\\
 & =\int f\big[\exp\big(\sigma_{z}^{2}/2-\sigma_{z}(x+\sigma_{z})\big)\big]\mathcal{N}({\rm d}x;0,1)\\
 & =\int f\big[\exp(\sigma_{z}^{2}/2-\sigma_{z}x)\big]\mathcal{N}({\rm d}x;\sigma_{z},1)\\
 & =\int f\big[\exp(\sigma_{\phi_{0}(z)}^{2}/2-\sigma_{\phi_{0}(z)}x)\big]\cdot\exp(-\sigma_{\phi_{0}(z)}^{2}/2+\sigma_{\phi_{0}(z)}x)\mathcal{N}({\rm d}x;0,1)\\
 & =\int f\big(w^{-1}\big)w\cdot Q_{\phi_{0}(z)}({\rm d}w)\\
 & =\int f\big(w\big)w^{-1}\cdot Q_{\phi_{0}(z)}^{1/\cdot}({\rm d}w).
\end{align*}
One can also consider, with $z\mapsto\omega_{z}=\omega_{\phi_{0}(z)}>0$,
\[
Q_{z}({\rm d}w)=\frac{\omega_{z}}{1+\omega_{z}}\delta_{\omega_{z}}({\rm d}w)+\frac{1}{1+\omega_{z}}\delta_{1/\omega_{z}}({\rm d}w),
\]
 because
\begin{align*}
\int f(w)Q_{z}({\rm d}w) & =\frac{\omega_{z}}{1+\omega_{z}}f(\omega_{z})+\frac{1}{1+\omega_{z}}f(\omega_{z}^{-1})\\
 & =\omega_{z}\frac{1}{1+\omega_{z}}f(\omega_{z})+\omega_{z}^{-1}\frac{\omega_{z}}{1+\omega_{z}}f(\omega_{z}^{-1})\\
 & =\int f(w)wQ_{\phi_{0}(z)}^{1/\cdot}({\rm d}w).
\end{align*}
\end{example}
\begin{example}[Reversible jump MCMC \citet{green1995reversible}]
Here we are concerned with the situation where $\mathsf{X}$ is a
disjoint union, for example $\mathsf{X}=\bigsqcup_{i\in\mathbb{N}}\{i\}\times\mathsf{X}_{i}$
with, for $i\in\mathbb{N}$, $(\mathsf{X}_{i},\mathscr{X}_{i})$ a
measurable space and $\mathscr{X}$ a sigma algebra associated to
$\mathsf{X}$; see \citet[214K]{fremlin2000measure} for a construction.
Here the probability distribution of interest is $\pi(i,{\rm d}x_{i})$,
that is for $i\in\mathbb{N}$, $\pi(i,\cdot)\colon\mathscr{X}_{i}\mapsto\mathbb{R}_{+}$
is a finite measure and $\sum_{i=1}^{\infty}\pi(i,\mathsf{X}_{i})=1$.
The idea of \citet{green1995reversible} to circumvent the possibly
differing nature of the $\mathsf{X}_{i}$'s is to introduce the following
space and probability embeddings:
\begin{enumerate}
\item $E:=\bigsqcup_{i,j\in\mathbb{N}}\{(i,j)\}\times\mathsf{X}_{i}\times\mathsf{U}_{ij}$
such that for $(i,j)\in\mathbb{N}^{2}$ there exist measurable bijections
$\mathsf{X}_{i}\times\mathsf{U}_{ij}\rightarrow\mathsf{X}_{j}\times\mathsf{U}_{ji}$
for the measurable sapces $\big(\mathsf{X}_{i}\times\mathsf{U}_{ij},\mathscr{X}_{i}\otimes\mathscr{U}_{ij}\big)$
and $\big(\mathsf{X}_{j}\times\mathsf{U}_{ji},\mathscr{X}_{j}\otimes\mathscr{U}_{ji}\big)$; 
\item for $(i,j)\in\mathbb{N}^{2}$ one chooses mappings $\phi_{ij}=\phi_{ji}^{-1}$
and define $\phi\colon E\rightarrow E$ the $\phi(i,j,x_{i},u_{ij}):=\big(j,i,\phi_{ij}(x_{i},u_{ij})\big)$.
\item the probability distribution $\pi$ is embedded in $\mu(i,j,d(x_{i},u_{ij}))=\pi(i,{\rm d}x_{i})\mu_{i}(j,{\rm d}u_{ij}\mid x_{i})$.
\end{enumerate}
This can be viewed as a natural generalization of Remark~\ref{rem:invertible-to-involution}.
\end{example}
\begin{rem}
In light of Example~\ref{eg:mh-barker-2} and its relation to the
framework in \citet{tierney1998}, it is natural to ask whether the
framework considered here is more powerful in terms of its ability
to express and validate Markov kernels. In fact it is not, but is
perhaps more natural to use since one does not introduce additional
auxiliary variables in $\mu$. In particular, for a given choice of
$\mu$ and $\phi$, one can always embed $(\xi,\phi(\xi))$ in the
extended space $E\times E$ with distribution $\tilde{\mu}({\rm d}x,{\rm d}y)=\mu({\rm d}x)\delta_{\phi(x)}({\rm d}y)$,
and use the involution $\tilde{\phi}(x,y)=(y,x)$. The $\tilde{\mu}$-reversibility
then follows from Theorem~\ref{thm:invo-rev}. For an expression
for the acceptance ratio, it is then convenient to consider the $\tilde{\phi}$-invariant
reference measure $\upsilon=\lambda+\lambda^{\tilde{\phi}}$. We obtain
that ${\rm d}\tilde{\mu}/{\rm d}\upsilon(x)=\rho(x)\left\{ 1+\frac{{\rm d}\lambda^{\phi}}{{\rm d}\lambda}(x)\right\} ^{-1}$,
where $\rho={\rm d}\mu/{\rm d}\lambda$. We obtain that for $x$ in
the same $S=S(\mu,\mu^{\phi})$,
\[
r((x,\phi(x)),(\phi(x),x))=\frac{\frac{{\rm d}\tilde{\mu}^{\tilde{\phi}}}{{\rm d}\upsilon}(x)}{\frac{{\rm d}\tilde{\mu}}{{\rm d}\upsilon}(x)}=\frac{\rho\circ\phi}{\rho}(x)\frac{1+\frac{{\rm d}\lambda^{\phi}}{{\rm d}\lambda}(x)}{1+\frac{{\rm d}\lambda}{{\rm d}\lambda^{\phi}}(x)}=\frac{\rho\circ\phi}{\rho}(x)\frac{{\rm d}\lambda^{\phi}}{{\rm d}\lambda}(x),
\]
as in Proposition~\ref{prop:r-density}.
\end{rem}

\section{Beyond reversibility and standard deterministic proposals\label{sec:Beyond-reversibility-and}}

Reversibility plays a central role in the design of MCMC algorithms
but is not necessarily a desirable property. In fact, it has been
shown that nonreversible Markov chains can converge more quickly in
some cases \citep{diaconis2000analysis}, and their ergodic averages
can have smaller asymptotic variance in comparison to a suitable reversible
counterpart \citep{neal2004improving,sun2010improving,chen2013accelerating,andrieu2016random}.
This can be intuitively attributed to the fact that reversible processes
tend to backtrack and/or move in a diffusive way, suggesting slower
exploration of the target distribution in comparison to nonreversible
processes that move in a more systematic way through the state space.

We discuss here a popular class of nonreversible MH type updates which
can be understood as being the cycle of two $\mu-$reversible Markov
kernels. This type of non reversibility is referred to as $(\mu,\mathfrak{S})-$reversibility
in the literature \citep{Andrieu2019} and was first discussed in
\citet{yaglom1949statistical} as a generalisation of deterministic
time-reversible systems. The necessity for some of the conditions
below is discussed in \citet{thin:2020b}.
\begin{prop}
\label{prop:psi-is-sigma-phi}Let $\mu$ be a probability distribution
on $(E,\mathscr{E})$, $\phi,\sigma\colon E\rightarrow E$ be involutions
with $\sigma$ such that $\mu^{\sigma}=\mu$. Let
\begin{enumerate}
\item $\Pi$ be the $\mu-$reversible Markov kernel using $\phi$ and acceptance
function $a(r)=1\wedge r$,
\item $\mathfrak{S}$ be such that for $\xi\in E$, $\mathfrak{S}(\xi,\{\sigma(\xi)\})=1$
(or for $\xi'\in E$, $\mathfrak{S}(\xi,{\rm d}\xi')=\delta_{\sigma(\xi)}({\rm d}\xi')$),
\item $\lambda\gg\mu$ be such that $\lambda\equiv\lambda^{\phi}$ and $\lambda^{\sigma}=\lambda$.
\end{enumerate}
Let $\psi:=\sigma\circ\phi$ and $\Psi$ such that for $\xi\in E$,
$\Psi(\xi,\{\psi(\xi)\})=1$ (or for $\xi'\in E$, $\Psi(\xi,{\rm d}\xi')=\delta_{\psi(\xi)}({\rm d}\xi')$)
then
\begin{enumerate}
\item \label{enu:norev-prop-psi}$\psi^{-1}=\sigma\circ\psi\circ\sigma$,
$\lambda^{\phi}=\lambda^{\psi^{-1}}$, and $\rho\circ\sigma=\rho$,
\item \label{enu:non-rev-def-kernel}the $\mu-$invariant cycle $\varPi:=\Pi\mathfrak{S}$
is given by
\[
\varPi(\xi,{\rm d}\xi')=a\circ r(\xi)\cdot\Psi\big(\xi,{\rm d}\xi'\big)+[1-a\circ r(\xi)]\mathfrak{S}(\xi,{\rm d}\xi'),
\]
where with $S=\{\xi\in E\colon\rho(\xi)\wedge[\rho\circ\psi(\xi){\rm d}\lambda^{\psi^{-1}}/{\rm d}\lambda(\xi)]>0\}$,
\[
r(\xi)=\begin{cases}
\frac{\rho\circ\psi}{\rho}(\xi)\frac{{\rm d}\lambda^{\psi^{-1}}}{{\rm d}\lambda}(\xi) & \xi\in S,\\
0 & \text{otherwise}.
\end{cases}
\]
\item \label{enu:nonrev-def-muS-reversible}In fact $\varPi$ is $(\mu,\mathfrak{S})-$reversible
(or satisfied the modified or skew detailed balance), that is for
$\xi,\xi'\in E$
\[
\mu({\rm d}\xi)\varPi(\xi,{\rm d}\xi')=\mu({\rm d}\xi')\mathfrak{S}\varPi\mathfrak{S}(\xi',{\rm d}\xi).
\]
\item \label{enu:nonrev-marginal-kernel-piS-rev}Let $\mu({\rm d}\xi):=\pi({\rm d}\xi_{0})\mu_{\xi_{0}}({\rm d}\xi_{-0}),$
where $\mu_{\xi_{0}}$ denotes the conditional distribution of $\xi_{-0}$
given $\xi_{0}$ under $\mu$. Assume $\varPi$ to be $(\mu,\mathfrak{S})-$reversible,
where $\mathfrak{S}(\xi_{0},\xi_{-0})=\big(\mathfrak{S}_{0}(\xi_{0}),\xi_{-0}\big)$
for $\mathfrak{S}_{0}$ and involution. Then the Markov kernel 
\[
\varPi_{0}(\xi_{0},A):=\int{\bf 1}_{A}(\xi'_{0})\mu_{\xi_{0}}({\rm d}\xi_{-0})\varPi(\xi;{\rm d}\xi'),\qquad A\in\mathscr{Z},
\]
is $(\pi,\mathfrak{S}_{0})-$reversible.
\item \label{enu:nonrev-SvarPi}Let $\varPi':=\mathfrak{S}\Pi$, then with
$\psi':=\phi\circ\sigma$ and $\Psi'(\xi,{\rm d}\xi')=\delta_{\psi'(\xi)}({\rm d}\xi')$
then Properties \ref{enu:norev-prop-psi}-\ref{enu:nonrev-marginal-kernel-piS-rev}
hold with $\varPi$, $\psi$, $\Psi$ and $r$ replaced with $\varPi'$,
$\psi'$, $\Psi'$ and $r':=r\circ\sigma$.
\end{enumerate}
\end{prop}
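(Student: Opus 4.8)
The plan is to reduce every item to three ingredients: the $\mu$-reversibility of the base kernel $\Pi$ and its acceptance ratio from Theorem~\ref{thm:invo-rev} and Proposition~\ref{prop:r-density}; the invariances $\mu^{\sigma}=\mu$ and $\lambda^{\sigma}=\lambda$; and the elementary pushforward identity $\lambda^{\varphi_{1}\circ\varphi_{2}}=(\lambda^{\varphi_{2}})^{\varphi_{1}}$. For item (a), the relation $\psi^{-1}=\sigma\circ\psi\circ\sigma$ is pure algebra: from $\psi=\sigma\circ\phi$ and $\sigma\circ\sigma={\rm Id}$ both sides equal $\phi\circ\sigma$. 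Writing $\psi^{-1}=\phi\circ\sigma$ and using the pushforward identity gives $\lambda^{\psi^{-1}}=\lambda^{\phi\circ\sigma}=(\lambda^{\sigma})^{\phi}=\lambda^{\phi}$. The only genuinely measure-theoretic claim is $\rho\circ\sigma=\rho$ ($\lambda$-a.e.): I would show ${\rm d}\mu^{\sigma}/{\rm d}\lambda=\rho\circ\sigma$ and then invoke $\mu^{\sigma}=\mu$ with essential uniqueness of Radon--Nikodym derivatives (Theorem~\ref{thm:Radon-Nikodym}). Concretely, for nonnegative measurable $f$ one has $\int f\,{\rm d}\mu^{\sigma}=\int(f\circ\sigma)\rho\,{\rm d}\lambda$, and applying the change of variables (Theorem~\ref{thm:change-of-variables}) induced by $\lambda^{\sigma}=\lambda$ to $h=f\cdot(\rho\circ\sigma)$, using $\sigma\circ\sigma={\rm Id}$, rewrites this as $\int f\,(\rho\circ\sigma)\,{\rm d}\lambda$.

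For item (b) I would first unfold the cycle. Since $\mathfrak{S}(\xi,\cdot)=\delta_{\sigma(\xi)}$ and $\Pi(\xi,\cdot)=\alpha_{\phi}(\xi)\delta_{\phi(\xi)}+[1-\alpha_{\phi}(\xi)]\delta_{\xi}$ with $\alpha_{\phi}=a\circ r_{\phi}$, composition yields $\varPi(\xi,\cdot)=\alpha_{\phi}(\xi)\delta_{\sigma\circ\phi(\xi)}+[1-\alpha_{\phi}(\xi)]\delta_{\sigma(\xi)}=\alpha_{\phi}(\xi)\Psi(\xi,\cdot)+[1-\alpha_{\phi}(\xi)]\mathfrak{S}(\xi,\cdot)$. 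It then remains to identify the stated $r$ with the $r_{\phi}$ of Proposition~\ref{prop:r-density}: by item (a), $\rho\circ\psi=\rho\circ\sigma\circ\phi=\rho\circ\phi$ and ${\rm d}\lambda^{\psi^{-1}}/{\rm d}\lambda={\rm d}\lambda^{\phi}/{\rm d}\lambda$, so the two expressions coincide, and since ${\rm d}\lambda^{\phi}/{\rm d}\lambda>0$ the descriptions of $S$ agree as well. The $\mu$-invariance asserted in the heading follows from $\mu\Pi=\mu$ and $\mu\mathfrak{S}=\mu^{\sigma}=\mu$. For item (c) I would use $\mathfrak{S}\mathfrak{S}={\rm Id}$ to write $\mathfrak{S}\varPi\mathfrak{S}=\mathfrak{S}\Pi$, and then verify the measure identity on test functions $H$: expanding $\int H\,\mu({\rm d}\xi)\varPi(\xi,{\rm d}\xi')$ via $\varPi=\Pi\mathfrak{S}$, applying $\mu$-reversibility of $\Pi$ to transpose the integration, and changing variables by $\sigma$ using $\mu^{\sigma}=\mu$, one lands exactly on $\int H\,\mu({\rm d}\xi')(\mathfrak{S}\Pi)(\xi',{\rm d}\xi)$. (Equivalently, $\varPi^{*}=(\Pi\mathfrak{S})^{*}=\mathfrak{S}\Pi=\mathfrak{S}\varPi\mathfrak{S}$, as $\Pi$ and $\mathfrak{S}$ are $\mu$-self-adjoint.)

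Item (d) is the main obstacle. The goal is to push the full-space identity of (c) through the disintegration $\mu({\rm d}\xi)=\pi({\rm d}\xi_{0})\mu_{\xi_{0}}({\rm d}\xi_{-0})$. The crux is a commutation lemma: the $\xi_{0}$-marginalization of $\mathfrak{S}\varPi\mathfrak{S}$ equals $\mathfrak{S}_{0}\varPi_{0}\mathfrak{S}_{0}$. Proving this requires the compatibility of $\sigma$ with the disintegration, which I would extract from $\mu^{\sigma}=\mu$ and the product structure $\sigma(\xi_{0},\xi_{-0})=(\mathfrak{S}_{0}(\xi_{0}),\xi_{-0})$, namely $\pi^{\mathfrak{S}_{0}}=\pi$ and $\mu_{\mathfrak{S}_{0}(\xi_{0})}=\mu_{\xi_{0}}$ for $\pi$-a.e. $\xi_{0}$. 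I would then test $(\pi,\mathfrak{S}_{0})$-reversibility of $\varPi_{0}$ against $f(\xi_{0})g(\xi'_{0})$: by definition of $\varPi_{0}$ and the disintegration the left-hand side equals $\int f(\xi_{0})g(\xi'_{0})\mu({\rm d}\xi)\varPi(\xi,{\rm d}\xi')$, where $\xi_{0},\xi'_{0}$ denote the first coordinates of $\xi,\xi'$; applying (c) and re-marginalizing via the commutation lemma recovers $\int f(\xi_{0})g(\xi'_{0})\pi({\rm d}\xi'_{0})(\mathfrak{S}_{0}\varPi_{0}\mathfrak{S}_{0})(\xi'_{0},{\rm d}\xi_{0})$. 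The careful coordinate bookkeeping and the almost-everywhere conditional invariance are where most of the work sits.

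Finally, for item (e) I would observe that $\varPi'=\mathfrak{S}\Pi=\mathfrak{S}\varPi\mathfrak{S}$, so $\varPi'$ is the $\mathfrak{S}$-conjugate, equivalently the $\mu$-adjoint, of $\varPi$. Its explicit form comes from evaluating $\Pi(\sigma(\xi),\cdot)$, which replaces $\phi$ by $\psi'=\phi\circ\sigma$ and the acceptance ratio by $r_{\phi}\circ\sigma=r\circ\sigma=:r'$. Each of (a)--(d) then transfers verbatim under the substitutions, using $(\psi')^{-1}=\psi$ and $\sigma\circ\psi'\circ\sigma=(\psi')^{-1}$; in particular $(\mu,\mathfrak{S})$-reversibility of $\varPi'$ is immediate from that of $\varPi$ because conjugation by $\mathfrak{S}$ is an involution on kernels.
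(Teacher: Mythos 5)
Your proof is correct, and for parts \ref{enu:norev-prop-psi}--\ref{enu:nonrev-def-muS-reversible} it is essentially the paper's own argument: the same algebra yields $\psi^{-1}=\sigma\circ\psi\circ\sigma$ and $\lambda^{\phi}=\lambda^{\psi^{-1}}$, the same change-of-variables plus Radon--Nikodym uniqueness argument yields $\rho\circ\sigma=\rho$, part \ref{enu:non-rev-def-kernel} is the same unfolding of the cycle combined with Proposition~\ref{prop:r-density}, and your adjoint formulation of part \ref{enu:nonrev-def-muS-reversible} is the paper's test-function computation in operator language ($\Pi$ and $\mathfrak{S}$ are both $\mu$-self-adjoint, so $\varPi^{*}=\mathfrak{S}\Pi=\mathfrak{S}\varPi\mathfrak{S}$). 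Two points of comparison are worth recording. For part \ref{enu:nonrev-marginal-kernel-piS-rev} the paper in fact gives no proof; your route --- derive $\pi^{\mathfrak{S}_{0}}=\pi$ (and the conditional invariance $\mu_{\mathfrak{S}_{0}(\xi_{0})}=\mu_{\xi_{0}}$ $\pi$-a.e.) from $\mu^{\sigma}=\mu$ and the product form of $\sigma$, then marginalize the identity of part \ref{enu:nonrev-def-muS-reversible} against test functions $f(\xi_{0})g(\xi'_{0})$ --- is correct and mirrors the proof of Proposition~\ref{cor:rev-coordinate}, so this is genuinely added content. For part \ref{enu:nonrev-SvarPi}, your conjugation argument gives the kernel decomposition and the $(\mu,\mathfrak{S})$-reversibility of $\varPi'$ immediately, which is cleaner than the paper's treatment; however, ``transfers verbatim'' leaves implicit the one computation the paper does carry out: to recover the stated form of the acceptance ratio one must check that $r':=r\circ\sigma$ equals $\frac{\rho\circ\psi'}{\rho}\cdot\frac{{\rm d}\lambda^{(\psi')^{-1}}}{{\rm d}\lambda}$ on the corresponding set $S'$, which requires $\rho\circ\sigma=\rho$ together with the $\lambda$-a.e.\ identity $\bigl(\frac{{\rm d}\lambda^{\phi}}{{\rm d}\lambda}\bigr)\circ\sigma=\frac{{\rm d}\lambda^{(\psi')^{-1}}}{{\rm d}\lambda}$. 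The latter follows by exactly the manipulation you used for $\rho\circ\sigma=\rho$ (insert $\lambda=\lambda^{\sigma}$, change variables via Theorem~\ref{thm:change-of-variables}, identify the derivative by essential uniqueness in Theorem~\ref{thm:Radon-Nikodym}), so the omission is easily repaired, but it is the only nontrivial analytic content of part \ref{enu:nonrev-SvarPi} and should be spelled out.
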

\begin{cor}
\label{cor:time-reversible-vol-preserving}If $\psi$ in Proposition~\ref{prop:psi-is-sigma-phi}
preserves $\lambda$ then one has $r(\xi)=\rho\circ\psi/\rho\:(\xi)$
on $S=\{\xi\in E\colon\rho(\xi)\wedge\rho\circ\psi(\xi)>0\}$. Indeed,
if $\lambda^{\psi}=\lambda$ then $\lambda^{\psi\circ\psi^{-1}}=\lambda^{\psi^{-1}}$
so $\lambda^{\psi^{-1}}=\lambda$ also.
\end{cor}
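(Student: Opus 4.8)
The plan is to start from the explicit expression for $r$ furnished by Proposition~\ref{prop:psi-is-sigma-phi}\ref{enu:non-rev-def-kernel}, namely $r(\xi)=\frac{\rho\circ\psi}{\rho}(\xi)\,\frac{{\rm d}\lambda^{\psi^{-1}}}{{\rm d}\lambda}(\xi)$ on $S$, and to show that the additional hypothesis $\lambda^{\psi}=\lambda$ collapses the Radon--Nikodym factor $\frac{{\rm d}\lambda^{\psi^{-1}}}{{\rm d}\lambda}$ to the constant function $\mathbf{1}$. Once this is established, substituting into the formula immediately yields $r(\xi)=\rho\circ\psi/\rho\,(\xi)$, and the set $S=\{\xi:\rho(\xi)\wedge[\rho\circ\psi(\xi)\,{\rm d}\lambda^{\psi^{-1}}/{\rm d}\lambda(\xi)]>0\}$ reduces to $\{\xi:\rho(\xi)\wedge\rho\circ\psi(\xi)>0\}$, exactly as claimed. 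So the corollary is essentially a simplification of the master formula, and all the content lives in a single implication about pushforwards.

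That implication is $\lambda^{\psi}=\lambda\Rightarrow\lambda^{\psi^{-1}}=\lambda$. First I would record the composition rule for pushforwards coming straight from Definition~\ref{def:pushforward}: for measurable maps one has $(\mu^{\varphi})^{\chi}=\mu^{\chi\circ\varphi}$, a one-line preimage computation since $(\chi\circ\varphi)^{-1}(A)=\varphi^{-1}(\chi^{-1}(A))$. Applying the pushforward $(\cdot)^{\psi^{-1}}$ to both sides of the hypothesis $\lambda^{\psi}=\lambda$ then gives $(\lambda^{\psi})^{\psi^{-1}}=\lambda^{\psi^{-1}}$ on the right, while the left-hand side equals $\lambda^{\psi^{-1}\circ\psi}=\lambda^{\mathrm{Id}}=\lambda$; hence $\lambda^{\psi^{-1}}=\lambda$. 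Equivalently, one may argue directly at the level of sets: $\lambda^{\psi^{-1}}(A)=\lambda(\psi(A))$, and substituting $A\mapsto\psi(A)$ into the hypothesis $\lambda(\psi^{-1}(A))=\lambda(A)$ yields $\lambda(\psi(A))=\lambda(A)$, the same conclusion.

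With $\lambda^{\psi^{-1}}=\lambda$ in hand, the Radon--Nikodym derivative of a $\sigma$-finite measure against itself is essentially the constant $\mathbf{1}$ by the uniqueness clause of Theorem~\ref{thm:Radon-Nikodym}, so $\frac{{\rm d}\lambda^{\psi^{-1}}}{{\rm d}\lambda}=1$ $\lambda$-a.e., and the stated forms of $r$ and $S$ follow by the substitution described above. I expect no genuine obstacle: the only point requiring care is the bookkeeping of the pushforward composition convention, so that the cancellation $\psi^{-1}\circ\psi=\mathrm{Id}$ is applied on the correct side. Getting the order wrong would instead produce the tautology $(\lambda^{\psi^{-1}})^{\psi}=\lambda^{\psi\circ\psi^{-1}}=\lambda$, which holds unconditionally and does \emph{not} use the hypothesis, hence cannot by itself deliver $\lambda^{\psi^{-1}}=\lambda$.
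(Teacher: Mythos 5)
Your proof is correct and is essentially the paper's own argument: the corollary's inline justification is exactly the pushforward-composition cancellation $(\lambda^{\psi})^{\psi^{-1}}=\lambda^{\psi^{-1}\circ\psi}=\lambda$ combined with the hypothesis $\lambda^{\psi}=\lambda$, after which ${\rm d}\lambda^{\psi^{-1}}/{\rm d}\lambda=1$ collapses $r$ and $S$ to the stated forms. Your closing caution about the order of composition is well taken — the paper itself writes $\lambda^{\psi\circ\psi^{-1}}=\lambda^{\psi^{-1}}$, which under the convention $(\mu^{\varphi})^{\chi}=\mu^{\chi\circ\varphi}$ is the tautological direction, but the intended substitution is precisely the one you perform.
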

\begin{cor}
\label{cor:psi-sigma-insteadof-phi}In many situations, nonreversible
kernels are given in the form of $\varPi$ or $\varPi'$, where $\psi,\psi'\colon E\rightarrow E$
are invertible mappings with the property that $\psi^{-1}=\sigma\circ\psi\circ\sigma$
for $\sigma$ an involution leaving $\mu$ and $\lambda$ invariant,
and similarly for $\psi'$. This time-reversal feature ensures that
we are in the setup of Proposition~\ref{prop:psi-is-sigma-phi},
since indeed in this setup $\phi:=\sigma\circ\psi$ (or $\tilde{\phi}:=\psi\circ\sigma$)
is an involution, therefore defining $\Pi$ satisfying the right property.
In particular we always have the decomposition $\varPi=\mathfrak{S}\tilde{\Pi}=\Pi\mathfrak{S}$
where $\Pi$ and $\tilde{\Pi}$ satisfy detailed balance \citet[Theorem 4]{Andrieu2019}.
\end{cor}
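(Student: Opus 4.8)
The plan is to read the corollary as a change-of-variables reformulation of Proposition~\ref{prop:psi-is-sigma-phi}, with the involution property as the only genuinely new ingredient. First I would rewrite the time-reversal hypothesis: since $\sigma$ is an involution, $\psi^{-1}=\sigma\circ\psi\circ\sigma$ is equivalent to $\sigma\circ\psi=\psi^{-1}\circ\sigma$. Setting $\phi:=\sigma\circ\psi$ I then compute
\[
\phi\circ\phi=(\sigma\circ\psi\circ\sigma)\circ\psi=\psi^{-1}\circ\psi=\mathrm{Id},
\]
and symmetrically $\tilde\phi:=\psi\circ\sigma$ gives $\tilde\phi\circ\tilde\phi=\psi\circ(\sigma\circ\psi\circ\sigma)=\psi\circ\psi^{-1}=\mathrm{Id}$, so both are involutions. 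Because $\sigma\circ\phi=\sigma\circ\sigma\circ\psi=\psi$, the auxiliary map built from $\phi$ in Proposition~\ref{prop:psi-is-sigma-phi} is exactly the given $\psi$, which is precisely what places us in its setup.

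Next I would verify the remaining standing hypotheses of Proposition~\ref{prop:psi-is-sigma-phi}. That $\sigma$ is an involution with $\mu^{\sigma}=\mu$ and $\lambda^{\sigma}=\lambda$, and $\lambda\gg\mu$, are assumed directly. For $\lambda\equiv\lambda^{\phi}$ I would use the pushforward composition rule $\lambda^{f\circ g}=(\lambda^{g})^{f}$ together with $\lambda^{\sigma}=\lambda$ to obtain $\lambda^{\phi}=\lambda^{\sigma\circ\psi}=(\lambda^{\psi})^{\sigma}$ and, from the time-reversal identity, $\lambda^{\psi^{-1}}=\lambda^{\sigma\circ\psi\circ\sigma}=(\lambda^{\psi})^{\sigma}$; hence $\lambda^{\phi}=\lambda^{\psi^{-1}}$, matching Property~\ref{enu:norev-prop-psi}. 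In particular, when $\psi$ preserves $\lambda$ --- the volume-preserving situation of Corollary~\ref{cor:time-reversible-vol-preserving} --- one has $\lambda^{\phi}=\lambda$ and $\lambda\equiv\lambda^{\phi}$ is immediate; otherwise it is part of the choice of a suitable reference measure. With the hypotheses in force, Proposition~\ref{prop:psi-is-sigma-phi} furnishes the $\mu$-reversible kernel $\Pi$ attached to $\phi$ and the identity $\varPi=\Pi\mathfrak{S}$.

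It remains to produce the second half $\varPi=\mathfrak{S}\tilde\Pi$, where $\tilde\Pi$ is the $\mu$-reversible kernel of Theorem~\ref{thm:invo-rev} attached to the involution $\tilde\phi$. I would establish $\Pi\mathfrak{S}=\mathfrak{S}\tilde\Pi$ by direct comparison of two Dirac mixtures. Since $\mathfrak{S}(\xi,\cdot)=\delta_{\sigma(\xi)}$, $\sigma\circ\phi=\psi$ and $\tilde\phi(\sigma(\xi))=\psi(\xi)$, one finds
\[
\begin{aligned}
\Pi\mathfrak{S}(\xi,\cdot) &= \alpha(\xi)\,\delta_{\psi(\xi)}+(1-\alpha(\xi))\,\delta_{\sigma(\xi)},\\
\mathfrak{S}\tilde\Pi(\xi,\cdot) &= \tilde\alpha(\sigma(\xi))\,\delta_{\psi(\xi)}+(1-\tilde\alpha(\sigma(\xi)))\,\delta_{\sigma(\xi)},
\end{aligned}
\]
with $\alpha=a\circ r$ and $\tilde\alpha=a\circ\tilde r$, so the entire claim collapses to $\tilde r\circ\sigma=r$. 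Here I would use the density expression \eqref{eq:r-rho} for both ratios, the invariance $\rho\circ\sigma=\rho$ from Property~\ref{enu:norev-prop-psi} (whence $\rho(\psi(\xi))=\rho(\sigma(\phi(\xi)))=\rho(\phi(\xi))$), and the relation $\tilde\phi=\sigma\circ\phi\circ\sigma$, which gives $\lambda^{\tilde\phi}=(\lambda^{\phi})^{\sigma}$ and, via Theorem~\ref{thm:change-of-variables}, $\mathrm{d}\lambda^{\tilde\phi}/\mathrm{d}\lambda=(\mathrm{d}\lambda^{\phi}/\mathrm{d}\lambda)\circ\sigma$. Substituting these yields $\tilde r(\sigma(\xi))=(\rho\circ\phi/\rho)(\xi)\,(\mathrm{d}\lambda^{\phi}/\mathrm{d}\lambda)(\xi)=r(\xi)$.

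The main obstacle, and essentially the only non-mechanical point, is this last density bookkeeping: deriving $\mathrm{d}\lambda^{\tilde\phi}/\mathrm{d}\lambda=(\mathrm{d}\lambda^{\phi}/\mathrm{d}\lambda)\circ\sigma$ from $\lambda^{\sigma}=\lambda$, and checking that the singular sets $S$ for $\phi$ and for $\tilde\phi$ correspond under $\sigma$, so that $\tilde r\circ\sigma=r$ holds on all of $E$ (both ratios vanishing off the respective sets). Once $\tilde r\circ\sigma=r$ is secured, $\tilde\alpha\circ\sigma=\alpha$ and the two mixtures above coincide, giving $\varPi=\Pi\mathfrak{S}=\mathfrak{S}\tilde\Pi$; reversibility of $\Pi$ and $\tilde\Pi$ is Theorem~\ref{thm:invo-rev}, recovering \citet[Theorem 4]{Andrieu2019}.
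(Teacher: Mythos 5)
Your proof is correct. It is worth noting how it sits relative to the paper: the paper gives no dedicated proof of this corollary. The involution property of $\phi=\sigma\circ\psi$ and $\tilde{\phi}=\psi\circ\sigma$ is the same two-line computation you give, and the decomposition $\varPi=\mathfrak{S}\tilde{\Pi}=\Pi\mathfrak{S}$ is delegated to the citation \citet[Theorem 4]{Andrieu2019}; the only place the relevant computation appears in the paper is inside the proof of part~\ref{enu:nonrev-SvarPi} of Proposition~\ref{prop:psi-is-sigma-phi}, where it is shown that $\mathfrak{S}\Pi$ has deterministic map $\phi\circ\sigma$ and acceptance ratio $r\circ\sigma$. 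What you do differently is make the second half of the decomposition self-contained: you compare the two Dirac mixtures $\Pi\mathfrak{S}(\xi,\cdot)$ and $\mathfrak{S}\tilde{\Pi}(\xi,\cdot)$, observe that both are supported on $\{\psi(\xi),\sigma(\xi)\}$, and reduce the claim to the single identity $\tilde{r}\circ\sigma=r$, which you then verify with exactly the density manipulations ($\rho\circ\sigma=\rho$ and ${\rm d}\lambda^{\tilde{\phi}}/{\rm d}\lambda=({\rm d}\lambda^{\phi}/{\rm d}\lambda)\circ\sigma$, both consequences of $\lambda^{\sigma}=\lambda$, $\mu^{\sigma}=\mu$) that the paper uses in proving part~\ref{enu:nonrev-SvarPi}. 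The one gap you flag is easily closed: since $\rho\circ\tilde{\phi}=\rho\circ\sigma\circ\phi\circ\sigma=\rho\circ\phi\circ\sigma$, one has $S(\mu,\mu^{\tilde{\phi}})=\sigma\big(S(\mu,\mu^{\phi})\big)$, so $\xi\in S(\mu,\mu^{\phi})$ if and only if $\sigma(\xi)\in S(\mu,\mu^{\tilde{\phi}})$, and both sides of $\tilde{r}\circ\sigma=r$ vanish simultaneously off $S(\mu,\mu^{\phi})$; hence the identity, and therefore $\tilde{\alpha}\circ\sigma=\alpha$, holds on all of $E$. You are also right to point out that $\lambda\equiv\lambda^{\phi}$ is an assumption inherited from Proposition~\ref{prop:psi-is-sigma-phi} (automatic in the $\lambda$-preserving case of Corollary~\ref{cor:time-reversible-vol-preserving}) rather than a consequence of the corollary's stated hypotheses. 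In short, your argument reconstructs, in elementary and self-contained form, precisely what the paper outsources to the external theorem.
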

\begin{rem}
Proposition~\ref{prop:psi-is-sigma-phi} highlights the fundamental
difference between reversible and this type of nonreversible kernels.
Without refreshment of $\xi_{-0}$, the reversible Markov chain started
at $\xi$ oscillates between $\xi$ and $\phi(\xi)$ due to the involutive
property, while the nonreversible chain can in principle explore a
large subset of states $\psi^{k}(\text{\ensuremath{\xi}})$, $k\in\mathbb{N}$,
although rejection leads to backtracking. This fundamental qualitative
behaviour is exploited in more general and realistic setups, even
when $\xi_{-0}$ is refreshed.
\end{rem}
\begin{rem}
In the same way the results of \citet{maire2014comparison} can be
used in the context of Proposition~\ref{cor:rev-coordinate} (see
Remark~\ref{rmk:use-maire-douc-olsson}) one can, for example, deduce
optimality properties of $\varPi_{0}$ from those of $\varPi$ by
using \citet{Andrieu2019}.
\end{rem}
In practice, a number of deterministic transformations $\psi$ are
used to define $\pi$-invariant Markov kernels. The validity of such
kernels often rests primarily on showing that the transformation is
measure-preserving, typically with the measure being the Lebesgue
measure. We give here some examples where $\pi$ is a probability
measure associated with a position variable $x\in\mathbb{R}^{d}$
and a velocity variable $v\in\mathbb{R}^{d}$.

A general class of nonreversible MH kernels relies on the choices
$\xi=(x,v)\in E=\mathsf{X}\times\mathsf{V}$, $\sigma(x,v)=(x,-v)$
and $\mu({\rm d}x,{\rm d}v)=\pi({\rm d}x)\kappa({\rm d}v)$ where
$\kappa$ is such that $\mu^{\sigma}=\mu$. In order to keep presentation
simple we will assume that $\mathsf{X}=\mathsf{V}=\mathbb{R}^{d}$
and that $\mu$ has a density $\rho(x,v)=\varpi(x)\kappa(v)$ with
respect to the Lebesgue measure on $\mathbb{R}^{2d}$. Note that the
Lebesgue measure is invariant by $\sigma$ since its Jacobian is $1$.
\begin{lem}
\label{lem:update-one-preserve}Let $x\in\mathsf{X}\subseteq\mathbb{R}^{d}$
and $y\in\mathsf{Y}\subseteq\mathbb{R}^{d'}$, and $\psi:\mathsf{X}\times\mathsf{Y}\to\mathsf{X}\times\mathsf{Y}$
be defined as $\psi(x,y)=(x,y+f(x))$ for some function $f:\mathsf{X}\to\mathsf{Y}$.
Then $\psi$ preserves the Lebesgue measure $\lambda$ on $\mathsf{X}\times\mathsf{Y}$.
\end{lem}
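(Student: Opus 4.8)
The plan is to prove that $\psi$ preserves $\lambda$ by showing directly that the pushforward satisfies $\lambda^{\psi}=\lambda$. By the change-of-variables formula (Theorem~\ref{thm:change-of-variables}), this is equivalent to verifying the integral identity $\int h\circ\psi\,{\rm d}\lambda=\int h\,{\rm d}\lambda$ for every bounded measurable $h:\mathsf{X}\times\mathsf{Y}\to\mathbb{R}$. Since $\lambda$ is the Lebesgue measure on the product $\mathsf{X}\times\mathsf{Y}$, the first step is to apply Fubini's theorem to peel off the $y$-integral for each fixed $x$:
\[
\int h\circ\psi\,{\rm d}\lambda=\int_{\mathsf{X}}\Big(\int_{\mathsf{Y}}h\big(x,y+f(x)\big)\,{\rm d}y\Big)\,{\rm d}x.
\]

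The core of the argument is the observation that, for each fixed $x$, the inner map $y\mapsto y+f(x)$ is a translation of $\mathbb{R}^{d'}$, and translations preserve Lebesgue measure. Hence the inner integral equals $\int_{\mathsf{Y}}h(x,y)\,{\rm d}y$, and reassembling via Fubini gives $\int h\circ\psi\,{\rm d}\lambda=\int h\,{\rm d}\lambda$, so that $\lambda^{\psi}=\lambda$ as claimed. A notable feature of this route is that it requires no smoothness of $f$: the shear structure alone suffices. When $f$ is differentiable one may alternatively appeal to the Jacobian form of ${\rm d}\lambda^{\phi}/{\rm d}\lambda$ discussed after Proposition~\ref{prop:r-density}, since the derivative $\psi'(x,y)$ is the block-triangular matrix with identity blocks on the diagonal and $f'(x)$ below, whence $\left|{\rm det}\,\psi'(x,y)\right|=1$; this serves as a consistency check but is strictly less general.

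The one point requiring care — and the main obstacle to a fully rigorous statement — is the case where $\mathsf{Y}$ is a proper subset of $\mathbb{R}^{d'}$, since translation invariance on $\mathbb{R}^{d'}$ only yields the desired equality of integrals over $\mathsf{Y}$ if the translated domain coincides with $\mathsf{Y}$, i.e. if $\{y+f(x):y\in\mathsf{Y}\}=\mathsf{Y}$ for each $x$. This is guaranteed by the hypothesis that $\psi$ maps $\mathsf{X}\times\mathsf{Y}$ into itself: the shear $\psi$ is invertible with inverse $(x,y)\mapsto(x,y-f(x))$, and the requirement that both $\psi$ and $\psi^{-1}$ map $\mathsf{X}\times\mathsf{Y}$ into $\mathsf{X}\times\mathsf{Y}$ forces each fibrewise translation to be a bijection of $\mathsf{Y}$, so no boundary effects arise. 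With this verified, together with measurability of $(x,y)\mapsto h(x,y+f(x))$ to license Fubini, the argument is complete.
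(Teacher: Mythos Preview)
Your proof is correct and follows essentially the same route as the paper: Fubini to isolate the $y$-integral for fixed $x$, then translation-invariance of Lebesgue measure to remove the shift by $f(x)$. The paper works with indicator functions of measurable sets rather than general bounded measurable $h$, and silently integrates over the ambient $\mathbb{R}^{d}\times\mathbb{R}^{d'}$ so the proper-subset issue you flag does not arise, but the substance is identical.
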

\begin{example}[Guided Random walk (GRW), \citealt{gustafson1998guided}]
Let $\psi(x,v)=(x+v,v)$ then $\phi=\sigma\circ\psi=(x+v,-v)$ is
an involution, and is in fact the involution used to define the random
walk Metropolis. Then $\psi$ preserves $\lambda$ by Lemma~\ref{lem:update-one-preserve}.
Hence, using that $\kappa(-v)=\kappa(v)$ for $\xi\in S$,
\[
r(x,v)=\frac{\rho\circ\psi}{\rho}(x,v)=\frac{\varpi(x+v)}{\varpi(x)},
\]
which coincides with the acceptance ratio of the RWM Metropolis. In
fact $P(x,{\rm d}x'):=\int\kappa({\rm d}v)\varPi(x,v;{\rm d}x')$
is the $\pi-$reversible RWM Markov kernel. The GRW, of transition
$\varPi$, differs in that it is $\mu$-invariant but not reversible
and has the property that it introduces memory on the velocity component
of the process. On its own $\varPi$ does not lead to an ergodic chain
and must be combined with other updates, e.g. occasionally sampling
$v$ afresh from $\kappa$. 
\end{example}

Before covering Hamiltonian Monte Carlo, and in particular the common
variant using the velocity Verlet, or leapfrog, integrator we note
that transformations $\psi$ satisfying $\psi^{-1}=\sigma\circ\psi\circ\sigma$
are particularly intuitive in that the iterated maps $\psi\circ\cdots\circ\psi$
can be ``reversed''.
\begin{rem}
\label{rem:time-reversible-map}Let $\psi^{0}={\rm Id}$ and $\psi^{k}=\psi\circ\psi^{k-1}$
for $k\in\mathbb{N}$. If $\psi$ satisfies $\psi^{-1}=\sigma\circ\psi\circ\sigma$,
then $\psi$ is time-reversible in the sense that $\phi_{k}=\sigma\circ\psi^{k}$
is an involution for any $k\in\mathbb{N}$. Indeed, we have
\[
{\rm Id}=\psi^{-k}\circ\psi^{k}=(\sigma\circ\psi\circ\sigma)^{k}\circ\psi^{k}=\sigma\circ\psi^{k}\circ\sigma\circ\psi^{k}.
\]
\end{rem}
\begin{lem}
\label{lem:alternating-maps-preserve-reversible}Let $x,v\in\mathbb{R}^{d}$
and $\psi:\mathbb{R}^{2d}\to\mathbb{R}^{2d}$ be
\[
\psi=\psi_{B}\circ\psi_{A}\circ\psi_{B},
\]
where $\psi_{B}=(x,v)\mapsto(x,v+\imath(x))$ and $\psi_{A}=(x,v)\mapsto(x+\jmath(v),v)$
for some functions $\imath\colon\mathsf{X}\rightarrow\mathsf{V}$
and $\jmath\colon\mathsf{V}\rightarrow\mathsf{X}$, where $\jmath(-v)=-\jmath(v)$
for $v\in\mathbb{R}^{d}$. Then $\psi_{A}$, $\psi_{B}$ and $\psi$
preserve the Lebesgue measure on $\mathbb{R}^{2d}$ and $\psi^{-1}=\sigma\circ\psi\circ\sigma$
so that $\psi$ is time-reversible in the sense of Remark~\ref{rem:time-reversible-map}.
\end{lem}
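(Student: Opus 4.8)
The plan is to establish three things in sequence: the Lebesgue-preservation of the elementary maps $\psi_A$ and $\psi_B$, then of their composite $\psi$, and finally the time-reversibility identity $\psi^{-1}=\sigma\circ\psi\circ\sigma$. The first step is essentially an application of Lemma~\ref{lem:update-one-preserve}, since both $\psi_A$ and $\psi_B$ are ``shear'' maps of exactly the form covered there: $\psi_B(x,v)=(x,v+\imath(x))$ adds a function of $x$ to the $v$-coordinate while leaving $x$ fixed, and $\psi_A(x,v)=(x+\jmath(v),v)$ adds a function of $v$ to the $x$-coordinate while leaving $v$ fixed. Applying Lemma~\ref{lem:update-one-preserve} directly (with the roles of the two coordinate blocks swapped for $\psi_A$) gives that each preserves $\lambda$. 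Equivalently, each has a block-triangular Jacobian with identity blocks on the diagonal, hence unit determinant. Since the composition of Lebesgue-preserving maps is Lebesgue-preserving, $\psi=\psi_B\circ\psi_A\circ\psi_B$ preserves $\lambda$ as well.

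For the time-reversibility identity, I would first record how $\sigma(x,v)=(x,-v)$ conjugates each elementary map. A direct computation gives
\[
\sigma\circ\psi_B\circ\sigma\,(x,v)=\sigma\circ\psi_B\,(x,-v)=\sigma\,(x,-v+\imath(x))=(x,v-\imath(x))=\psi_B^{-1}(x,v),
\]
so $\sigma\circ\psi_B\circ\sigma=\psi_B^{-1}$. For $\psi_A$, using the assumed oddness $\jmath(-v)=-\jmath(v)$,
\[
\sigma\circ\psi_A\circ\sigma\,(x,v)=\sigma\circ\psi_A\,(x,-v)=\sigma\,(x+\jmath(-v),-v)=(x-\jmath(v),v)=\psi_A^{-1}(x,v),
\]
so $\sigma\circ\psi_A\circ\sigma=\psi_A^{-1}$ as well. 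The oddness of $\jmath$ is precisely what makes this step work, and I expect it to be the one genuinely substantive hypothesis to track.

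It then remains to assemble these conjugation identities. Using $\sigma\circ\sigma=\mathrm{Id}$ to insert cancelling pairs of $\sigma$ between factors,
\[
\sigma\circ\psi\circ\sigma
=\sigma\circ\psi_B\circ\psi_A\circ\psi_B\circ\sigma
=(\sigma\circ\psi_B\circ\sigma)(\sigma\circ\psi_A\circ\sigma)(\sigma\circ\psi_B\circ\sigma)
=\psi_B^{-1}\circ\psi_A^{-1}\circ\psi_B^{-1}.
\]
On the other hand, the inverse of the composite reverses the order of factors, giving $\psi^{-1}=(\psi_B\circ\psi_A\circ\psi_B)^{-1}=\psi_B^{-1}\circ\psi_A^{-1}\circ\psi_B^{-1}$, which matches. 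Hence $\psi^{-1}=\sigma\circ\psi\circ\sigma$, and by Remark~\ref{rem:time-reversible-map} this is exactly the statement that $\psi$ is time-reversible. The symmetric palindromic structure $\psi_B\circ\psi_A\circ\psi_B$ is what makes the conjugated product reassemble into the reversed-order inverse; there is no real obstacle here beyond bookkeeping, so the only point requiring care is the sign condition on $\jmath$ used in the $\psi_A$ conjugation.
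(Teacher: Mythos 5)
Your proof is correct, and its first half (Lebesgue preservation of $\psi_A$ and $\psi_B$ via Lemma~\ref{lem:update-one-preserve}, then closure of measure preservation under composition) coincides exactly with the paper's argument. For the identity $\psi^{-1}=\sigma\circ\psi\circ\sigma$ you take a genuinely different, more structural route: you first establish the factorwise conjugation identities $\sigma\circ\psi_B\circ\sigma=\psi_B^{-1}$ and $\sigma\circ\psi_A\circ\sigma=\psi_A^{-1}$ (the latter being the only place the oddness $\jmath(-v)=-\jmath(v)$ enters), and then use $\sigma\circ\sigma=\mathrm{Id}$ and the palindromic form $\psi=\psi_B\circ\psi_A\circ\psi_B$ to reassemble
\[
\sigma\circ\psi\circ\sigma=(\sigma\circ\psi_B\circ\sigma)\circ(\sigma\circ\psi_A\circ\sigma)\circ(\sigma\circ\psi_B\circ\sigma)=\psi_B^{-1}\circ\psi_A^{-1}\circ\psi_B^{-1}=\psi^{-1}.
\]
The paper instead verifies the equivalent identity $\psi\circ\sigma\circ\psi=\sigma$ by one direct computation: it writes out $\sigma\circ\psi(x,v)$ explicitly, then applies $\psi_B$, then $\psi_A$ (invoking the oddness of $\jmath$ at that stage), then $\psi_B$, arriving at $(x,-v)$, and concludes since $\sigma$ is an involution. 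Both arguments rest on the same two ingredients, but yours isolates them per factor and makes transparent that the conclusion holds for any palindromic composition of maps each conjugated to its inverse by $\sigma$ (e.g.\ integrators with more stages), whereas the paper's computation is tied to the three-factor form and, as a by-product, produces the explicit formula for the involution $\sigma\circ\psi$ that reappears in the MALA example. Your version trades that explicit formula for modularity and reusability; both are complete proofs.
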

\begin{example}[HMC - leapfrog integrator]
\label{eg:HMC-setup}Let $\pi$ have density $\rho=\varpi\otimes\kappa$
w.r.t. $\lambda$, the Lebesgue measure on $E=\mathbb{R}^{2d}$. Consider
the function $h=\psi_{B}\circ\psi_{A}\circ\psi_{B}$ as in Lemma~\ref{lem:alternating-maps-preserve-reversible}
with $\imath(x)=\frac{\epsilon}{2}\nabla\log\varpi(x)$ and $\jmath(v)=\epsilon\nabla\log\kappa(v)$.
Let $\varPi=\Pi\mathfrak{S}$ be the nonreversible kernel in Proposition~\ref{prop:psi-is-sigma-phi}
with $\psi=h^{k}$ for some $k\in\mathbb{N}$, and acceptance ratio
\[
r(\xi)=\frac{\rho\circ\psi}{\rho}(\xi),\qquad\xi\in S.
\]
This kernel is a version of the HMC kernel with leapfrog integrator
(see Remark~\ref{rem:nonrev-to-rev} below). It has desirable properties,
but it is also clear that the $\mu$-invariance of $\varPi$ applies
for a much broader class of $\imath$ and $\jmath$, as implied by
the appeal to Lemma~\ref{lem:alternating-maps-preserve-reversible}.
For example, it is well known that one could replace $\varpi$ in
$\imath$ with some approximate density \citep[see, e.g., ][Section 5.5]{neal2011mcmc},
i.e. run the ``leapfrog integrator'' for a different density but
accept or reject using $\rho=\varpi\otimes\kappa$. In order to preserve
persistence of motion (and nonreversibility) this update is typically
combined with partial refreshment of the velocity. As discussed below,
full refreshment leads to a reversible algorithm.
\end{example}
\begin{rem}
\label{rem:nonrev-to-rev}It is often the case, as was the case in
part of the seminal paper of \citet{horowitz1991generalized}, that
the kernel considered is reversible. Indeed in those works the kernel
considered is, for $(x,A)\in\mathsf{X}\times\mathscr{X}$
\[
P(x,A):=\int\kappa({\rm d}v)\varPi(x,v;{\rm d}(y,w))\mathbb{I}\{(y,w)\in A\times\mathsf{V}\},
\]
that is the velocity is refreshed at each iteration and with $\xi=(x,v)$
and $f,g\colon\mathsf{X}\rightarrow[0,1]$,
\begin{align*}
\int f(x)g(x')\pi({\rm d}x)P(x,{\rm d}x') & =\int f(x)g(x')\mu({\rm d}\xi)\varPi(\xi,{\rm d}\xi')\\
 & =\int f(x)g(x')\mu({\rm d}\xi')\mathfrak{S}\varPi\mathfrak{S}(\xi,{\rm d}\xi')\\
 & =\int f(x)g(x')\mu\mathfrak{S}({\rm d}\xi')\mathfrak{S}\varPi(\xi,{\rm d}\xi')\\
 & =\int f(x)g(x')\mu({\rm d}\xi')\varPi(\xi,{\rm d}\xi')\\
 & =\int f(x)g(x')\pi({\rm d}x')P(x',{\rm d}x).
\end{align*}
where we have used the $(\mu,\mathfrak{S})-$reversibility of $\varPi$,
$\mu=\mu\mathfrak{S}$ and the fact that $\mathfrak{S}g=g$ for this
choice of function.
\end{rem}
\begin{example}[MALA and generalized MALA ]
Standard, reversible normal (i.e. $\kappa$ is the standard normal
distribution) MALA \citep{besag-discussion-1994} corresponds to one
iteration of HMC - leapfrog integrator with full refreshment of the
velocity at each iteration, and indeed here $\psi(x,v)=\big(x+\imath(x)+\frac{\epsilon}{2}v,-v-\tfrac{\epsilon}{2}\big[\imath(x)+\imath\big(x+\tfrac{\epsilon}{2}\imath(x)+\frac{\epsilon}{2}v\big)\big]\big)$
for $\imath(x)=\nabla_{x}\log\varpi(x)$ and $\epsilon>0$. In \citet{poncet2017generalized}
it is proposed to consider $\imath\colon\mathsf{X}\times\{-1,1\}\rightarrow\mathsf{X}$
with $\imath(x,s)=\nabla_{x}\log\varpi(x)+s\gamma(x)$ for $\gamma\colon\mathsf{X}\rightarrow\mathsf{X}$.
A naïve idea would be to take $\psi_{B}=(x,v,s)\mapsto(x,v+\imath(x,s),s)$
and $\psi_{A}=(x,v,s)\mapsto(x+\jmath(v),v,s)$ with $\sigma(x,v,s)=(x,v,-s)$
which is shown to have poor properties; this leads to the development
of a scheme relying on an implicit integration scheme.
\end{example}
\begin{example}[Hyperplane reflection]
If $\lambda_{V}$ is the Lebesgue measure on $\mathbb{R}^{d}$, the
involution ${\rm b}(x,v)=(x,v-2\{n(x)^{\top}v\}n(x))$ preserves $\lambda=\lambda_{X}\times\lambda_{V}$,
where $n:\mathbb{R}^{d}\to\mathbb{R}^{d}$ satisfies $\left\Vert n(x)\right\Vert ^{2}=n(x)^{\top}n(x)=1$
for all $x\in\mathbb{R}^{d}$. Indeed, we can write the $v$-component
of $\phi(x,v)$ as
\[
v-2(n(x)^{\top}v)n(x)=({\rm Id}-2n(x)n(x)^{\top})v,
\]
and we see that ${\rm B}_{x}=({\rm Id}-2n(x)n(x)^{\top})$ is a matrix
with ${\rm B}_{x}^{2}={\rm Id}$ and so $\left|\det{\rm B}_{x}\right|=1$.
Since ${\rm b}$ does not move the $x$-component, it follows that
${\rm b}$ is $\lambda$-preserving.

If $\lambda=\lambda_{X}\times\lambda_{V}$ and $\lambda_{V}$ is instead
the uniform measure on the sphere $\mathbb{S}^{d-1}=\{v\in\mathbb{R}^{d}:v^{\top}v=1\}$
then ${\rm B}_{x}^{\top}={\rm B}_{x}$ so $\left\Vert {\rm B}_{x}v\right\Vert ^{2}=({\rm B}_{x}v)^{\top}{\rm B}_{x}v=v^{\top}v=\left\Vert v\right\Vert ^{2}$,
so ${\rm B}_{x}$ preserves the norm $\left\Vert \cdot\right\Vert $.
Letting ${\rm Leb}$ denote the Lebesgue measure on $\mathbb{R}^{d}$,
and noting from the argument above that ${\rm b}$ preserves ${\rm Leb}$,
we can then conclude that ${\rm b}$ is $\lambda$-preserving as above
because for any measurable $A\subseteq\mathbb{S}^{d-1}$, $\lambda_{V}(A)\propto{\rm Leb}(\{tv:v\in A,t\in[0,1]\})$.
\end{example}
A natural question is whether the requirement that $\sigma\colon\mathsf{E}\rightarrow\mathsf{E}$
be an involution can be relaxed to invertibility only. More precisely
let $\psi_{0},\sigma_{0}\colon\mathsf{Z}\rightarrow\mathsf{Z}$ be
invertible with $\psi_{0}^{-1}=\sigma_{0}^{-1}\circ\psi_{0}\circ\sigma_{0}$--such
a structure is known as time-reversible symmetry when $\psi_{0}$
is the flow of a dynamical system with this property \citep{lamb1998time}.
Let $\Psi_{0}(z,{\rm d}z')=\delta_{\psi_{0}(z)}({\rm d}z')$, $\mathfrak{S}_{0}^{\pm1}(z,{\rm d}z')=\delta_{\sigma_{0}^{\pm}(z)}({\rm d}z')$
and $\mu_{0}$ a probability distribution on $\big(\mathsf{Z},\mathscr{Z}\big)$
such that $\mu_{0}\mathfrak{S}_{0}^{\pm1}=\mu_{0}$. Can one define
a deterministic MH type kernel leaving $\mu_{0}$ invariant -- \citet{fang2014compressible}
provide us with an answer, see below. Our answer consists of embedding
this problem in the $(\mu,\mathfrak{S})-$reversible framework. Let
$\mathsf{E}=\mathsf{Z}\times\mathsf{U}$ where $\mathsf{U}=\{-1,1\}$
and define $\mu\big({\rm d}(z,u)\big):=\frac{1}{2}\mu_{0}({\rm d}z)\mathbb{I}\big\{ u\in\mathsf{U}\big\}$
and consider the mappings $\sigma,\psi\colon\mathsf{Z}\times\mathsf{U}\rightarrow\mathsf{Z}\times\mathsf{U}$
such that for $f\colon\mathsf{Z}\times\mathsf{U}\rightarrow\mathbb{R}$,
$f\circ\sigma(z,u)=f\big(\sigma_{0}^{u}(z),-u\big)$ and $\psi(z,u):=\big(\sigma_{0}^{-u}\circ\psi_{0}^{u}(z),u\big)$.
For any $(z,u)\in\mathsf{Z}\times\mathsf{U}$ we have that $\sigma^{2}(z,u)=\big(\sigma_{0}^{-u}\circ\sigma_{0}^{u}(z),u\big)=(z,u)$,
that is $\sigma$ is an involution and one can check that $\psi^{-1}(z,u)=\big(\psi_{0}^{-u}\circ\sigma_{0}^{u}(z),u\big)$.
Noting that $\psi_{0}^{-1}=\sigma_{0}^{-1}\circ\psi_{0}\circ\sigma_{0}$
is equivalent to $\psi_{0}^{-u}=\sigma_{0}^{-u}\circ\psi_{0}^{u}\circ\sigma_{0}^{u}$
for $u\in\mathsf{U}$ we have for $(z,u)\in\mathsf{Z}\times\mathsf{U}$
\begin{align*}
\sigma\circ\psi\circ\sigma(z,u) & =\sigma\circ\psi\big(\sigma_{0}^{u}(z),-u\big)\\
 & =\sigma\big(\sigma_{0}^{u}\circ\psi_{0}^{-u}\circ\sigma_{0}^{u}(z),-u\big)\\
 & =\big(\psi_{0}^{-u}\circ\sigma_{0}^{u}(z),u\big)\\
 & =\psi^{-1}(z,u).
\end{align*}
Finally, for any $f\colon\mathsf{Z}\times\mathsf{U}\rightarrow\mathbb{R}$
we have $\mu\mathfrak{S}=\mu$, since
\begin{align*}
\int f(z,u)\mu^{\sigma}\big({\rm d}(z,u)\big) & =\int f\circ\sigma(z,u)\mu\big({\rm d}(z,u)\big)\\
 & =\int f\big(\sigma_{0}^{u}(z),-u\big)\frac{1}{2}\mu_{0}({\rm d}z)\mathbb{I}\big\{ u\in\mathsf{U}\big\}\\
 & =\int f\big(z,-u\big)\frac{1}{2}\mu_{0}^{\sigma_{0}^{u}}({\rm d}z)\mathbb{I}\big\{ u\in\mathsf{U}\big\}\\
 & =\int f\big(z,u\big)\frac{1}{2}\mu_{0}({\rm d}z)\mathbb{I}\big\{ u\in\mathsf{U}\big\}\\
 & =\int f(z,u)\mu\big({\rm d}(z,u)\big),
\end{align*}
We are therefore back in the $(\mu,\mathfrak{S})-$reversible setup
and with
\begin{align*}
\alpha(z,u) & :=a\left(\frac{\rho\circ\psi(z,u)}{\rho(z,u)}\frac{{\rm d}\lambda^{\psi^{-1}}}{{\rm d}\lambda}(z,u)\right)\\
 & =a\left(\frac{\rho_{0}\circ\sigma_{0}^{-u}\circ\psi_{0}^{u}(z)}{\rho_{0}(z)}\frac{{\rm d}\lambda^{\psi_{0}^{-u}\circ\sigma_{0}^{u}}}{{\rm d}\lambda_{0}}(z)\right)\\
 & =a\left(\frac{\rho_{0}\circ\psi_{0}^{u}(z)}{\rho_{0}(z)}\frac{{\rm d}\lambda^{\psi_{0}^{-u}}}{{\rm d}\lambda_{0}}(z)\right)
\end{align*}
we can define the kernel
\begin{align*}
\varPi f(z,u) & =\alpha(z,u)\cdot f\circ\psi(z,u)+\bar{\alpha}(z,u)\cdot f\circ\sigma(z,u)\\
 & =\alpha(z,u)\cdot f\big(\sigma_{0}^{-u}\circ\psi_{0}^{u}(z),u\big)+\bar{\alpha}(z,u)\cdot f\big(\sigma_{0}^{u}(z),-u\big).
\end{align*}
The kernel $\widetilde{\varPi}\colon\mathsf{Z}\times\mathscr{Z}\rightarrow[0,1]$
proposed by \citet{fang2014compressible} is, for $g\colon\mathsf{Z}\rightarrow\mathbb{R}$,
\begin{align*}
\widetilde{\varPi}g(z) & =\alpha(z,1)g\circ\psi_{0}(z)+\bar{\alpha}(z,1)g\circ\sigma_{0}(z)\\
 & =\alpha(z,1)\tilde{g}\circ\sigma\circ\psi(z,1)+\bar{\alpha}(z,1)\tilde{g}\circ\sigma(z,1)
\end{align*}
where $\tilde{g}\colon\mathsf{Z}\times\mathsf{U}\rightarrow\mathbb{R}$
is such that $\tilde{g}(z,1)=\tilde{g}(z,-1):=g(z)$, which can therefore
be thought of as being $\varPi$ but used for the value $u=1$ only--one
could equally have chosen $u=-1$, naturally. One can check that this
kernel satisfies global balance for $\mu_{0}$ directly \citep{fang2014compressible}.
 The kernel $\widetilde{\varPi}$ does not satisfy detailed or skew
detailed balance, but noting that $\phi:=\sigma\circ\psi$ is an involution
and letting $\Phi(z,{\rm d}z'):=\delta_{\phi(z)}({\rm d}z')$, that
is $\Phi f(z,u)=(\psi_{0}^{u},-u)$ we use ``reversibility'' (see
the proof Proposition~\ref{prop:psi-is-sigma-phi}) to show
\begin{align*}
\frac{1}{2}\int\alpha(z,1)g\circ\psi_{0}(z)\mu_{0}({\rm d}z) & =\int\mathbf{1}_{\mathsf{Z}\times\{1\}}(z,u)\tilde{g}\circ\phi(z,u)\alpha(z,u)\mu({\rm d}z,u)\\
 & =\int\mathbf{1}_{\mathsf{Z}\times\{1\}}(z,u)\Phi\tilde{g}(z,u)\alpha(z,u)\mu({\rm d}z,u)\\
 & =\int\tilde{g}(z,u)\Phi\mathbf{1}_{\mathsf{Z}\times\{1\}}(z,u)\alpha(z,u)\mu({\rm d}z,u)\\
 & =\int g(z)\mathbf{1}_{\mathsf{Z}\times\{-1\}}(z,u)\alpha(z,u)\mu({\rm d}z,u)\\
 & =\frac{1}{2}\int g(z)\alpha(z,-1)\mu_{0}({\rm d}z)
\end{align*}
and similarly with $\Phi$ replaced with $\mathfrak{S}$ (see the
proof Proposition~\ref{prop:psi-is-sigma-phi})
\begin{align*}
\frac{1}{2}\int g\circ\sigma_{0}(z)\alpha(z,1)\mu_{0}({\rm d}z) & =\int\mathbf{1}_{\mathsf{Z}\times\{1\}}(z,u)\mathfrak{S}\tilde{g}(z,u)\alpha(z,u)\mu({\rm d}z,u)\\
 & =\int\tilde{g}(z,u)\mathfrak{S}\mathbf{1}_{\mathsf{Z}\times\{1\}}(z,u)\alpha(z,u)\mu({\rm d}z,u)\\
 & =\int g(z)\mathbf{1}_{\mathsf{Z}\times\{-1\}}(z,u)\alpha(z,u)\mu({\rm d}z,u)\\
 & =\frac{1}{2}\int g(z)\alpha(z,-1)\mu_{0}({\rm d}z).
\end{align*}
Therefore for any $g\colon\mathsf{Z}\rightarrow[0,1]$
\[
\int\widetilde{\varPi}g(z)\mu_{0}({\rm d}z)=\int g\circ\sigma_{0}(z)\mu_{0}({\rm d}z)=\int g(z)\mu_{0}({\rm d}z)
\]
and we conclude.

\section{Markov chain proposals, stopping times and processes \& NUTS\label{sec:NUTS}}

In some scenarios it is desirable for $\mu$ to involve simulation
of a stopped process. In particular, this allows the amount of simulation
required to produce a suitable proposal to be random and ideally be
appropriately adapted to features of the target distribution and the
current point. As mentioned in Remark~\ref{rem:non-unique}, the
specification of $(\mu,\phi)$ is not unique for a given Markov kernel,
so there is some flexibility in precisely how stopping times and stopped
processes are captured in $\xi$ and described by $\mu$. In particular,
one often has flexibility in allowing $\xi$ to be infinite-dimensional
and to contain a realization of the original process as well as the
stopping time, or for $\xi$ to be finite-dimensional and to contain
only the stopped process. In the former case, one will need to adopt
an indirect implementation as per Remark~\ref{rem:lazy}.

\subsection{A toy example\label{subsec:A-toy-example}}

We illustrate the former approach on a simple example with i.i.d.
proposals. Let $\xi=(n,\mathtt{Z},k)\in\mathbb{N}\times\mathsf{Z}^{\mathbb{N}}\times\mathbb{N}$.
Assume that $\nu\gg\pi$ and let $\varpi(z)={\rm d}\pi/{\rm d}\nu(z)$
-- a common situation is when $\pi$ and $\nu$ have densities w.r.t.
some common dominating measure and $\varpi(z)=\pi(z)/\nu(z)$ if we
keep the same notation for these densities. Assume that the distribution
of $\mathtt{Z}$ under $\mu$ is that $z_{0}\sim\pi$ and for $i\in\mathbb{N}_{*}$,
$z_{i}\overset{{\rm iid}}{\sim}\nu$. Let $(s_{n})_{n\in\mathbb{N}_{*}}$
be a sequence of functions such that $s_{n}:\mathsf{Z}^{\mathbb{N}}\to\{0,1\}$
depends only on the first $n+1$ members of its argument; i.e. $s_{n}(z)$
depends only on $z_{0},\ldots,z_{n}$. Define the stopping time for
$\mathtt{Z}\in\mathsf{Z}^{\mathbb{N}}$
\begin{equation}
\tau=\tau(\mathtt{Z}):=\inf\{n\geq1:s_{n}(\mathtt{Z})=1\}.\label{eq:generic-def-tau}
\end{equation}
For example, one could choose $s_{n}(\mathtt{Z})=\mathbb{I}\{\sum_{i=0}^{n}\varpi(z_{i})>c\}$
for some constant $c>0$, or $s_{n}(\mathtt{Z})=\mathbb{I}\{{\rm ESS}(z_{0},\ldots,z_{n})>c\}$
with 
\[
{\rm ESS}(z_{0},\ldots,z_{n}):=\frac{\left\{ \sum_{i=0}^{n}\varpi(z_{i})\right\} ^{2}}{\sum_{i=0}^{n}\varpi(z_{i})^{2}},
\]
heuristically to ensure that sufficiently many samples have been drawn
and that one can be chosen to produce a sample approximately drawn
from $\pi$. For $\mathtt{Z}\in\mathsf{Z}^{\mathbb{N}}$, let $n:=\tau(\mathtt{Z})$
and $k\sim\varsigma(\cdot;n,\mathtt{Z})$ where $\varsigma(\cdot;n,\mathtt{Z})$
is an arbitrary categorical distribution taking values in $\llbracket0,n\rrbracket$
and with probabilities depending on $z_{0},\ldots,z_{n}$ only. For
$k\in\mathbb{N}$, let $\sigma_{k}:\mathsf{Z}^{\mathbb{N}}\to\mathsf{Z}^{\mathbb{N}}$
be the swapping function such that, with $\mathtt{Z}'=\sigma_{k}(\mathtt{Z})$,
$z_{0}'=z_{k}$, $z_{k}'=z_{0}$ and $z'_{j}=z_{j}$ for $j\notin\{0,k\}$.
Clearly, $\sigma_{k}$ is an involution and we consider $\phi(n,\mathtt{Z},k):=(\tau\circ\sigma_{k}(\mathtt{Z}),\sigma_{k}(\mathtt{Z}),k)$,
which is an involution since
\[
\phi\circ\phi(n,\mathtt{Z},k)=(\tau\circ\sigma_{k}\circ\sigma_{k}(\mathtt{Z}),\sigma_{k}\circ\sigma_{k}(\mathtt{Z}),k)=(n,\mathtt{Z},k).
\]
Letting $\mathcal{\nu^{\otimes\infty}}$ denote the probability measure
associated with an infinite sequence of independent $\nu$-distributed
random variables, we have for $n\in\mathbb{N}$ and $k\in\llbracket0,n\rrbracket$,
\begin{align*}
\mu(n,{\rm d}\mathtt{Z},k) & =\varpi(z_{0})\nu^{\otimes\infty}({\rm d}\mathtt{Z})\varsigma(k;n,\mathtt{Z})s_{n}(\mathtt{Z})\prod_{i=1}^{n-1}\left\{ 1-s_{i}(\mathtt{Z})\right\} ,
\end{align*}
where we note that $z_{0}$ is marginally distributed according to
$\pi$, which together with $\phi$ above defines the kernel outlined
in Alg.~\ref{alg:adaptive-IMH-1}. One can check that the acceptance
ratio is, with $(n',\mathtt{Z}',k')=\phi(n,\mathtt{Z},k)=(\tau\circ\sigma_{k}(\mathtt{Z}),\sigma_{k}(\mathtt{Z}),k)$
and for $\xi\in S=\{(n,\mathtt{Z},k):n=\tau(\mathtt{Z}),0\leq k\leq\tau(\mathtt{Z})\wedge\tau\circ\sigma_{k}(\mathtt{Z})\}$,
\[
r(\xi)=\frac{\varpi(z'_{0})\varsigma(k';n',\mathtt{Z}')}{\varpi(z_{0})\varsigma(k;n,\mathtt{Z})}=\frac{\varpi(z_{k})\varsigma(k;n',\mathtt{Z}')}{\varpi(z_{0})\varsigma(k;n,\mathtt{Z})}.
\]
Although theoretically convenient, the algorithm described in Alg.
\ref{alg:adaptive-IMH-1} is not very practical due to the requirement
to sample the infinite-dimensional $\mathtt{Z}_{-0}$. However the
definitions of $(s_{n})_{n\in\mathbb{N_{*}}}$, $\tau(\mathtt{Z})$,
$k$ and $\sigma_{k}(\mathtt{Z})$ are such that $\mathtt{Z}$ is
not required in its entirety to simulate from the kernel, which can
be achieved with finite computation provided $\tau(\mathtt{Z})<\infty$.
This is described in Alg. \ref{alg:adaptive-IMH-2}, with a slight
abuse of notation since $s_{k}$ and $\sigma_{k}$ are defined on
$\mathsf{Z}^{\mathbb{N}}$. We will refer to this as a ``lazy''
implementation or simulation and adopt the presentation in Alg.~\ref{alg:adaptive-IMH-1}
for brevity. In particular, the explicit lazy implementation in Alg.~\ref{alg:adaptive-IMH-2}
involves simulating only those components of $\mathtt{Z}$ and $\mathtt{Z}'$
that are required to implement Alg.~\ref{alg:adaptive-IMH-1}, the
details of which are fairly straightforward and tend to obscure the
simplicity of the approach. Note that throughout we give the expression
for the acceptance ratio on $S$ only in order to alleviate presentation.

\begin{algorithm}
\caption{\label{alg:adaptive-IMH-1}Impractical algorithm}

\begin{enumerate}
\item Simulate (lazily) $z_{i}\overset{{\rm iid}}{\sim}\nu$, for $i\in\mathbb{N}_{*}$.
\item Set $n\leftarrow\tau(\mathtt{Z})$.
\item Simulate $k\sim\varsigma(\cdot\mid n,\mathtt{Z})$.
\item Set $\mathtt{Z}'\leftarrow\sigma_{k}(\mathtt{Z})$ and $n'\leftarrow\tau\circ\sigma_{k}(\mathtt{Z})$.
\item With probability
\[
a\left(\frac{\varpi(z_{k})\varsigma(k;n',\mathtt{Z}')}{\varpi(z_{0})\varsigma(k;n,\mathtt{Z})}\right),
\]
output $z_{k}$, otherwise output $z_{0}$.
\end{enumerate}
\end{algorithm}

\begin{algorithm}
\caption{\label{alg:adaptive-IMH-2}Practical, lazy implementation}

\begin{enumerate}
\item Set $i\leftarrow1$, simulate $z_{1}\sim\nu$.
\item While $s_{i}(z_{0:i})=0$
\begin{enumerate}
\item Set $i\leftarrow i+1$.
\item Simulate $z_{n}\sim\nu$.
\end{enumerate}
\item Set $n\leftarrow i=\tau(\mathtt{Z})$.
\item Simulate $k\sim\varsigma(\cdot;n,\mathtt{Z})$ and set $z'_{0:n}=\sigma_{k}(z_{0:n})$.
\item Set $i\leftarrow1$.
\item While $s_{i}(z_{0:i}')=0$
\begin{enumerate}
\item Set $i\leftarrow i+1$.
\item If $i>n$, simulate $z'_{i}=z_{i}\sim\nu$.
\end{enumerate}
\item Set $n'\leftarrow i=\tau\circ\sigma_{k}(\mathtt{Z})=\tau(\mathtt{Z}')$.
\item With probability
\[
a\left(\frac{\varpi(z_{k})\varsigma(k;n',\mathtt{Z}')}{\varpi(z_{0})\varsigma(k;n,\mathtt{Z})}\right),
\]
output $z_{k}$, otherwise output $z_{0}$.
\end{enumerate}
\end{algorithm}

If we choose $(s_{n})_{n\in\mathbb{N}_{*}}$ such that $\tau(\mathtt{Z})=1$
and $\varsigma(1;1,\mathtt{Z})=1$ for all $\mathtt{Z}\in\mathsf{Z}^{\mathbb{N}}$
then this reduces to the independent MH (IMH), but of course in general
it allows more than one candidate sample from $\nu$ to be simulated.
We refer to the kernel in Alg.~\ref{alg:adaptive-IMH-1} as an adaptive
IMH kernel for this reason. If we let $\varsigma(k;n,\mathtt{Z})\propto\varpi(z_{k}){\bf 1}_{\llbracket0,n\rrbracket}(k)$
then we obtain
\[
r(\xi)=\frac{\sum_{i=0}^{n}\varpi(z_{i})}{\sum_{i=0}^{n'}\varpi(z_{i}')}{\bf 1}_{\llbracket0,n\wedge n'\rrbracket}(k)
\]
An important point is that $n'$ may not equal $n$ in general, requiring
in particular additional simulations when $n'>n$. By choosing $(s_{n})_{n\in\mathbb{N}}$
and $\varsigma(\cdot;n,\mathtt{Z})$ appropriately one can ensure
that $n=n'$ for all $\mathtt{Z}\in\mathsf{Z}^{\mathbb{N}}$. This
has the appeal that there is no need to perform additional simulations
once $z_{1},\ldots,z_{\tau}$ are realized, and can also mean that
the acceptance ratio is one The following lemma provides sufficient
conditions for this equality to hold.
\begin{lem}
\label{lem:stop-IMH}Let $\mathtt{Z}\in\mathsf{Z}^{\mathbb{N}}$ be
such that, with the definition in (\ref{eq:generic-def-tau}), $\tau(\mathtt{Z})<\infty$
and assume further that $(s_{k})_{k\in\mathbb{N}_{*}}$ satisfies
\begin{enumerate}
\item \label{enu:sk-nondecreasing}$k\mapsto s_{k}(\mathtt{Z})$ is non-decreasing;
\item \label{enu:IMH-lem-tau-constant}$s_{k}(\mathtt{Z})=s_{k}\circ\sigma_{l}(\mathtt{Z})$
for all $l\in\llbracket k\rrbracket$;
\end{enumerate}
Then $\tau\circ\sigma_{l}(\mathtt{Z})=\tau(\mathtt{Z})$ for $l\in\llbracket\tau(\mathtt{Z})-1\rrbracket$.
\end{lem}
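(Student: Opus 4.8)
The plan is to fix $l\in\llbracket\tau(\mathtt{Z})-1\rrbracket$, abbreviate $\tau:=\tau(\mathtt{Z})$ and $\mathtt{Z}':=\sigma_{l}(\mathtt{Z})$, and establish the two facts $s_{\tau}(\mathtt{Z}')=1$ and $s_{n}(\mathtt{Z}')=0$ for every $1\le n<\tau$. By the definition~(\ref{eq:generic-def-tau}) of the stopping time these together force $\tau(\mathtt{Z}')=\tau$, which is the claim. When $\tau=1$ the index set $\llbracket\tau-1\rrbracket$ is empty and there is nothing to prove, so one may assume $\tau\ge2$ and $1\le l\le\tau-1$; note that $\tau<\infty$ is needed precisely so that $s_{\tau}(\mathtt{Z})=1$ is a meaningful anchor.

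First I would dispatch the indices $n\ge l$ using assumption~(\ref{enu:IMH-lem-tau-constant}) alone. For such $n$ one has $l\in\llbracket n\rrbracket$, so $s_{n}(\mathtt{Z})=s_{n}\circ\sigma_{l}(\mathtt{Z})=s_{n}(\mathtt{Z}')$. Taking $n=\tau$ and using $s_{\tau}(\mathtt{Z})=1$ gives $s_{\tau}(\mathtt{Z}')=1$, while for $l\le n<\tau$ the minimality built into~(\ref{eq:generic-def-tau}) gives $s_{n}(\mathtt{Z})=0$ and hence $s_{n}(\mathtt{Z}')=0$. In particular, since $l<\tau$, this yields the value $s_{l}(\mathtt{Z}')=s_{l}(\mathtt{Z})=0$, which will serve as the seed for the remaining range.

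The range $1\le n<l$ is the crux, and here assumption~(\ref{enu:IMH-lem-tau-constant}) is of no use: when $n<l$ the swap $\sigma_{l}$ alters the zeroth coordinate seen by $s_{n}$ (replacing $z_{0}$ by $z_{l}$) without touching coordinate $l$, so $s_{n}(\mathtt{Z}')=s_{n}(z_{l},z_{1},\ldots,z_{n})$ need not coincide with $s_{n}(\mathtt{Z})$. The resolution is to invoke the monotonicity~(\ref{enu:sk-nondecreasing}), but applied along the swapped sequence $\mathtt{Z}'$: since $k\mapsto s_{k}(\mathtt{Z}')$ is non-decreasing and $s_{l}(\mathtt{Z}')=0$, every $n\le l$ satisfies $s_{n}(\mathtt{Z}')\le s_{l}(\mathtt{Z}')=0$, whence $s_{n}(\mathtt{Z}')=0$. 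Combining the two ranges gives $s_{n}(\mathtt{Z}')=0$ for all $1\le n<\tau$ together with $s_{\tau}(\mathtt{Z}')=1$, which is exactly what is required. The one delicate point is precisely this reliance on monotonicity of the stopping rule along $\mathtt{Z}'$: it is what excludes the process stopping prematurely at some $n<l$, a failure that genuinely can occur for a non-monotone criterion, and it is automatically available for the accumulating rules of interest such as $s_{n}(\mathtt{Z})=\mathbb{I}\{\sum_{i=0}^{n}\varpi(z_{i})>c\}$, whose partial sums increase for every input sequence.
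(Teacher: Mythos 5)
Your proof is correct and follows essentially the same route as the paper's: indices $n\ge l$ are handled by assumption~\ref{enu:IMH-lem-tau-constant} alone, and indices $n<l$ by monotonicity of $k\mapsto s_{k}$ applied along the swapped sequence $\sigma_{l}(\mathtt{Z})$, anchored at $s_{l}\circ\sigma_{l}(\mathtt{Z})=s_{l}(\mathtt{Z})=0$. The only cosmetic difference is that the paper phrases the argument as showing $s_{k}\circ\sigma_{l}(\mathtt{Z})=s_{k}(\mathtt{Z})$ over the split $\{l\le k\}\cup\{k<l\}$ of index pairs, whereas you fix $l$ and verify the two defining properties of the stopping time directly.
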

\begin{example}
\label{eg:aimh-stop-nprime-equals-n}For $k\in\mathbb{N}$, let $s_{k}(\mathtt{Z}):=\mathbb{I}\{\sum_{i=0}^{k}\varpi(z_{i})>c\}$
for some constant $c>0$ and let $\mathtt{Z}\in\mathsf{Z}^{\mathbb{N}}$
satisfy $\tau(\mathtt{Z})<\infty$. Clearly $k\mapsto s_{k}(\mathtt{Z})$
is non-decreasing and condition \ref{enu:sk-nondecreasing} in Lemma~\ref{lem:stop-IMH}
holds. Assume that for $n\in\mathbb{N}$, $s_{1}(\mathtt{Z})=\cdots=s_{n-1}(\mathtt{Z})=0$
while $s_{n}(\mathtt{Z})=1$. It follows that for $l\in\llbracket0,n-1\rrbracket$,
if $k\in\llbracket n-1\rrbracket$ then $s_{k}\circ\sigma_{l}(\mathtt{Z})\leq\mathbb{I}\{\sum_{i=0}^{n-1}\varpi(z_{i})>c\}=s_{n-1}(\mathtt{Z})=s_{k}(\mathtt{Z})=0$
while if $k\geq n$ then $s_{k}\circ\sigma_{l}(\mathtt{Z})\geq\mathbb{I}\{\sum_{i=0}^{n}\varpi(z_{i})>c\}=s_{n}(\mathtt{Z})=1=s_{k}(\mathtt{Z})$
so condition \ref{enu:IMH-lem-tau-constant} in Lemma~\ref{lem:stop-IMH}
holds. If we take $\varsigma(k;n,\mathtt{Z})\propto\varpi(z_{k}){\bf 1}_{\llbracket0,n-1\rrbracket}(k)$
then the conclusion of Lemma~\ref{lem:stop-IMH} holds and we obtain
for $k\in\llbracket0,n-1\rrbracket$, with $\xi=(n,\mathtt{Z},k)$
and $\phi(n,\mathtt{Z},k):=(\tau\circ\sigma_{k}(\mathtt{Z}),\sigma_{k}(\mathtt{Z}),k)$,
\[
r(\xi)=\frac{\sum_{i=0}^{n-1}\varpi(z_{i})}{\sum_{i=0}^{n-1}\varpi(z_{i}')}=\frac{\sum_{i=0}^{n-1}\varpi(z_{i})}{\sum_{i=0}^{n-1}\varpi(z{}_{i})}=1.
\]
\end{example}
In contrast, for the choice $s_{n}(\mathtt{Z})=\mathbb{I}\{{\rm ESS}(z_{0},\ldots,z_{n})>c\}$,
condition \ref{enu:sk-nondecreasing} in Lemma~\ref{lem:stop-IMH}
is not satisfied and there is no reason for $n=n'$ to hold for all
$\mathtt{Z}\in\mathsf{Z}^{\mathbb{N}}$. We expand on the idea of
Lemma~\ref{lem:stop-IMH} to validate the NUTS and stopping-time
MTM in Sections~\ref{sec:NUTS} and~\ref{sec:MTM}.

\subsection{\label{subsec:doubly-infinite-general}Doubly-infinite Markov chain
proposal and change of measure}

Here we demonstrate how one can verify that Markov chain proposals
can be used to define $\pi$-invariant Markov kernels. First we show
how to deal with a distribution $\mu$ involving a doubly-infinite
Markov chain as well as a proposal index. Then we consider the more
involved but practical scenario where the proposal index is selected
from a window of random size, adapted according to user-defined constraint
functions. A special case of this framework, and indeed the inspiration
for the generalization here, is when the Markov chain is a deterministic
dynamical system is the No U-Turn Sampler (NUTS) of \citet{hoffman2014no}.

Let $\pi$ and $\nu$ be measures on $(\mathsf{Z},\mathscr{Z})$ where
$\pi$ is a probability, $\nu\gg\pi$ and let $\varpi:={\rm d}\pi/{\rm d}\nu$.
In order to present our algorithm we require the definition of a two
sided Markov chain, from which a proposal state is chosen within a
MH kernel update.
\begin{defn}[Two-sided $(k,\pi,Q,Q^{*})$-Markov chain probability measure $\Lambda^{k}$]
Let $\pi$ be a probability measure on $\big(\mathsf{Z},\mathscr{Z}\big)$
and $Q$, $Q^{*}\colon\mathsf{Z}\times\mathscr{Z}\rightarrow[0,1]$
be transition kernels. Then for $k\in\mathbb{Z}$, denote by $\Lambda^{k}$
the probability measure on $\big(\mathsf{Z}^{\mathbb{Z}},\mathscr{Z}^{\otimes\mathbb{Z}}\big)$
associated with the Markov chain $Z$ such that $Z_{k}\sim\pi$, and
for $i\in\mathbb{N}_{*}$, $Z_{i+k}\mid\{Z_{i+k-1}=z\}\sim Q(z,\cdot)$
and $Z_{k-i}\mid\{Z_{k-i+1}=z\}\sim Q^{*}(z,\cdot)$.
\end{defn}
We define $\mathcal{Q}(z_{0},\cdot)$ to be the probability measure
for $\mathtt{Z}\sim\Lambda^{0}$ conditional on a fixed $z_{0}$.

For any $\mathtt{Z}\in\mathsf{Z}^{\mathbb{Z}}$ we let $\varsigma(\cdot;\mathtt{Z})$
be a probability distribution on $\big(\mathbb{Z},\mathscr{P}(\mathbb{Z})\big)$
and we are interested in the update outlined in Alg.~\ref{alg:MC-proposal-general},
where for $i\in\mathbb{Z}$, $\theta^{i}:\mathsf{Z}^{\mathbb{Z}}\to\mathsf{Z}^{\mathbb{Z}}$
is the shift function defined via $\theta^{i}(\mathtt{Z})_{j}=z_{i+j}$
and to ease the presentation of the algorithms, we write that one
should ``lazily'' simulate a realization of a double-infinite Markov
chain, by which we mean that only a finite number of states of the
Markov chain should be required to perform the rest of the algorithm.
The simulation of $\mathtt{Z}$ is naturally not practical and a stopping
criterion is required, while making sense of the acceptance ratio
and its expression require an additional assumption on $(\nu,Q,Q^{*})$.
These are the topics of the remainder of the subsection. 

\begin{algorithm}[H]
\caption{\label{alg:MC-proposal-general}To sample from $P_{{\rm MC}}^{{\rm (general)}}(z_{0},\cdot)$}

\begin{enumerate}
\item Lazily simulate $\mathtt{Z}\sim\mathcal{Q}(z_{0},\cdot)$.
\item Simulate $k\sim\varsigma(\cdot;\mathtt{Z})$.
\item With probability
\[
a\left(\frac{\varpi(z_{k})\varsigma\big(-k;\theta^{k}(\mathtt{Z})\big)}{\varpi(z_{0})\varsigma(k;\mathtt{Z})}\right),
\]
output $z_{k}$. Otherwise output $z_{0}$.
\end{enumerate}
\end{algorithm}

We first introduce an assumption on $(\nu,Q,Q^{*})$ justifying the
form of the acceptance ratio in full generality.
\begin{defn}[Reversible triplet $(\nu,Q,Q^{*})$]
 Let $\nu$ be a measure on $(\mathsf{Z},\mathscr{Z})$ and $Q,Q^{*}\colon\mathsf{Z}\times\mathscr{Z}\rightarrow[0,1]$
be two Markov kernels. We say that $(\nu,Q,Q^{*})$ is a reversible
triplet if for $z,z'\in\mathsf{Z}$, 
\begin{equation}
\nu({\rm d}z)Q(z,{\rm d}z')=\nu({\rm d}z')Q^{*}(z',{\rm d}z),\label{eq:Q-Qstar-time-reversal}
\end{equation}
\end{defn}
This implies in particular that $Q$ is $\nu$-invariant, whether
$\nu$ is a probability measure or not, and $Q^{*}$ is the time-reversal
of $Q$. In operator theoretic language $Q^{*}$ is the $\nu-$adjoint
of $Q$ for the inner product $\bigl\langle f,g\bigr\rangle=\int fg{\rm d}\nu$
on $L^{2}(\nu)$. Importantly for practical purposes, we observe that
(\ref{eq:Q-Qstar-time-reversal}) accommodates invertible mappings
that leave $\nu$ invariant: 
\begin{prop}
\label{prop:measure-preserving-reversal}Let $\nu$ be a measure on
$(\mathsf{Z},\mathscr{Z})$ and $\psi\colon\mathsf{Z}\rightarrow\mathsf{Z}$
be invertible and such that $\nu^{\psi}=\nu$. Then (\ref{eq:Q-Qstar-time-reversal})
holds with $Q(z,{\rm d}z')=\Psi(z,{\rm d}z'):=\delta_{\psi(z)}({\rm d}z')$
and $Q^{*}(z,{\rm d}z')=\Psi^{*}(z,{\rm d}z'):=\delta_{\psi^{-1}(z)}({\rm d}z')$.
\end{prop}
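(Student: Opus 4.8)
The plan is to verify the measure identity \eqref{eq:Q-Qstar-time-reversal} by testing both sides against an arbitrary bounded measurable $h\colon\mathsf{Z}\times\mathsf{Z}\to\mathbb{R}$, since two measures on the product space $(\mathsf{Z}\times\mathsf{Z},\mathscr{Z}\otimes\mathscr{Z})$ coincide as soon as they assign the same integral to every such $h$ (equivalently, to every product indicator $h(z,z')=\mathbf{1}_A(z)\mathbf{1}_B(z')$). Because $Q$ and $Q^{*}$ are Dirac kernels, both sides collapse to integrals over a single copy of $\mathsf{Z}$: the left-hand measure $\nu({\rm d}z)\Psi(z,{\rm d}z')$ integrates $h$ to $\int h(z,\psi(z))\,\nu({\rm d}z)$, while the right-hand measure $\nu({\rm d}z')\Psi^{*}(z',{\rm d}z)$ integrates $h$ to $\int h(\psi^{-1}(z'),z')\,\nu({\rm d}z')$.

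Next I would reduce the whole claim to the single identity $\int h(z,\psi(z))\,\nu({\rm d}z)=\int h(\psi^{-1}(z'),z')\,\nu({\rm d}z')$, which is exactly a change of variables. Setting $\tilde{h}(z'):=h(\psi^{-1}(z'),z')$ gives $\tilde{h}\circ\psi(z)=h(\psi^{-1}(\psi(z)),\psi(z))=h(z,\psi(z))$, so Theorem~\ref{thm:change-of-variables} applied to the map $\psi$ yields $\int \tilde{h}\circ\psi\,{\rm d}\nu=\int \tilde{h}\,{\rm d}\nu^{\psi}$. Invoking the hypothesis $\nu^{\psi}=\nu$ rewrites the right-hand side as $\int \tilde{h}\,{\rm d}\nu$, which is precisely the desired equality. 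Thus the argument is essentially a one-line change of variables once the two Dirac kernels have been integrated out.

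The only point requiring care — and the closest thing to an obstacle — is measurability: for $\Psi^{*}$ to be a genuine Markov kernel and for the substituted function $\tilde{h}$ to be measurable, I would note that ``$\psi$ invertible'' is to be read here as bimeasurable, i.e. $\psi^{-1}$ is measurable as well, which guarantees that $z\mapsto\psi^{-1}(z)$ and hence $\tilde{h}$ are measurable and that $\nu^{\psi}$ is well defined on all of $\mathscr{Z}$. There is no genuine analytic difficulty beyond this bookkeeping. If one prefers to avoid testing against general $h$, the identical computation restricted to product functions $h(z,z')=f(z)g(z')$ already pins down the two product measures and reproduces \eqref{eq:Q-Qstar-time-reversal} in its bilinear form $\int f(z)g(\psi(z))\,\nu({\rm d}z)=\int f(\psi^{-1}(z'))g(z')\,\nu({\rm d}z')$.
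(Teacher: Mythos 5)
Your proof is correct and is essentially the paper's own argument: integrate out the Dirac kernels, apply the change-of-variables theorem (Theorem~\ref{thm:change-of-variables}) to the map $\psi$, and invoke $\nu^{\psi}=\nu$; the paper simply carries this out directly on product test functions $f(z)g(z')$, which is the reduction you note at the end. The added remark on bimeasurability of $\psi$ is a reasonable piece of bookkeeping but does not change the substance.
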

The assumption that $(\nu,Q,Q^{*})$ is a reversible triplet implies
that for any $k\in\mathbb{Z}$, $\Lambda^{0}$ and $\Lambda^{k}$
are equivalent on a suitable restriction of $\mathsf{Z}^{\mathbb{Z}}$,
with a simple Radon--Nikodym derivative involving $\varpi$ only.
This is the property used in Alg.~\ref{alg:MC-proposal-general}
to propose that a chain $\mathtt{Z}$ distributed according to $\Lambda^{0}$
is mapped to a chain distributed according to $\Lambda^{k}$.
\begin{lem}
\label{lemma:markov-chain-change-of-measure}Let $\pi$ and $\nu$
be measures on $(\mathsf{Z},\mathscr{Z})$ where $\pi$ is a probability
and $\nu\gg\pi$ and let $\varpi:={\rm d}\pi/{\rm d}\nu$. Assume
that $(\nu,Q,Q^{*})$ is a reversible triplet. For any $k\in\mathbb{Z}$
let $\Lambda^{k}$ be the two-sided $(k,\pi,Q,Q^{*})$-Markov chain
probability measure and $S_{k}:=\{\mathtt{Z}\in\mathsf{Z}^{\mathbb{Z}}:\varpi(z_{0})\wedge\varpi(z_{k})>0\}$
. Then for any $k\in\mathbb{Z}$ and $\mathtt{Z}\in S_{k}$,
\[
\frac{{\rm d}\Lambda_{S_{k}}^{k}}{{\rm d}\Lambda_{S_{k}}^{0}}(\mathtt{Z})=\frac{\varpi(z_{k})}{\varpi(z_{0})}.
\]
\end{lem}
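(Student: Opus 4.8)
\section*{Proof proposal}

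The plan is to factor each $\Lambda^{k}$ through an auxiliary measure in which the distinguished index $k$ carries the reference measure $\nu$ rather than $\pi$, and then to show that this auxiliary measure does not depend on $k$. Concretely, for $k\in\mathbb{Z}$ let $\Lambda_{\nu}^{k}$ denote the two-sided $(k,\nu,Q,Q^{*})$-Markov chain measure, i.e. the law obtained from $\Lambda^{k}$ by replacing the marginal $Z_{k}\sim\pi$ with $Z_{k}\sim\nu$ while retaining the same forward kernel $Q$ and backward kernel $Q^{*}$. Since $\pi=\varpi\cdot\nu$ and the two laws differ only in the marginal at index $k$, one has $\Lambda^{k}\ll\Lambda_{\nu}^{k}$ with ${\rm d}\Lambda^{k}/{\rm d}\Lambda_{\nu}^{k}(\mathtt{Z})=\varpi(z_{k})$. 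The crux is therefore the claim that $\Lambda_{\nu}^{k}=\Lambda_{\nu}^{0}$ for every $k$; call this common measure $\Lambda_{\nu}$.

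To prove the claim I would work with the finite-dimensional marginals on a window $\llbracket-m,n\rrbracket$ containing both $0$ and $k$, and use the reversibility relation (\ref{eq:Q-Qstar-time-reversal}), $\nu({\rm d}z)Q(z,{\rm d}z')=\nu({\rm d}z')Q^{*}(z',{\rm d}z)$, to ``flip'' each backward $Q^{*}$-step into a forward $Q$-step, thereby sliding the $\nu$-root to the left endpoint. Taking $k\geq0$ for definiteness, applying the relation repeatedly to the backward factors of $\Lambda_{\nu}^{k}$ telescopes to
\[
\nu({\rm d}z_{k})\prod_{i=1}^{m+k}Q^{*}(z_{k-i+1},{\rm d}z_{k-i})=\nu({\rm d}z_{-m})\prod_{j=-m+1}^{k}Q(z_{j-1},{\rm d}z_{j}),
\]
so that the window marginal of $\Lambda_{\nu}^{k}$ collapses to $\nu({\rm d}z_{-m})\prod_{j=-m+1}^{n}Q(z_{j-1},{\rm d}z_{j})$, an expression with no dependence on $k$. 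The same computation with root $k=0$ yields the identical window marginal, and since $m,n$ are arbitrary the two measures agree on all cylinder sets and hence coincide on $\mathscr{Z}^{\otimes\mathbb{Z}}$ by uniqueness of a measure determined by its finite-dimensional distributions (the case $k<0$ is symmetric).

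Given the claim we have $\Lambda^{0}\ll\Lambda_{\nu}$ and $\Lambda^{k}\ll\Lambda_{\nu}$ with ${\rm d}\Lambda^{0}/{\rm d}\Lambda_{\nu}(\mathtt{Z})=\varpi(z_{0})$ and ${\rm d}\Lambda^{k}/{\rm d}\Lambda_{\nu}(\mathtt{Z})=\varpi(z_{k})$. Restricting to $S_{k}=\{\mathtt{Z}:\varpi(z_{0})\wedge\varpi(z_{k})>0\}$, the densities of $\Lambda_{S_{k}}^{0}$ and $\Lambda_{S_{k}}^{k}$ with respect to $\Lambda_{\nu}$ are $\mathbf{1}_{S_{k}}\varpi(z_{0})$ and $\mathbf{1}_{S_{k}}\varpi(z_{k})$ respectively. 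On $S_{k}$ the factor $\varpi(z_{0})$ is strictly positive, so $\Lambda_{S_{k}}^{k}\ll\Lambda_{S_{k}}^{0}$ and the Radon--Nikodym derivative is the pointwise ratio $\varpi(z_{k})/\varpi(z_{0})$, as required. The main obstacle is the middle step: making the telescoping flip rigorous for a doubly-infinite chain, which means phrasing everything in terms of finite cylinder marginals, tracking the index bookkeeping in the repeated application of (\ref{eq:Q-Qstar-time-reversal}), and then invoking a uniqueness argument to pass from cylinders to the full product $\sigma$-algebra.
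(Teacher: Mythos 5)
Your telescoping step is exactly the paper's core computation: the reversible-triplet identity (\ref{eq:Q-Qstar-time-reversal}) is applied repeatedly to the backward factors of the finite-dimensional distribution so that the $\nu$-root slides away from index $k$, making the window marginal root-independent (the paper slides the root to index $0$, one step at a time via ${\rm sgn}(k)$, rather than to the left endpoint; this is immaterial). Where you genuinely diverge is the passage to the doubly-infinite product space. The paper never leaves the finite-dimensional world: it derives $\Lambda_{n}^{k}({\rm d}\mathtt{Z})=\{\varpi(z_{k})/\varpi(z_{0})\}\Lambda_{n}^{0}({\rm d}\mathtt{Z})$ on $S_{k}$ for every $n\geq|k|$, with the $\nu$-rooted expression appearing only as an intermediate formula for a measure on a finite product $\mathsf{Z}^{2n+1}$, and then invokes \citet[Theorem~4.3.5]{durrett2019probability} (a generalization of \citet{Engelbert1980}), a martingale-type result on Radon--Nikodym derivatives along a filtration, to convert the constant-in-$n$ finite-dimensional densities into the stated density on $\mathscr{Z}^{\otimes\mathbb{Z}}$. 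You instead promote the $\nu$-rooted object to an actual measure $\Lambda_{\nu}^{k}$ on $\mathsf{Z}^{\mathbb{Z}}$, prove root-invariance by a cylinder-class uniqueness argument, and divide densities.

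This is where the gap lies: the existence of $\Lambda_{\nu}^{k}$ is asserted, not proved, and it does not follow from the tools you have. The lemma allows $\nu$ to be an arbitrary dominating measure---in the paper's motivating applications (HMC/NUTS) it is Lebesgue measure---so your finite-dimensional family $\nu({\rm d}z_{-m})\prod_{j=-m+1}^{n}Q(z_{j-1},{\rm d}z_{j})$ consists of \emph{infinite} measures, and Kolmogorov's extension theorem (which is what gives existence of $\Lambda^{k}$, rooted at the probability $\pi$) does not apply; projective limits of consistent families of infinite measures fail to exist in general. To repair this you would need $\sigma$-finiteness of $\nu$: write $\nu=\sum_{j}\nu_{E_{j}}$ with $\nu(E_{j})<\infty$, construct each rooted chain law by Kolmogorov, and sum; the cylinder sets $\{\mathtt{Z}:z_{k}\in E_{j}\}$ then also supply the exhausting sequence of finite-measure sets without which your appeal to ``uniqueness of a measure determined by its finite-dimensional distributions'' is not valid for non-finite measures (that uniqueness is a $\pi$--$\lambda$ argument requiring such an exhaustion). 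The remaining steps---${\rm d}\Lambda^{k}/{\rm d}\Lambda_{\nu}^{k}=\varpi(z_{k})$, itself a small cylinder argument, and the division on $S_{k}$ where $\varpi(z_{0})>0$---are fine once $\Lambda_{\nu}$ exists and is $\sigma$-finite. So your route is salvageable, but it needs strictly more measure-theoretic infrastructure than the paper's, whose choice to stay finite-dimensional and cite the filtration theorem avoids constructing any infinite reference measure on sequence space.
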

We can now establish correctness of Alg.~\ref{alg:MC-proposal-general}.
\begin{cor}
For any $\mathtt{Z}\in\mathsf{Z}^{\mathbb{Z}}$ let $\varsigma(\cdot;\mathtt{Z})$
be a probability distribution on $\big(\mathbb{Z},\mathscr{P}(\mathbb{Z})\big)$,
$\xi:=(\mathtt{Z},k)\in\mathsf{Z}^{\mathbb{Z}}\times\mathbb{Z}$,
$\mu({\rm d}\mathtt{Z},k)=\Lambda^{0}({\rm d}\mathtt{Z})\varsigma(k;\mathtt{Z})$
and define the involution $\phi(\mathtt{Z},k):=(\theta^{k}(\mathtt{Z}),-k)$.
Then for $\xi\in S=\left\{ (\mathtt{Z},k):\varpi(z_{0})\wedge\varpi(z_{k})\wedge\varsigma(k;\mathtt{Z})\wedge\varsigma\big(-k;\theta^{k}(\mathtt{Z})\big)>0\right\} $,
we have
\[
\frac{{\rm d}\mu_{S}^{\phi}}{{\rm d}\mu_{S}}(\xi)=\frac{\varpi(z_{k})}{\varpi(z_{0})}\frac{\varsigma\big(-k;\theta^{k}(\mathtt{Z})\big)}{\varsigma(k;\mathtt{Z})},
\]
and apply Theorem~\ref{thm:invo-rev}.
\end{cor}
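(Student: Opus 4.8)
The plan is to read off the Radon--Nikodym derivative from Proposition~\ref{prop:r-density}, using the reference measure suggested in the remark following it, namely ``the law of a Markov chain''. Concretely, I would take $\lambda({\rm d}\mathtt{Z},k):=\Lambda^{\nu}({\rm d}\mathtt{Z})$ (with counting measure on the index $k\in\mathbb{Z}$), where $\Lambda^{\nu}$ is the law of the two-sided chain built from $(Q,Q^{*})$ but anchored by $\nu$ (i.e. with $Z_{0}\sim\nu$) rather than by $\pi$. The point of $\Lambda^{\nu}$ is that the reversible-triplet identity $\nu({\rm d}z)Q(z,{\rm d}z')=\nu({\rm d}z')Q^{*}(z',{\rm d}z)$ makes $\nu$ a stationary measure, so $\Lambda^{\nu}$ is shift-invariant: $(\Lambda^{\nu})^{\theta^{k}}=\Lambda^{\nu}$ for every $k$. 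First I would use this, with the change-of-variables formula (Theorem~\ref{thm:change-of-variables}) applied to $\phi(\mathtt{Z},k)=(\theta^{k}(\mathtt{Z}),-k)$, to compute $\lambda^{\phi}$: summing over $k$ and substituting $\mathtt{W}=\theta^{k}(\mathtt{Z})$ turns $\Lambda^{\nu}({\rm d}\mathtt{Z})$ into $\Lambda^{\nu}({\rm d}\mathtt{W})$ and $-k$ back into a free index, yielding $\lambda^{\phi}=\lambda$. In particular $\lambda\equiv\lambda^{\phi}$, ${\rm d}\lambda^{\phi}/{\rm d}\lambda\equiv1$, and $\lambda\gg\mu$ is clear.

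Next I would identify $\rho={\rm d}\mu/{\rm d}\lambda$. Since $\Lambda^{0}$ and $\Lambda^{\nu}$ share the transition kernels $Q,Q^{*}$ and differ only in the index-$0$ marginal ($\pi$ versus $\nu$), one has ${\rm d}\Lambda^{0}/{\rm d}\Lambda^{\nu}(\mathtt{Z})=\varpi(z_{0})$, whence $\rho(\mathtt{Z},k)=\varpi(z_{0})\,\varsigma(k;\mathtt{Z})$. Using $(\theta^{k}\mathtt{Z})_{0}=z_{k}$ gives $\rho\circ\phi(\mathtt{Z},k)=\rho(\theta^{k}\mathtt{Z},-k)=\varpi(z_{k})\,\varsigma\big(-k;\theta^{k}(\mathtt{Z})\big)$. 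Proposition~\ref{prop:r-density} then delivers at once both the set $S=\{\rho\wedge\rho\circ\phi>0\}$, which unwinds to exactly the stated $S$, and the ratio
\[
\frac{{\rm d}\mu_{S}^{\phi}}{{\rm d}\mu_{S}}(\mathtt{Z},k)=\frac{\rho\circ\phi}{\rho}(\mathtt{Z},k)=\frac{\varpi(z_{k})}{\varpi(z_{0})}\frac{\varsigma\big(-k;\theta^{k}(\mathtt{Z})\big)}{\varsigma(k;\mathtt{Z})},
\]
as claimed. Feeding this $r$ into Theorem~\ref{thm:invo-rev} produces the $\mu$-reversible kernel $\Pi$, and since $z_{0}\sim\pi$ is the $\xi_{0}$-marginal of $\mu$, Proposition~\ref{cor:rev-coordinate} then yields $\pi$-reversibility of the marginal kernel $P_{{\rm MC}}^{{\rm (general)}}$ of Alg.~\ref{alg:MC-proposal-general}.

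The main obstacle is entirely in the first paragraph: making $\Lambda^{\nu}$ and its shift-invariance rigorous. When $\nu$ is a genuine probability this is immediate, but in general $\nu$ is only $\sigma$-finite, so one must verify that the two-sided $\nu$-anchored chain defines a $\sigma$-finite measure on $\mathsf{Z}^{\mathbb{Z}}$ and that stationarity survives; this is precisely where the reversible-triplet hypothesis is indispensable, and it reuses the same consistency machinery underlying Lemma~\ref{lemma:markov-chain-change-of-measure}. A self-contained alternative that sidesteps constructing $\Lambda^{\nu}$ is to compute $\mu^{\phi}$ directly: the change of variables together with the shift identity $(\Lambda^{0})^{\theta^{k}}=\Lambda^{-k}$ gives $\mu^{\phi}({\rm d}\mathtt{Z},k)=\varsigma\big(-k;\theta^{k}(\mathtt{Z})\big)\,\Lambda^{k}({\rm d}\mathtt{Z})$; since $\phi(S)=S$ one has $\mu_{S}^{\phi}=(\mu^{\phi})_{S}$, and dividing by $\mu({\rm d}\mathtt{Z},k)=\varsigma(k;\mathtt{Z})\,\Lambda^{0}({\rm d}\mathtt{Z})$ on $S$ while invoking Lemma~\ref{lemma:markov-chain-change-of-measure} for ${\rm d}\Lambda^{k}_{S_{k}}/{\rm d}\Lambda^{0}_{S_{k}}=\varpi(z_{k})/\varpi(z_{0})$ reproduces the same ratio using only results already stated.
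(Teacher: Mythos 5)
Your proposal is correct, and your primary route is genuinely different from the paper's. The paper gives no separate proof of this corollary: it is presented as an immediate consequence of Lemma~\ref{lemma:markov-chain-change-of-measure}, the intended derivation being to compute $\mu^{\phi}$ directly by change of variables, using $(\Lambda^{0})^{\theta^{k}}=\Lambda^{-k}$ to get $\mu^{\phi}({\rm d}\mathtt{Z},k)=\varsigma\big(-k;\theta^{k}(\mathtt{Z})\big)\Lambda^{k}({\rm d}\mathtt{Z})$, and then to divide by $\mu({\rm d}\mathtt{Z},k)=\varsigma(k;\mathtt{Z})\Lambda^{0}({\rm d}\mathtt{Z})$ on $S$ via the lemma's identity ${\rm d}\Lambda_{S_{k}}^{k}/{\rm d}\Lambda_{S_{k}}^{0}=\varpi(z_{k})/\varpi(z_{0})$ --- which is exactly the ``self-contained alternative'' in your final paragraph. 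Your primary argument instead manufactures a $\phi$-invariant reference measure $\lambda=\Lambda^{\nu}\otimes(\text{counting measure on }\mathbb{Z})$ so that Proposition~\ref{prop:r-density} applies verbatim with $\rho(\mathtt{Z},k)=\varpi(z_{0})\varsigma(k;\mathtt{Z})$; this is conceptually attractive because it realizes the remark following that proposition (taking $\lambda$ to be ``the law of a Markov chain'') and makes the corollary a literal instance of the generic density machinery, with $S$ and the ratio read off in one line. The trade-off is that the paper's route only ever manipulates \emph{probability} measures on $\mathsf{Z}^{\mathbb{Z}}$, so Kolmogorov extension and the cited result of Durrett apply off the shelf, whereas your route must construct a shift-invariant $\sigma$-finite $\Lambda^{\nu}$ when $\nu$ is not finite --- the obstacle you correctly flag. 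That gap is fillable with the tools already in play: define $\Lambda^{\nu}(A):=\int\nu({\rm d}z_{0})\mathcal{Q}(z_{0},A)$, note it is $\sigma$-finite because the cylinders $\{\mathtt{Z}:z_{0}\in E_{n}\}$ with $\nu(E_{n})<\infty$ exhaust the space, observe ${\rm d}\Lambda^{0}/{\rm d}\Lambda^{\nu}=\varpi(z_{0})$ directly from $\pi=\varpi\cdot\nu$, and obtain shift invariance by re-anchoring the finite-dimensional cylinder measures with the reversible-triplet identity (the same computation as in the lemma's proof) together with a $\pi$-system uniqueness argument for $\sigma$-finite measures. So both routes stand; yours buys a cleaner conceptual statement at the price of a more delicate construction of the reference measure, while the paper's stays within probability measures at the price of a bespoke computation of $\mu^{\phi}$ and of restrictions to the sets $S_{k}$.
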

Alg.~\ref{alg:MC-proposal-general} is in general not practical due
to the requirement of simulation from $\mathcal{Q}(z_{0},\cdot)$,
a prerequisite to sample from $\varsigma(\cdot;\mathtt{Z})$. Key
to this is to make the dependence of $\varsigma(\cdot;\mathtt{Z})$
on $\mathtt{Z}\in\mathsf{Z}^{\mathbb{Z}}$ ``finite'', that is dependent
on a finite number of coordinates of $\mathtt{Z}$ in order to ensure
a finite amount of computation. Numerous options are possible and
we outline two here. The first one is purely deterministic.
\begin{example}
\label{eg:two-deterministic-windows}Let $\tau\in\mathbb{N}$ and
assume that for any $\mathtt{Z}\in\mathsf{Z}^{\mathbb{Z}}$ the probability
$\varsigma(\cdot;\mathtt{Z})$ is entirely determined by the $2\tau+1$
states $z_{-\tau},\ldots,z_{\tau}$ and of support $\llbracket-\tau,\tau\rrbracket$.
In this case simulating $k\sim\varsigma(\cdot;\mathtt{Z})$ only requires
simulation of this subsequence. However in order to compute the acceptance
ratio it is required to simulate what is unrealized in the subsequence
$z_{k-\tau},\ldots,z_{k+\tau}$, that is $z_{i}$ for $i\in\llbracket k-\tau,k+\tau\rrbracket\cap\llbracket-\tau,\tau\rrbracket^{\complement}$.
An example for $\varsigma(\cdot;\mathtt{Z})$ is $\varsigma(k;\mathtt{Z})\propto{\bf 1}_{\llbracket-\tau,\tau\rrbracket}(k)\varpi(z_{k})$,
in which case the acceptance ratio is $\sum_{i=k-\tau}^{k+\tau}\varpi(z_{i})/\sum_{i=-\tau}^{\tau}\varpi(z_{i})$
on $S$.
\end{example}
The above example, in the context of HMC, gives a simple version of
what is described in \citet{neal1994improved}, which can of course
be embellished in various ways. It is also possible to adapt $\tau$
to the realization $\mathtt{Z}$.
\begin{example}
\label{eg:two-deterministic-windows-random}It is possible to make
$\tau$ a function $\mathtt{Z}\mapsto\tau(\mathtt{Z})\in\mathbb{N}$
in Example~\ref{eg:two-deterministic-windows}, more precisely a
stopping time adapted to sequences of the form $z_{-i},z_{-i+1},\ldots,z_{0},\ldots,z_{i-1},z_{i}$
such that with $n=\tau(\mathtt{Z})$, sampling $k\sim\varsigma(\cdot;\mathtt{Z})$
is entirely determined by $z_{-n},\ldots,z_{0},\ldots,z_{n}$. This
leads to the same need for additional simulation i.e. $z_{i}$ for
$i\in\llbracket k-\tau\circ\theta^{k}(\mathtt{Z}),k+\tau\circ\theta^{k}(\mathtt{Z})\rrbracket\cap\llbracket-\tau(\mathtt{Z}),\tau(\mathtt{Z})\rrbracket^{\complement}$
where we notice the need to determine the value of the stopping time
value for the sequence $\theta^{k}(\mathtt{Z})$, also required for
the computation of the acceptance ratio $\sum_{i=k-\tau\circ\theta^{k}(\mathtt{Z})}^{k+\tau\circ\theta^{k}(\mathtt{Z})}\varpi(z_{i})/\sum_{i=-\tau(\mathtt{Z})}^{\tau(\mathtt{Z})}\varpi(z_{i})$
on $S$, for the choice $\varsigma(k;\mathtt{Z})\propto{\bf 1}_{\llbracket-\tau(\mathtt{Z}),\tau(\mathtt{Z})\rrbracket}(k)\varpi(z_{k})$.
\end{example}
In the next section we explore a general technique of ensuring that
both windows of states coincide, therefore leading to simplified algorithms.

\begin{rem}
One could considerably weaken the condition (\ref{eq:Q-Qstar-time-reversal})
in Lemma~\ref{lemma:markov-chain-change-of-measure} to 
\begin{equation}
\nu({\rm d}z)Q(z,{\rm d}z')=\nu^{*}({\rm d}z')Q^{*}(z',{\rm d}z),\label{eq:Q-star-gen}
\end{equation}
where $\nu$ and $\nu^{*}$ are equivalent but not necessarily equal,
at the expense of simplicity. In this case, we obtain for $\mathtt{Z}\in S=\{\mathtt{Z}:\varpi(z_{0})\wedge\varpi(z_{k})>0\}$,
\[
\frac{{\rm d}\Lambda_{S}^{k}}{{\rm d}\Lambda_{S}^{0}}(\mathtt{Z})=\frac{\varpi(z_{k})}{\varpi(z_{0})}F_{k}(\mathtt{Z}),
\]
where 
\[
F_{k}(\mathtt{Z})=\begin{cases}
\prod_{i=1}^{k}\frac{{\rm d}\nu}{{\rm d}\nu^{*}}(z_{i}) & k>0,\\
\prod_{i=k+1}^{0}\frac{{\rm d}\nu^{*}}{{\rm d}\nu}(z_{i}) & k<0,\\
1 & k=0.
\end{cases}
\]
When $\nu=\nu^{\psi}$, that is $\psi$ is $\nu-$preserving, then
${\rm d}\nu^{\psi}/{\rm d}\nu=1$. The generality of (\ref{eq:Q-star-gen})
is natural in the context of deterministic, invertible maps $\psi$
that are not measure preserving but such that $\nu^{\psi}\equiv\nu$.
In particular, the analogue of Proposition~\ref{prop:measure-preserving-reversal}
holds with $\nu^{*}=\nu^{\psi}$. 
\end{rem}
\begin{lem}
\label{lem:non-measure-preserve-map}Let $\psi\colon\mathsf{Z}\rightarrow\mathsf{Z}$
be invertible and such that $\nu^{\psi}\equiv\nu$. Then with $\Psi(z,{\rm d}z')=\delta_{\psi(z)}({\rm d}z')$
and $\Psi^{*}(z,{\rm d}z')=\delta_{\psi^{-1}(z)}({\rm d}z')$ then
with $\nu^{*}=\nu^{\psi}$,
\[
\nu({\rm d}z)\Psi(z,{\rm d}z')=\nu^{*}({\rm d}z')\Psi^{*}(z',{\rm d}z).
\]
\end{lem}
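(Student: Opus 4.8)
The plan is to establish the claimed identity of measures on the product space $(\mathsf{Z}\times\mathsf{Z},\mathscr{Z}\otimes\mathscr{Z})$ by testing both sides against an arbitrary non-negative measurable function $f\colon\mathsf{Z}\times\mathsf{Z}\to[0,\infty)$, since two measures that assign the same value to $\int f$ for every such $f$ coincide (specialize to indicators). First I would compute the left-hand integral. Because $\Psi(z,{\rm d}z')=\delta_{\psi(z)}({\rm d}z')$, integrating out $z'$ collapses the inner integral and yields
\[
\int f(z,z')\,\nu({\rm d}z)\,\Psi(z,{\rm d}z')=\int f\big(z,\psi(z)\big)\,\nu({\rm d}z).
\]
Symmetrically, using $\Psi^{*}(z',{\rm d}z)=\delta_{\psi^{-1}(z')}({\rm d}z)$ to integrate out $z$, the right-hand side becomes
\[
\int f(z,z')\,\nu^{*}({\rm d}z')\,\Psi^{*}(z',{\rm d}z)=\int f\big(\psi^{-1}(z'),z'\big)\,\nu^{*}({\rm d}z').
\]

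The crux is then to match these two single integrals using $\nu^{*}=\nu^{\psi}$. I would apply the change-of-variables formula (Theorem~\ref{thm:change-of-variables}, in its non-negative form) to the pushforward $\nu^{\psi}$ with the test function $g(z'):=f\big(\psi^{-1}(z'),z'\big)$. By Definition~\ref{def:pushforward} together with (\ref{eq:change-of-variable}),
\[
\int g(z')\,\nu^{\psi}({\rm d}z')=\int g\circ\psi(z)\,\nu({\rm d}z),
\]
and since $g\circ\psi(z)=f\big(\psi^{-1}(\psi(z)),\psi(z)\big)=f\big(z,\psi(z)\big)$ by invertibility of $\psi$, the right-hand integral equals precisely the left-hand integral computed above. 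This proves the identity for every non-negative measurable $f$, hence the equality of the two measures.

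I anticipate the only delicate point to be measurability bookkeeping rather than any genuine obstacle: writing $\Psi^{*}$ as a kernel already presupposes $\psi^{-1}$ is measurable, so \emph{invertible} must be read as \emph{bimeasurable bijection}, which in turn guarantees that $g=f\circ(\psi^{-1},{\rm Id})$ is measurable and that $\nu^{\psi}$ is a bona fide measure to which the change-of-variables formula applies. It is worth noting that the equivalence hypothesis $\nu^{\psi}\equiv\nu$ is not actually used in deriving this identity: the displayed equality holds verbatim with $\nu^{*}=\nu^{\psi}$ whatever $\psi$ does to null sets, the equivalence being needed only so that the Radon--Nikodym derivatives ${\rm d}\nu/{\rm d}\nu^{*}$ appearing in the surrounding discussion are well defined.
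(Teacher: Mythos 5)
Your proof is correct and follows essentially the same route as the paper's: both reduce the kernel identity to the change-of-variables formula for the pushforward $\nu^{\psi}$, the only cosmetic difference being that you test against general non-negative functions on $\mathsf{Z}\times\mathsf{Z}$ while the paper tests against product functions $f(z)g(z')$. Your closing observations (that invertibility must be read as bimeasurable, and that the equivalence $\nu^{\psi}\equiv\nu$ is not needed for this identity itself) are also accurate.
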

In the specific case that $\nu$ is the Lebesgue measure and $\psi^{-1}$
is a diffeomorphism, then
\[
\frac{{\rm d}\nu^{\psi}}{{\rm d}\nu}(z)=\left|{\rm det}(\psi^{-1})'(z)\right|
\]
is the Jacobian.

\subsection{Doubly-infinite Markov chain proposal and coinciding windows}

In Examples~\ref{eg:two-deterministic-windows} and \ref{eg:two-deterministic-windows-random},
the two windows around $z_{0}$ and $z_{k}$ are typically different
when $k\neq0$. We now explain how to devise an instance of the framework
where the windows around $z_{0}$ and $z_{k}$ are identical by construction.
The main idea consists of introducing an auxiliary variable $\ell$
that can be thought of as determining the left index of the realized
window. To be precise, let $m\in\mathbb{N}$ be the fixed size of
the window to be realized and $\xi:=(\mathtt{Z},\ell,k)\in\mathsf{Z}^{\mathbb{Z}}\times\mathbb{N}\times\mathbb{N}$
where $\ell\sim{\rm Uniform}(\llbracket0,m-1\rrbracket)$ and $k\sim\varsigma(k;\ell,\mathtt{Z})=\varsigma(k;\ell,z_{-\ell},\ldots,z_{r})$
with $r:=m-1-\ell$, that is here $\mu({\rm d}\xi):=\Lambda_{0}({\rm d}\mathtt{Z})\varsigma(k;\ell,\mathtt{Z})\mathbb{I}\big\{\ell\in\llbracket0,m-1\rrbracket\big\}/m$.
Now define the involution
\[
\phi(\xi)=(\theta^{k}(\mathtt{Z}),\ell+k,-k),
\]
then, observing that by construction $\theta^{k}(\mathtt{Z})_{-(\ell+k):(r-k)}=(z_{-\ell},\ldots,z_{r})$,
we obtain the acceptance ratio 
\[
r(\mathtt{Z},\ell,k)=\frac{\varpi(z_{k})\varsigma(-k;\ell+k,z_{-\ell},\ldots,z_{r})}{\varpi(z_{0})\varsigma(k;\ell,z_{-\ell},\ldots,z_{r})},
\]
on $S=\{\xi\colon\varpi(z_{k})\wedge\varpi(z_{0})\wedge\varsigma(-k;\ell+k,z_{-\ell},\ldots,z_{r})\wedge\varsigma(k;\ell,z_{-\ell},\ldots,z_{r})>0\}$.
For example, if $\varsigma(k;\ell,z_{-\ell},\ldots,z_{r})\propto{\bf 1}_{\llbracket-\ell,r\rrbracket}(k)\varpi(z_{k})$
then the acceptance ratio is $1$ for all $k\in\llbracket-\ell,r\rrbracket$
such that $\varsigma(k;\ell,z_{-\ell},\ldots,z_{r})>0$. The resulting
algorithm is presented in Alg.~\ref{alg:MC-proposal-general-coinciding window}.

\begin{algorithm}[H]
\caption{\label{alg:MC-proposal-general-coinciding window}To sample from $P_{{\rm MC}}^{{\rm (general)}}(z_{0},\cdot)$}

\begin{enumerate}
\item Lazily simulate $\mathtt{Z}\sim\mathcal{Q}(z_{0},\cdot)$.
\item Simulate $\ell\sim{\rm Uniform}(\llbracket0,m-1\rrbracket)$ and $k\sim\varsigma(k;\ell,z_{-\ell},\ldots,z_{r})$.
\item With probability
\[
a\left(\frac{\varpi(z_{k})\varsigma(-k;\ell+k,z_{-\ell},\ldots,z_{r})}{\varpi(z_{0})\varsigma(k;\ell,z_{-\ell},\ldots,z_{r})}\right),
\]
output $z_{k}$. Otherwise output $z_{0}$.
\end{enumerate}
\end{algorithm}

In order to introduce NUTS-like kernels, it is helpful at this point
to consider the case where $m=2^{n}$ for some $n\in\mathbb{N}$ and
we shall reparameterize $\ell$ as a sequence of $n$ bits. That is,
we define $b\in\{0,1\}^{n}$ to be a sequence of independent ${\rm Bernoulli}(1/2)$
random variates and write $\ell=\sum_{j=1}^{n}b_{j}2^{j-1}$, so that
the distribution of $\ell$ is indeed ${\rm Uniform}(\llbracket0,m-1\rrbracket)$.
In order to specify the involution in this reparameterization we define
$\beta:\llbracket0,2^{n-1}\rrbracket\to\{0,1\}^{n}$ to be the function
that computes the ``reversed'' binary representation of its input
with $n$ bits, e.g. $\beta(13)=(1,0,1,1,0)$, which has the property
that $\beta\circ\ell(b)=b$. Finally, we specify $\phi(\mathtt{Z},b,k)=(\theta^{k}(\mathtt{Z}),\beta(\ell(b)+k),-k)$.
The intuition is that given $\mathtt{Z}$, the binary string $b$
defines a particular window $z_{-\ell},\ldots,z_{r}$ around $z_{0}$
and for a given $k\in\llbracket-\ell,r\rrbracket$, there is a corresponding
binary string $\beta\big(\ell(b)+k\big)$ that defines the same window,
but around $z_{k}$.

\subsection{NUTS-like kernels}

Let $\pi$ and $\nu$ be measures on $(\mathsf{Z},\mathscr{Z})$ where
$\pi$ is a probability and $\nu\gg\pi$ and let $\varpi:={\rm d}\pi/{\rm d}\nu$. 

\begin{algorithm}
\caption{\label{alg:NUTS}To sample from $P_{{\rm MC}}^{({\rm NUTS})}(z_{0},\cdot)$}

NUTS-like algorithm
\begin{enumerate}
\item Lazily simulate $\mathtt{Z}\sim\mathcal{Q}(z_{0},\cdot)$ and $b$.
\item \label{enu:NUTS-construct-window}Sample $(\ell,r)$:
\begin{enumerate}
\item Set $n\leftarrow0$ and $\ell_{0}\leftarrow r_{0}\leftarrow0$.
\item Set $n\leftarrow n+1$.
\item Set $\ell_{n}\leftarrow\ell_{n-1}+b_{n}2^{n-1}$ and $r_{n}\leftarrow r_{n-1}+(1-b_{n})2^{n-1}$.
\item If $s_{n}(\mathtt{Z},b)=0$, go to 2(b)
\item Set $\ell\leftarrow\ell_{n-1}$, $r\leftarrow r_{n-1}$.
\end{enumerate}
\item Sample $k\sim\varsigma(\cdot\mid\mathtt{Z},\ell,r)$.
\item With probability
\[
a\left(\frac{\varpi(z_{k})\varsigma(-k\mid\theta^{k}(\mathtt{Z}),\ell+k,r-k)}{\varpi(z_{0})\varsigma(k\mid\mathtt{Z},\ell,r)}\right),
\]
output $z_{k}$. Otherwise output $z_{0}$.
\end{enumerate}
\end{algorithm}

Define for $n\in\mathbb{N}$,
\[
\ell_{n}(b):=\sum_{j=1}^{n}b_{j}2^{j-1},\qquad r_{n}(b):=2^{n}-1-\ell_{n}(b).
\]
Let $(s_{n})_{n\in\mathbb{N}}$ be a sequence of functions where $s_{n}\colon\mathsf{Z}^{\mathbb{Z}}\times\{0,1\}^{\mathbb{N}}\rightarrow\{0,1\}$
depends only on windows of states in a way that is made clear below.
For $(\mathtt{Z},b)\in\mathsf{Z}^{\mathbb{N}}\times\{0,1\}^{\mathbb{N}}$
define the stopping time 
\[
\tau(b,\mathtt{Z}):=\inf\{n\geq1:s_{n}(\mathtt{Z},b)=1\}
\]
Specifically, we require that $s_{n}(\mathtt{Z},b)$ is a function
of the vector $(z_{-\ell_{n}(b)},\ldots,z_{r_{n}(b)})$.

Note that $\ell_{n}(b)$ and $r_{n}(b)$ can be computed recursively,
which suggests step \ref{enu:NUTS-construct-window} in Alg.~\ref{alg:NUTS}
where for a sequence random variables $b=(b_{1},b_{2},\ldots)\in\{0,1\}^{\mathbb{N}}$
, $b_{i}\overset{{\rm iid}}{\sim}{\rm Bernoulli}(1/2)$ one finds
$\tau:=\tau(b,\mathtt{Z})$ and the final window is defined by $\ell=\ell_{\tau-1}(b)$
and $r=r_{\tau-1}(b)$, i.e. the most recently added states are ignored.
The reason for this will become clearer below, but is essentially
analogous to the argument in Example~\ref{eg:aimh-stop-nprime-equals-n}.
We now turn to the specification of $(s_{n})_{n\in\mathbb{N}}$. For
$(n,b)\in\mathbb{N}\times\{0,1\}^{\mathbb{N}}$ define $m_{n}(b):=2^{n-1}-1-\ell_{n}(b)$,
so that $\llbracket-\ell_{n}(b),m_{n}(b)\rrbracket$ and $\llbracket m_{n}(b)+1,r_{n}(b)\rrbracket$
are integer sequences of length $2^{n-1}$, one of which is $\llbracket\ell_{n-1}(b),r_{n-1}(b)\rrbracket$:
\begin{itemize}
\item if $b_{n}=0$ then $\ell_{n-1}(b)=\ell_{n}(b)$ and $r_{n-1}(b)=r_{n}(b)-2^{n-1}=2^{n}-1-\ell_{n}(b)-2^{n-1}=m_{n}(b)$,
\item if $b_{n}=1$ then $\ell_{n-1}(b)=\ell_{n}(b)-2^{n-1}=-(m_{n}(b)+1)$
and $r_{n-1}(b)=r_{n}(b)$
\end{itemize}
Let for $n\in\mathbb{N}_{*}$
\[
s_{n}(\mathtt{Z},b)=f_{n-1}(z_{-\ell_{n}(b)},\ldots,z_{m_{n}(b)})\vee f_{n-1}(z_{m_{n}(b)+1},\ldots,z_{r_{n}(b)}),
\]
where $\big\{ f_{k}\colon\mathsf{Z}^{2^{k}}\rightarrow\{0,1\},k\in\mathbb{N}\big\}$
is defined recursively via
\[
f_{k}(\mathfrak{z}_{1:2^{k}})=\begin{cases}
g_{k}\big(\mathfrak{z}_{1:2^{k}}\big)\vee f_{k-1}\big(\mathfrak{z}_{1:2^{k-1}}\big)\vee f_{k-1}\big(\mathfrak{z}_{(2^{k-1}+1):2^{k}}\big) & k\in\mathbb{N_{*}}\\
g_{0}\big({\rm \mathfrak{z}}_{1}\big) & k=0
\end{cases},
\]
for some functions $\big\{ g_{k}:\mathsf{Z}^{2^{k}}\to\{0,1\},k\in\mathbb{N}\big\}$,
which encode the condition for stopping while the functions $f_{k-1}(\mathfrak{z}_{1:2^{k-1}})$
and $f_{k-1}(\mathfrak{z}_{(2^{k-1}+1):2^{k}})$ report whether stopping
was triggered in either of the two main subtrees.
\begin{example}[HMC-NUTS]
Assume the setup of Example~\ref{eg:HMC-setup} that is with $z=(x,v)\in\mathsf{X}\times\mathsf{V}$
we target $\pi(x,v)\propto\gamma(x)\kappa(v)$ and let $Q(z,{\rm d}z')=\Psi(z,{\rm d}z')=\delta_{\psi(z)}({\rm d}z')$
for $\psi$ the leapfrog mapping and assume that the dominating measure
satisfies $\nu({\rm d}z)\Psi(z,{\rm d}z')=\nu({\rm d}z')\Psi^{*}(z',{\rm d}z)$
with $\Psi^{*}(z,{\rm d}z')=\delta_{\psi^{-1}(z)}({\rm d}z')$. A
possible choice is for $k\in\mathbb{N}_{*}$ and $\ell,r\in\mathbb{Z}^{2}$
such that $\ell-r+1=2^{k}$, 
\begin{equation}
g_{k}(z_{\ell}\ldots,z_{r})=\mathbb{I}\left\{ (x_{r}-x_{\ell})^{\top}v_{\ell}>0\right\} \vee\mathbb{I}\left\{ (x_{r}-x_{\ell})^{\top}v_{r}<0\right\} \vee\mathbb{I}\left\{ \max\nolimits _{i,j\in\llbracket\ell,r\rrbracket}\pi(z_{i})/\pi(z_{j})>\Delta_{{\rm max}}\right\} ,\label{eq:gk-for-NUTS}
\end{equation}
for some $\Delta_{{\rm max}}>0$, and $g_{0}\equiv0$. The first two
indicators correspond to the choice made in \citet{hoffman2014no},
the motivation for NUTS--see Appendix~\ref{sec:app-NUTS-motivation}
for some details. The last indicator is our own suggestion to address
numerical errors, since the setup considered by \citet{hoffman2014no}
corresponds to Example~(\ref{eg:sliced-NUTS}) below and numerical
errors are addressed in a slightly different way.
\end{example}
\begin{defn}[Slice sampler \citet{besag1995bayesian,neal2003slice}]
\label{def:slice-sampler}  Given a target distribution $\pi$, with
density $\varpi$ w.r.t. some dominating measure $\nu$, one can define
an extended target distribution $\tilde{\pi}$ via the decomposition
\[
\varpi(z,u)=\varpi(z)\mathbb{I}\{u\leq\varpi(z)\}/\varpi(z)=\mathbb{I}\{u\leq\varpi(z)\},
\]
where $\varpi$ is the density of $\tilde{\pi}$ w.r.t. the product
of $\nu$ and the Lebesgue measure, and in which conditional on $z$,
$u$ is uniformly distributed on $[0,\varpi(z)]$. A MwG Markov kernel
leaving $\tilde{\pi}$ invariant consists of sampling $u$ uniformly
on $[0,\varpi(z)]$ and then applying any Markov kernel leaving the
conditional distribution $\pi_{u}$ of $z$ given $u$ invariant:
the uniform distribution on the ``slice'' $\{z\in\mathsf{Z}:\varpi(z)\geq u\}$.
For the purposes of this work, we may seek to define a sophisticated
$\pi_{u}$-invariant Markov kernel.
\end{defn}
\begin{example}[sliced-HMC-NUTS]
\label{eg:sliced-NUTS} This is what \citet{hoffman2014no} refer
to as the ``simplified'' NUTS algorithm. Here the overall target
distribution has density $\eta(x,v,u)\propto\mathbb{I}\{u\leq\gamma(x)\kappa(v)\}$
and the algorithm consists of a MwG alternating between updating $u$
given $z=(x,v)$ and vice versa. Given $u\in[0,\gamma\otimes\kappa(z)]$,
we focus on sampling from $\pi_{u}(x,v)\propto\mathbb{I}\{u\leq\gamma(x)\kappa(v)\}$.
In this scenario, in addition to (\ref{eq:gk-for-NUTS}) it is suggested
to use, for $k\in\mathbb{N}_{*}$ and $\ell,r\in\mathbb{Z}^{2}$ such
that $\ell-r+1=2^{k}$, 
\[
g_{k}(z_{\ell}\ldots,z_{r})=\mathbb{I}\left\{ (x_{r}-x_{\ell})^{\top}v_{\ell}>0\right\} \vee\mathbb{I}\left\{ (x_{r}-x_{\ell})^{\top}v_{r}<0\right\} ,
\]
and
\[
g_{0}(z)=\mathbb{I}\left\{ \log\gamma\otimes\kappa(z)<\log u-\Delta_{{\rm max}}\right\} ,
\]
for $\Delta_{{\rm max}}\geq0$ in order to stop computation when the
error arising from the numerical integration of Hamilton's dynamic
leads to an ``astronomically'' large error.
\end{example}
\begin{lem}
\label{lem:sn-properties-1}The functions $\{s_{n},n\in\mathbb{N}\}$
satisfy for any $(n,\mathtt{Z},b)\in\mathbb{N}\times\mathsf{Z}^{\mathbb{Z}}\times\{0,1\}^{\mathbb{N}_{*}}$
\begin{enumerate}
\item \label{enu:lem:sn-enu1}$s_{n}(\mathtt{Z},b)$ depends only on the
order and the values of $z_{-\ell_{n}(b)},\ldots,z_{r_{n}(b)}$, and
not on how they are indexed;
\item \label{enu:lem:sn-enu2}$s_{n}(\mathtt{Z},b)\geq s_{n-1}(\mathtt{Z},b)$
for $n\geq2$,
\item \label{enu:lem:sn-enu3}$s_{n}(\mathtt{Z},b)\geq g_{n-1}(z_{-\ell_{n-1}(b)},\ldots,z_{r_{n-1}(b)})$.
\end{enumerate}
\end{lem}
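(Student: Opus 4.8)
The plan is to run everything off the recursive structure of the family $\{f_k\}$ together with the elementary combinatorial fact, recorded in the bullet points just before the lemma, that the $(n-1)$-th window $\llbracket-\ell_{n-1}(b),r_{n-1}(b)\rrbracket$ coincides with exactly one of the two length-$2^{n-1}$ halves of the $n$-th window. Concretely, writing $W_n:=(z_{-\ell_n(b)},\ldots,z_{r_n(b)})$ for the ordered tuple of window values and splitting it at the midpoint $m_n(b)$ into $W_n^L:=(z_{-\ell_n(b)},\ldots,z_{m_n(b)})$ and $W_n^R:=(z_{m_n(b)+1},\ldots,z_{r_n(b)})$, each of length $2^{n-1}$, the definition reads $s_n(\mathtt{Z},b)=f_{n-1}(W_n^L)\vee f_{n-1}(W_n^R)$. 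The two bullets give $W_{n-1}=W_n^L$ when $b_n=0$ and $W_{n-1}=W_n^R$ when $b_n=1$; in either case $W_{n-1}$ is one of the two arguments of $f_{n-1}$ appearing in $s_n$, so $s_n(\mathtt{Z},b)\geq f_{n-1}(W_{n-1})$. This single inequality is the engine behind both (ii) and (iii).

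For claim (i) I would argue directly from the definitions: each $f_k$ is a function of an abstract ordered tuple $\mathfrak{z}_{1:2^k}$, and in forming $s_n$ one merely substitutes the ordered window values $W_n^L,W_n^R$. Hence $s_n(\mathtt{Z},b)$ is a function of the tuple $W_n$ alone---its entries in their given order---and is insensitive to the absolute indices $-\ell_n(b),\ldots,r_n(b)$ attached to them in $\mathtt{Z}$. No computation is needed beyond observing that the recursion never references an index, only positions within the passed tuple.

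For claims (ii) and (iii) I would unfold $f_{n-1}(W_{n-1})$ one step using the $k\in\mathbb{N}_{*}$ branch of the recursion, which for $n\geq 2$ gives $f_{n-1}(W_{n-1})=g_{n-1}(W_{n-1})\vee f_{n-2}(W_{n-1}^L)\vee f_{n-2}(W_{n-1}^R)$, where $W_{n-1}^L,W_{n-1}^R$ are the first and last $2^{n-2}$ entries of $W_{n-1}$. Claim (iii) is then immediate: $s_n\geq f_{n-1}(W_{n-1})\geq g_{n-1}(W_{n-1})=g_{n-1}(z_{-\ell_{n-1}(b)},\ldots,z_{r_{n-1}(b)})$. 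For claim (ii) I would note that the positional halves $W_{n-1}^L,W_{n-1}^R$ are exactly the index-based halves $(z_{-\ell_{n-1}(b)},\ldots,z_{m_{n-1}(b)})$ and $(z_{m_{n-1}(b)+1},\ldots,z_{r_{n-1}(b)})$, so $f_{n-2}(W_{n-1}^L)\vee f_{n-2}(W_{n-1}^R)$ is precisely $s_{n-1}(\mathtt{Z},b)$, whence $s_n\geq f_{n-1}(W_{n-1})\geq s_{n-1}(\mathtt{Z},b)$.

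The one point requiring care---and the main obstacle---is this last alignment: I must check that the positional halving built into the recursion for $f_{n-1}$ agrees with the index-midpoint halving (at $m_{n-1}(b)$) used to define $s_{n-1}$. This reduces to confirming that the first $2^{n-2}$ entries of the ordered tuple $W_{n-1}$ are $z_{-\ell_{n-1}(b)},\ldots,z_{m_{n-1}(b)}$ and the last $2^{n-2}$ are $z_{m_{n-1}(b)+1},\ldots,z_{r_{n-1}(b)}$, which follows from $m_{n-1}(b)=2^{n-2}-1-\ell_{n-1}(b)$ so that the left half has exactly $2^{n-2}$ elements. I would also separately dispatch the hypothesis $n\geq 2$ in (ii) (the nonempty recursion branch requires $n-1\geq 1$), and simply remark that (i) and (iii) hold for all $n\in\mathbb{N}_{*}$.
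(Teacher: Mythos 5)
Your proof is correct and follows essentially the same route as the paper's: both arguments rest on the observation that the $(n-1)$-th window $\llbracket-\ell_{n-1}(b),r_{n-1}(b)\rrbracket$ is exactly one of the two length-$2^{n-1}$ halves of the $n$-th window (according to $b_n$), combined with one unfolding of the recursion for $f_{n-1}$ and the alignment $m_{n-1}(b)=2^{n-2}-1-\ell_{n-1}(b)$. The only difference is presentational: the paper expands both $f_{n-1}$ terms of $s_n$ into a six-term disjunction and then picks out the pair equal to $s_{n-1}$, whereas you first reduce to $s_n\geq f_{n-1}(W_{n-1})$ and unfold only that term, which is a slightly more economical write-up of the same argument.
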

\begin{rem}
Part~\ref{enu:lem:sn-enu3} of Lemma~\ref{lem:sn-properties-1}
is useful for computational reasons: if one observes that $s_{n}(\mathtt{Z},b)=0$
but $g_{n}(z_{-\ell_{n}(b)},\ldots,z_{r_{n}(b)})=1$ then one can
stop and take $\tau(\mathtt{Z},b)=n+1$ without further simulation
being needed.
\end{rem}
Let $\beta_{k}\colon\llbracket0,2^{k}-1\rrbracket\rightarrow\{0,1\}^{k}$
be the reversed binary representation of $i\in\llbracket0,2^{k}-1\rrbracket$
with $k$ bits, so that e.g., $\beta_{5}(13)=(1,0,1,1,0)$. This function
has the property that $\beta_{k}\circ\ell_{k}(b)=b_{1:k}$. For $b\in\{0,1\}^{\mathbb{N}}$,
define $\bar{b}_{i}:=(b_{i+1},b_{i+2},\ldots)$ for $i\in\mathbb{N}$.
We can now present the result that allows one to relate the stopped
Markov processes, which is analogous to Lemma~\ref{lem:stop-IMH}
in this setting.
\begin{lem}
\label{lem:nuts-tau-equal-1}Let $n\in\mathbb{N}$, $\ell,r\in\mathbb{N}_{*}$
be such that $\ell+r+1=2^{n-1}$, $k\in\llbracket-\ell,r\rrbracket$,
$\mathtt{Z}\in\mathsf{Z}^{\mathbb{Z}}$ and $\mathtt{Z}':=\theta^{k}(\mathtt{Z})$.
Then for any $b=(b_{1:n-1},\bar{b}_{n-1})\in\{0,1\}^{\mathbb{N}_{*}}$
such that $\tau(\mathtt{Z},b)=n$, $\ell_{n-1}(b)=\ell$ and $r_{n-1}(b)=r$,
there exists a unique $b'_{1:n-1}=\beta_{n-1}(\ell+k)$ such that
with $b'=(b_{1:n-1}',\bar{b}_{n-1})$, 
\begin{enumerate}
\item $\ell_{n-1}(b')=\ell+k$, 
\item $r_{n-1}(b')=r-k$, 
\item $(z'_{-(\ell+k)},\ldots,z'_{r-k})=(z{}_{-\ell},\ldots,z{}_{r})$,
\item and $\tau(\mathtt{Z}',b')=n$.
\end{enumerate}
\end{lem}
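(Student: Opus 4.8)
The plan is to construct $b'$ explicitly and verify the four claims, the first three being essentially bookkeeping and the fourth---equality of the stopping times---being the crux. First I would set $b'_{1:n-1}:=\beta_{n-1}(\ell+k)$ and $b':=(b'_{1:n-1},\bar{b}_{n-1})$, so that $b'$ shares the tail $\bar{b}_{n-1}$ with $b$ and differs only in its first $n-1$ bits. This is well-defined because $k\in\llbracket-\ell,r\rrbracket$ forces $\ell+k\in\llbracket0,\ell+r\rrbracket=\llbracket0,2^{n-1}-1\rrbracket$, the domain of $\beta_{n-1}$. Since $\ell_{n-1}$, as a map of the first $n-1$ bits, is a bijection onto $\llbracket0,2^{n-1}-1\rrbracket$ with inverse $\beta_{n-1}$, the string $b'_{1:n-1}=\beta_{n-1}(\ell+k)$ is the \emph{unique} one with $\ell_{n-1}(b')=\ell+k$, which gives existence, uniqueness, and claim~(1) at once. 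Claim~(2) is then $r_{n-1}(b')=2^{n-1}-1-\ell_{n-1}(b')=2^{n-1}-1-(\ell+k)=r-k$, using $\ell+r+1=2^{n-1}$; and claim~(3) follows directly from $z'_i=z_{i+k}$, since $z'_{-(\ell+k)+j}=z_{-\ell+j}$ for $j=0,\ldots,\ell+r$.

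The heart of the argument is claim~(4). The key observation I would exploit is that, because $b'$ and $b$ agree from the $n$-th bit onward, the recursion $\ell_m(b')=\ell_{n-1}(b')+\sum_{j=n}^{m}b_j2^{j-1}$ and its analogue for $b$ give, for every $m\geq n-1$,
\[
\ell_m(b')=\ell_m(b)+k,\qquad r_m(b')=r_m(b)-k,
\]
the second following from $r_m=2^m-1-\ell_m$. Combined with $z'_i=z_{i+k}$, this shows that at every level $m\geq n-1$ the ordered window of states about the new origin $z'_0=z_k$ coincides with the window about $z_0$:
\[
(z'_{-\ell_m(b')},\ldots,z'_{r_m(b')})=(z_{-\ell_m(b)},\ldots,z_{r_m(b)}).
\]
Invoking part~\ref{enu:lem:sn-enu1} of Lemma~\ref{lem:sn-properties-1}, which states that $s_m$ is a function of this ordered window alone, I conclude $s_m(\mathtt{Z}',b')=s_m(\mathtt{Z},b)$ for all $m\geq n-1$. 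In particular $s_{n-1}(\mathtt{Z}',b')=s_{n-1}(\mathtt{Z},b)=0$ and $s_n(\mathtt{Z}',b')=s_n(\mathtt{Z},b)=1$, the right-hand values being read off from the hypothesis $\tau(\mathtt{Z},b)=n$.

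It remains to rule out stopping before level $n$, and here I cannot use window coincidence, since the first $n-1$ bits of $b'$ genuinely differ from those of $b$ and the level-$m$ windows for $m<n-1$ are different. This is the one place where the problem has real content; I would resolve it by the monotonicity in part~\ref{enu:lem:sn-enu2} of Lemma~\ref{lem:sn-properties-1}: from $s_{n-1}(\mathtt{Z}',b')=0$ and $s_{m-1}(\mathtt{Z}',b')\leq s_m(\mathtt{Z}',b')$ for $m\geq2$, one gets $s_m(\mathtt{Z}',b')=0$ for all $m\in\llbracket1,n-1\rrbracket$. Together with $s_n(\mathtt{Z}',b')=1$ this yields $\tau(\mathtt{Z}',b')=\inf\{m\geq1:s_m(\mathtt{Z}',b')=1\}=n$, completing claim~(4). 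The main obstacle is precisely this step: the naive hope of matching the whole doubling history fails because reindexing scrambles the early bits, so one must realise that only the final level $n-1$ needs to be matched directly, with the monotonicity property doing the rest.
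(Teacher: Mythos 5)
Your proof is correct and takes essentially the same route as the paper's: the same construction $b'_{1:n-1}=\beta_{n-1}(\ell+k)$ with the shared tail, the same bookkeeping for claims (1)--(3), window coincidence at levels $n-1$ and $n$ combined with part~\ref{enu:lem:sn-enu1} of Lemma~\ref{lem:sn-properties-1} to transfer the values $s_{n-1}=0$ and $s_{n}=1$, and the monotonicity of part~\ref{enu:lem:sn-enu2} to rule out stopping before level $n$. The only cosmetic difference is that you prove window coincidence for all $m\geq n-1$ at once, where the paper checks levels $n-1$ and $n$ individually.
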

\begin{cor}
For $n\in\mathbb{N}_{*}$ and $k\in\llbracket-\ell_{n-1}(b),r_{n-1}(b)\rrbracket$
let $\chi_{n-1,k}\colon\{0,1\}^{\mathbb{N}_{*}}\rightarrow\{0,1\}^{\mathbb{N}_{*}}$
such that $\chi_{n-1,k}(b):=\big(\beta_{n-1}(\ell_{n-1}(b)+k),\bar{b}\big)$.
Then $(b,k)\mapsto(\chi_{n-1,k}(b),-k)$ is an involution since $b'_{1:n-1}=\beta_{n-1}\big(\ell_{n-1}(b)+k\big)$,
$\ell_{n-1}(b')=\ell_{n-1}(b)+k$ and $b_{1:n-1}=\beta_{n-1}\big(\ell_{n-1}(b)+k-k\big)$. 
\end{cor}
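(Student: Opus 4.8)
The plan is to verify directly that the map $\Phi\colon(b,k)\mapsto(\chi_{n-1,k}(b),-k)$ satisfies $\Phi\circ\Phi={\rm Id}$. The essential structural facts I would lean on are that $\chi_{n-1,k}$ alters only the first $n-1$ bits of $b$ while leaving the tail $\bar{b}=\bar{b}_{n-1}=(b_{n},b_{n+1},\ldots)$ untouched, and that $\ell_{n-1}$ and $\beta_{n-1}$ are mutually inverse bijections between $\{0,1\}^{n-1}$ and $\llbracket0,2^{n-1}-1\rrbracket$.

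First I would record the two inversion identities. The property $\beta_{n-1}\circ\ell_{n-1}(b)=b_{1:n-1}$ is stated just before the corollary; the converse $\ell_{n-1}\circ\beta_{n-1}(i)=i$ for $i\in\llbracket0,2^{n-1}-1\rrbracket$ follows either from the explicit reversed binary-representation form of both maps or from a counting argument, since both sets have cardinality $2^{n-1}$. Next I would check the crucial range condition: the hypothesis $k\in\llbracket-\ell_{n-1}(b),r_{n-1}(b)\rrbracket$ together with $r_{n-1}(b)=2^{n-1}-1-\ell_{n-1}(b)$ forces $\ell_{n-1}(b)+k\in\llbracket0,2^{n-1}-1\rrbracket$, so that $\beta_{n-1}$ is always evaluated inside its domain. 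Writing $b':=\chi_{n-1,k}(b)$, its first $n-1$ bits are $\beta_{n-1}(\ell_{n-1}(b)+k)$, and since $\ell_{n-1}$ depends on those bits only, the inversion identity gives $\ell_{n-1}(b')=\ell_{n-1}(b)+k$ and hence $r_{n-1}(b')=r_{n-1}(b)-k$.

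With these in hand the involution property is immediate. From $\ell_{n-1}(b')=\ell_{n-1}(b)+k$ and $r_{n-1}(b')=r_{n-1}(b)-k$ one sees that $-k\in\llbracket-\ell_{n-1}(b'),r_{n-1}(b')\rrbracket$, so $\chi_{n-1,-k}$ is legitimately applied to $b'$. Its first $n-1$ bits are $\beta_{n-1}(\ell_{n-1}(b')-k)=\beta_{n-1}(\ell_{n-1}(b))=b_{1:n-1}$, while the tail $\bar{b}_{n-1}$ has been preserved throughout; therefore $\chi_{n-1,-k}(b')=b$ and the index coordinate returns to $-(-k)=k$, giving $\Phi\circ\Phi(b,k)=(b,k)$. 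I expect the only delicate point to be the bookkeeping of the index ranges---ensuring $\ell_{n-1}(b)+k$ never escapes $\llbracket0,2^{n-1}-1\rrbracket$ so that $\beta_{n-1}$ stays within its domain and the two inversion identities apply; everything else is substitution.
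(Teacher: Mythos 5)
Your proof is correct and follows essentially the same route as the paper, whose entire argument is the terse chain $\ell_{n-1}(b')=\ell_{n-1}(b)+k$ and $b_{1:n-1}=\beta_{n-1}\big(\ell_{n-1}(b)+k-k\big)$ embedded in the statement itself. You merely make explicit what the paper leaves implicit --- that $\ell_{n-1}$ and $\beta_{n-1}$ are mutually inverse, that $k\in\llbracket-\ell_{n-1}(b),r_{n-1}(b)\rrbracket$ keeps $\ell_{n-1}(b)+k$ inside $\llbracket0,2^{n-1}-1\rrbracket$ so the ranges are preserved under the map, and that the tail $\bar{b}_{n-1}$ is untouched --- which is sound bookkeeping, not a different argument.
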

For $n\in\mathbb{N}_{*}$ let $\mathfrak{s}(n,b)\colon\mathsf{Z}^{\mathbb{Z}}\times\mathbb{N}\times\{0,1\}^{\mathbb{N}_{*}}\rightarrow\{0,1\}$
\[
\mathfrak{s}(\mathtt{Z},n,b):=s_{n}(\mathtt{Z},b)\prod_{i=1}^{n-1}\left[1-s_{i}(\mathtt{Z},b)\right],
\]
 $\xi:=(\mathtt{Z},n,b,k)\in\mathsf{Z}^{\mathbb{Z}}\times\mathbb{N}\times\{0,1\}^{\mathbb{N}}\times\mathbb{Z}$
and
\[
\mu({\rm d}\mathtt{Z},n,b,k):=\Lambda^{0}({\rm d}\mathtt{Z})\Upsilon(b)\mathfrak{s}(\mathtt{Z},n,b)\varsigma\big(k;\ell_{n-1}(b),r_{n-1}(b),\mathtt{Z}\big).
\]

From Lemma~\ref{lem:nuts-tau-equal-1} for $(n,b)\in\mathbb{N}\times\{0,1\}^{\mathbb{N}}$
$\mathfrak{s}(n,b)=1$ implies that $\mathfrak{s}(n,\chi_{n-1,k}(b))=1$
for any $k\in\llbracket-\ell_{n-1}(b),r_{n-1}(b)\rrbracket$. Then
with the involution 
\[
\phi(\mathtt{Z},n,b,k)=(\theta^{k}(\mathtt{Z}),n,\chi_{n-1,k}(b),-k)
\]
we define
\begin{align*}
S:=\{ & (\mathtt{Z},n,b,k)\in\Xi:\varpi(z_{k})\wedge\varpi(z_{0})\wedge\mathfrak{s}(\mathtt{Z},n,b)\wedge\mathfrak{s}\big(\mathtt{Z},n,\chi_{n-1,k}(b)\big)\\
 & \wedge\varsigma(k;\ell_{n-1}(b),r_{n-1}(b),\mathtt{Z})\wedge\varsigma(-k;\ell_{n-1}(b)+k,r_{n-1}(b)-k,\theta^{k}(\mathtt{Z}))>0\},
\end{align*}
Now for any $\xi\in S$ we obtain an expression for the acceptance
ratio
\begin{align*}
r(\xi) & =\frac{{\rm d}\Lambda^{k}}{{\rm d}\Lambda^{0}}(\mathtt{Z})\frac{\varsigma(-k;\theta^{k}(\mathtt{Z}),\ell+k,r-k)}{\varsigma(k;\mathtt{Z},\ell,r)}\\
 & =\frac{\varpi(z_{k})}{\varpi(z_{0})}\frac{\varsigma(-k;\theta^{k}(\mathtt{Z}),\ell+k,r-k)}{\varsigma(k;\mathtt{Z},\ell,r)}.
\end{align*}
A natural choice of $\varsigma(k;\mathtt{Z},\ell,r)$ is $\varsigma(k;\mathtt{Z},\ell,r)\propto\varpi(z_{k})\mathbb{I}\{k\in\llbracket-\ell,r\rrbracket\}$,
in which case $r(\mathtt{Z},\ell,r,k)=1$ for any $k\in\llbracket-\ell,r\rrbracket$
(This is the same in the slice setting, where it corresponds to choosing
uniformly from points in the slice). One can always improve this slightly
(Peskun) by excluding $k=0$, and having an acceptance ratio that
is not $1$ in general. That is, taking 
\[
\varsigma(k;\mathtt{Z},\ell,r)\propto\begin{cases}
\varpi(z_{k})\mathbb{I}\{k\in\llbracket-\ell,r\rrbracket\setminus\{0\}) & \exists i\in\llbracket-\ell,r\rrbracket\setminus\{0\}:\varpi(z_{i})>0,\\
\mathbb{I}\{k=0\} & \text{otherwise}.
\end{cases}
\]
in which case $r(\xi)=1$ for $\xi\in S$. Our understanding from
\citet{betancourt2017conceptual}, is that Stan uses this (unsliced)
``multinomial'' (i.e. categorical) sampling, but the exact expression
for $\varsigma(k;\mathtt{Z},\ell,r)$ is not clear.

\section{Multiple-try Metropolis and related schemes\label{sec:MTM}}

\subsection{\label{subsec:Standard-MTM}Standard MTM}

A simple multiple-try Metropolis (MTM) kernel \citep{liu2000multiple}
involves $n\in\mathbb{N}$ proposals conditional upon the input, from
which one is chosen as a candidate to move to. The acceptance probability
then involves simulating $n-1$ proposals from this candidate. The
kernel is presented in Alg.~\ref{alg:Standard-MTM-kernel}. We can
write $\xi=(\mathtt{Z}_{1},\mathtt{Z}_{2},k,\ell)$ where $\mathtt{Z}_{1},\mathtt{Z}_{2}\in\mathsf{Z}^{n}$,
$k,\ell\in\llbracket1,n\rrbracket$, and here $\xi_{0}=z_{1,k}$.
For simplicity we will assume that the target $\pi$ and proposals
$Q(z,\cdot)$, $z\in\mathsf{Z}$, have densities $\varpi$ and $q(z,\cdot)$
w.r.t. a common reference measure. We can write, 
\[
\rho(\mathtt{Z}_{1},\mathtt{Z}_{2},k,\ell)=\frac{\mathbb{I}\{k\in\llbracket1,n\rrbracket\}}{n}\varpi(z_{1,k})\left\{ \prod_{i=1}^{n}q(z_{1,k},z_{2,i})\right\} \varsigma(\ell;z_{1,k},\mathtt{Z}_{2})\left\{ \prod_{i=1,i\neq k}^{n}q(z_{2,\ell},z_{1,i})\right\} .
\]
The associated involution is $\phi(\mathtt{Z}_{1},\mathtt{Z}_{2},k,\ell):=(\mathtt{Z}_{2},\mathtt{Z}_{1},\ell,k)$,
so that the acceptance ratio is, for $\xi\in S$
\[
r(\xi)=\frac{\rho\circ\phi}{\rho}(\xi)=\frac{\varpi(z_{2,\ell})q(z_{2,\ell},z_{1,k})\varsigma(k;z_{2,\ell},\mathtt{Z}_{1})}{\varpi(z_{1,k})q(z_{1,k},z_{2,\ell})\varsigma(\ell;z_{1,k},\mathtt{Z}_{2})}.
\]
In practice, one often chooses $\varsigma(\ell;z_{1,k},\mathtt{Z}_{2})\propto w(z_{2,\ell},z_{1,k})$
where $w$ is a weight function. In particular, \citet{liu2000multiple}
suggest to use
\[
w(z,z')=\varpi(z)q(z,z')\lambda(z,z'),
\]
where $\lambda(z,z')=\lambda(z',z)$ for all $z,z'\in\mathsf{Z}$.
Then the acceptance ratio can be expressed as
\[
r(\xi)=\frac{\sum_{i=1}^{n}w(z_{2,i},z_{1,k})}{\sum_{i=1}^{n}w(z_{1,i},z_{2,\ell})},
\]
for $\xi\in S$. To illustrate, a possible choice is to take $\lambda(z,z')=\left\{ q(z,z')q(z',z)\right\} ^{-1}$,
in which case $w(z,z')=\varpi(z)/q(z',z)$.

\begin{algorithm}
\caption{\label{alg:Standard-MTM-kernel}Standard MTM kernel}

\begin{enumerate}
\item Given $z$, sample $k\sim{\rm Uniform}\big(\llbracket1,n\rrbracket\big)$
and set $z_{1,k}=z$
\item Sample $z_{2,i}\overset{{\rm iid}}{\sim}Q(z_{1,k},\cdot)$ for $i\in\llbracket n\rrbracket$
\item Sample $\ell\sim\varsigma(\cdot;z_{1,k},\mathtt{Z}_{2})$
\item Sample $z_{1,i}\overset{{\rm iid}}{\sim}Q(z_{2,\ell},\cdot)$ for
$i\in\llbracket n\rrbracket\setminus\{k\}$
\item With probability
\[
a\left(\frac{\varpi(z_{2,\ell})q(z_{2,\ell},z_{1,k})\varsigma(k;z_{2,\ell},\mathtt{Z}_{1})}{\varpi(z_{1,k})q(z_{1,k},z_{2,\ell})\varsigma(\ell;z_{1,k},\mathtt{Z}_{2})}\right)
\]
return $z_{2,\ell}$, otherwise $z_{1,k}$.
\end{enumerate}
\end{algorithm}

\subsection{Stopping time MTM\label{subsec:Stopping-time-MTM}}

We consider now locally adaptive selection of the number of samples
$n$ in MTM. In particular, the approach taken in Section~\ref{subsec:Standard-MTM}
needs to be slightly adapted and then the stopping time random variables
introduced, one for each of the $\mathtt{Z}_{1}:=(z_{1,1},z_{1,2},\ldots)\in\mathsf{Z}^{\mathbb{N}_{*}}$
samples and the $\mathtt{Z}_{2}:=(z_{2,1},z_{2,2},\ldots)\in\mathsf{Z}^{\mathbb{N}_{*}}$
samples and for $(m,n)\in\{1,2\}\times\mathbb{N_{*}}$ let $\mathtt{Z}_{m,n}:=(z_{m,1},\ldots,z_{m,n})$.
The kernel is presented in Alg.~\ref{alg:Locally-adaptive-MTM}.
We define $\xi:=(\mathtt{Z}_{1},\mathtt{Z}_{2},m,n,k,\ell)$ where
$m,n\in\mathbb{N}$, $(\mathtt{Z}_{1},\mathtt{Z}_{2})\in\mathsf{Z}^{\mathbb{N_{*}}}\times\mathsf{Z}^{\mathbb{N}_{*}}$
and $(k,\ell)\in\llbracket m\rrbracket\times\llbracket n\rrbracket$.
We let $\xi_{0}=z_{1,1}$. Let $\sigma_{k}:\mathsf{Z}^{\mathbb{N}_{*}}\to\mathsf{Z}^{\mathbb{N}_{*}}$
be the swapping function such that, with $\mathtt{Z}'=\sigma_{k}(\mathtt{Z})$,
$z_{1}'=z_{k}$, $z_{k}'=z_{1}$ and $z'_{j}=z_{j}$ for $j\notin\{1,k\}$.
For any $i\in\mathbb{N}_{*}$ let $s_{i}:\mathsf{Z}\times\mathsf{Z}^{\mathbb{N}_{*}}\to\{0,1\}$
be such that $i\mapsto s_{i}(z,\mathtt{Z}')$ is non-decreasing and
$s_{i}(z,\mathtt{Z}')=s_{i}\big(z,\sigma_{\ell}(\mathtt{Z}')\big)$
for any $\ell\in\llbracket i-1\rrbracket$. For example, one could
choose $s_{i}(z,\mathtt{Z}')=\mathbb{I}\left\{ \sum_{j=1}^{i}w(z,z'_{i})\geq c\right\} $
for all $i\in\mathbb{N}_{*}$ and for some $c>0$, where $w$ is a
weight function as described in the previous subsection. 

The ``forward'' stopping times of interest are, for $z\in\mathsf{Z}$
and $\mathtt{Z}_{1},\mathtt{Z}_{2}\in\mathsf{Z}^{\mathbb{N}}$
\[
\tau_{1}(z,\mathtt{Z}_{1})=\inf\{n\geq1:s_{n}(z,\mathtt{Z}_{1})=1\}\text{ and }\tau_{2}(z,\mathtt{Z}_{2})=\inf\{n\geq1:s_{n}(z,\mathtt{Z}_{2})=1\},
\]
and we define the $\{0,1\}$-valued functions, for $n\in\mathbb{N}_{*}$,
\[
\mathfrak{s}(n,z,\mathtt{Z})=s_{n}(z,\mathtt{Z})\prod_{i=1}^{n-1}\big[1-s_{i}(z,\mathtt{Z})\big].
\]
For $i\in\{1,2\}$ the quantity $\mathfrak{s}(n,z_{3-i,1},\mathtt{Z}_{i})$
can be thought of as the probability that $\tau_{i}=n$ given the
values $z_{3-i,1}$ and $\mathtt{Z}_{i,n}$. We define $\mathcal{Q}_{2}(z,{\rm d}\mathtt{Z})$
to correspond to the distribution of $z_{i}\overset{{\rm iid}}{\sim}Q(z,\cdot)$
for $i\in\mathbb{N}_{*}$ and $\mathcal{Q}_{1}(z,{\rm d}\mathtt{Z})$
such that for $i\geq2$ $z_{i}\overset{{\rm iid}}{\sim}Q(z,\cdot)$
and $\mathcal{Q}_{1}(z,{\rm d}z_{1})=\pi({\rm d}z_{1})$, that is
under $\mu$ below $z_{1,1}\sim\pi$, 
\begin{align*}
\mu\big({\rm d}(\mathtt{Z}_{1},\mathtt{Z}_{2}),m,n,k,\ell\big) & :=\mathcal{Q}_{2}(z_{1,1},{\rm d}\mathtt{Z}_{2})\mathfrak{s}(n,z_{1,1},\mathtt{Z}_{2})\\
 & \hspace{2cm}\varsigma(\ell;z_{1,1},\mathtt{Z}_{2},n-1)\mathcal{Q}_{1}(z_{2,\ell},{\rm d}\mathtt{Z}_{1})\mathfrak{s}(m,z_{2,\ell},\mathtt{Z}_{1})\frac{\mathbb{I}\big\{ k\in\llbracket m-1\rrbracket\big\}}{m-1},
\end{align*}
which is the distribution of a process that simulates the stopped
processes described above, and chooses $\ell$ and $k$, respectively,
from a categorical distribution on $\llbracket n-1\rrbracket$ and
a uniform distribution on $\llbracket m-1\rrbracket$. We define $\phi(\mathtt{Z}_{1},\mathtt{Z}_{2},m,n,k,\ell)=(\sigma_{\ell}(\mathtt{Z}_{2}),\sigma_{k}(\mathtt{Z}_{1}),n,m,\ell,k)$.
It is straightforward to verify that $\phi$ is an involution. What
is more interesting is, assuming densities as in the previous subsection,
that for $\xi\in S$
\[
r(\xi)=\frac{\varpi(z_{2,\ell})q(z_{2,\ell},z_{1,k})\varsigma(k;z_{2,\ell},\sigma_{k}(\mathtt{Z}_{1}),m-1)}{\varpi(z_{1,1})q(z_{1,k},z_{2,\ell})\varsigma(\ell;z_{1,1},\mathtt{Z}_{2},n-1)},
\]
one can apply Lemma~\ref{lem:stop-IMH} (reindexing to take into
account 1-indexing as opposed to 0-indexing) twice to determine that
$\mathfrak{s}(n,z_{1,1},\mathtt{Z}_{2})=1$ implies $\mathfrak{s}\big(n,z_{1,k},\sigma_{\ell}(\mathtt{Z}_{2})\big)=1$
and $\mathfrak{s}(m,z_{2,\ell},\mathtt{Z}_{1})=1$ implies $\mathfrak{s}\big(m,z_{2,\ell},\sigma_{k}(\mathtt{Z}_{1})\big)=1$.

\begin{algorithm}
\caption{\label{alg:Locally-adaptive-MTM}Locally adaptive MTM kernel}

\begin{enumerate}
\item Given $z$ set $z_{1,1}=z$
\item Sample $z_{2,i}\overset{{\rm iid}}{\sim}Q(z_{1,1},\cdot)$ for $i\geq1$
lazily and obtain $n=\tau_{2}(z_{1,1},\mathtt{Z}_{2})$
\item Sample $\ell\sim\varsigma(\cdot;z_{1,1},\mathtt{Z}_{2},n)$
\item Sample $z_{1,i}\overset{{\rm iid}}{\sim}Q(z_{2,\ell},\cdot)$ for
$i\geq2$ lazily and obtain $m=\tau_{1}(z_{2,\ell},\mathtt{Z}_{1})$
\item Sample $k\sim{\rm Uniform}\big(\llbracket m-1\rrbracket\big)$.
\item With probability
\[
a\left(\frac{\varpi(z_{2,\ell})q(z_{2,\ell},z_{1,k})\varsigma(k;z_{2,\ell},\sigma_{k}(\mathtt{Z}_{1}),m-1)}{\varpi(z_{1,1})q(z_{1,k},z_{2,\ell})\varsigma(\ell;z_{1,1},\mathtt{Z}_{2},n-1)}\right)
\]
return $z_{2,\ell}$, otherwise $z_{1,1}$.
\end{enumerate}
\end{algorithm}

\begin{rem}
It is possible, of course, to specify functions $s_{i}$ that do not
satisfy the conditions above. In this case, the reverse and the forward
stopping time probabilities are not necessarily equal, and their ratios
will appear in the acceptance ratio.
\end{rem}

\subsection{Pseudo-marginal algorithms}

It is relatively straightforward to adapt the MTM kernels above to
the pseudo-marginal setting \citep{Beaumont1139,andrieu2009pseudo}.
It is also possible to extend the example below to the situation where
one uses stopping times to determine the number of simulations, and
also to the particle MCMC \citep{andrieu2010particle} setting, as
is done in \citet{Lee2011} which also contains an earlier version
of the stopping time framework detailed in Section~\ref{subsec:Stopping-time-MTM}.
\citet{lee2012choice} and \citet{moral2015alive} provide some examples
of each in simple scenarios.
\begin{example}[Pseudo-marginal MTM]
\label{eg:pm-mtm}In particular, in this setting one targets a distribution
with density $\varpi(z)$ w.r.t. some measure $\nu$ where $\varpi:\mathsf{Z}\to\mathbb{R}_{+}$
cannot be calculated but for any $z\in\mathsf{Z}$ one can simulate
a random variable $w\sim Q_{z}$ with expectation $\varpi(z)$. We
introduce the auxiliary distribution with density
\[
\pi({\rm d}z,{\rm d}w)=\nu({\rm d}z)wQ_{z}({\rm d}w),
\]
such that $\pi({\rm d}z)=\nu({\rm d}z)\int wQ_{z}({\rm d}w)=\nu({\rm d}z)\varpi(z)$.
 Letting $Q$ be a Markov kernel evolving on $\mathsf{Z}$, $\mathtt{W}=(w_{1},\ldots,w_{n})\in\mathbb{R}^{n}$
and $\mathtt{W}'=(w_{1}',\ldots,w_{n}')\in\mathbb{R}^{n}$ we consider
the choice $\xi=(z,z',\mathtt{W},\mathtt{W}',k,\ell)\in\mathsf{Z}\times\mathsf{Z}\times\mathbb{R}^{n}\times\mathbb{R}^{n}$,
$\xi_{0}=(z,w_{k})$ and
\[
\mu({\rm d}z,{\rm d}z',{\rm d}\mathtt{W},{\rm d}\mathtt{W}',k,\ell)=\frac{\mathbb{I}\{k\in\llbracket n\rrbracket\}}{n}\nu({\rm d}z)w_{k}Q_{z}^{\otimes n}({\rm d}\mathtt{W})Q(z,{\rm d}z')Q_{z'}^{\otimes n}({\rm d}\mathtt{W}')\varsigma(\ell;\mathtt{W}'),
\]
where $\varsigma(k;\mathtt{W})\propto w_{k}\mathbf{1}_{\left\llbracket n\right\rrbracket }(k)$.
The involution can be chosen to be $\phi(z,z',\mathtt{W},\mathtt{W}',k,\ell)=(z',z,\mathtt{W}',\mathtt{W},\ell,k)$,
giving the acceptance ratio
\[
r(\xi)=\frac{p(z')q(z',z)}{p(z)q(z,z')}\cdot\frac{\sum_{i=1}^{n}w'_{i}}{\sum_{i=1}^{n}w{}_{i}},
\]
where for simplicity we assume that $\{Q(z,\cdot);z\in\mathsf{Z}\}$
and $\nu$ have densities $\{q(z,\cdot);z\in\mathsf{Z}\}$ and $p$
w.r.t. some dominating reference measure. We can view the averages
of the $w_{i}$ (resp. $w'_{i}$) as approximations of $\varpi(z)$
(resp. $\varpi(z')$) and the value of $n$ controls the variability
of the approximation. The corresponding Markov kernel is subtly different
from the standard pseudo-marginal approach, in that here one simulates
$k\sim{\rm Uniform}(\left\llbracket n\right\rrbracket )$ and $\mathtt{W}_{-k}\sim Q_{z}^{\otimes n-1}$
rather than having these variables fixed. In some sense, one can view
the standard pseudo-marginal kernel as a MwG approach where one fixes
$(k,\mathtt{W})$, rather than only fixing $\xi_{0}$.
\end{example}
The next example is an interesting variant in which a shared stopping
time is defined, and which has been shown to inherit desirable properties
from the limiting MH kernel associated with the pair $(\pi,Q)$ as
$n\to\infty$ but which would naturally require computation of $f$
under conditions where the kernel of Example~\ref{eg:pm-mtm} with
any fixed $n$ would not \citep{lee2014variance}.
\begin{example}[One-hit kernel of \citealp{Lee2012}]
Consider the setting of Example~\ref{eg:pm-mtm} but where $Q_{z}$
is a ${\rm Bernoulli}(\varpi(z))$ distribution with $\varpi:\mathsf{Z}\to[0,1]$.
In this case, $w=1$ under $\pi$. We want here to adapt the number
of simulations so that the acceptance ratio is a reasonable approximation
of the limiting acceptance ratio as $n\to\infty$, but does not require
an excessive number of simulations. In particular, using a fixed number
of simulations may lead to acceptance ratios with a large variance
and hence a Markov chain that can get ``stuck'' for long periods
when in regions of the state space with very small $\varpi(z)$. Let
$\xi=(z,z',\mathtt{W},\mathtt{W}',n)\in\mathsf{Z}\times\mathsf{Z}\times\{0,1\}^{\mathbb{N}_{*}}\times\{0,1\}^{\mathbb{N}_{*}}\times\mathbb{N}_{*}$,
$\xi_{0}=(z,w_{1})$ and let for $z\in\mathsf{Z}$, $F_{z}$ be the
probability measure associated with an infinite sequence of independent
$Q_{z}$-distributed random variables. The idea is given $z,z'\in\mathsf{Z}$
and $w_{1}=1$ we wish to simulate $w_{i}\overset{\mathrm{iid}}{\sim}Q_{z}$,
$i\in\mathbb{N},i\geq2$ and $w_{i}'\overset{\mathrm{iid}}{\sim}Q_{z'}$,
$i\in\mathbb{N}_{*}$ independently until there is one further ``hit'',
i.e. $w_{i}$ and/or $w_{i}'$ is equal to $1$. So we define
\[
\tau(\xi):=\inf\{n\geq1:s_{n}(\mathtt{W},\mathtt{W}')=1\},
\]
where $s_{n}(\mathtt{W},\mathtt{W}')=\mathbb{I}\left\{ \sum_{i=1}^{n}w_{i}+\sum_{i=1}^{n}w_{i}'\geq2\right\} $.
We then define
\[
\mu({\rm d}z,{\rm d}z',{\rm d}\mathtt{W},{\rm d}\mathtt{W}',n):=\nu({\rm d}z)w_{1}Q(z,{\rm d}z')F_{z}({\rm d}\mathtt{W})F_{z'}({\rm d}\mathtt{W}')\mathfrak{s}(n,\mathtt{W},\mathtt{W}'),
\]
where for $n\in\mathbb{N}_{*}$, $\mathfrak{s}(n,\mathtt{W},\mathtt{W}')=s_{n}(\mathtt{W},\mathtt{W}')\prod_{i=1}^{n-1}[1-s_{i}(\mathtt{W},\mathtt{W}')]$
so that if $\xi=(z,z',\mathtt{W},\mathtt{W}',n)\sim\mu$ then $n=\tau(\xi)$.
We define the involution 
\[
\phi(z,z',\mathtt{W},\mathtt{W}',n)=(z',z,\sigma_{n}(\mathtt{W}'),\sigma_{n}(\mathtt{W}),n),
\]
where for $i\in\mathbb{N}_{*}$, $\sigma_{i}\colon\{0,1\}^{\mathbb{N}_{*}}\rightarrow\{0,1\}^{\mathbb{N}_{*}}$
is the permutation that swaps its $1-$st and $i-$th inputs. We can
obtain, for $\xi$ such that $p(z)q(z,z')>0$, 
\[
r(\xi)=\frac{p(z')q(z',z)}{p(z)q(z,z')}\mathbb{I}\left\{ w_{1}=w'_{n}=1,\tau(\xi)=\tau\circ\phi(\xi)=n\right\} ,
\]
i.e. it is essential that $w'_{n}=1$ and that the stopping time is
preserved by the involution. We observe that if $\tau(\xi)=1$, then
necessarily $w'_{1}=1$ and $\tau\circ\phi(\xi)=1=\tau(\xi)$, so
the indicator above is $1$. Now consider $\tau(\xi)>1$. If $w_{\tau(\xi)}=0$
then $(w_{1},w_{2},\ldots,w_{\tau(\xi)})=(1,0,\ldots,0)$, while necessarily
$(w'_{1},\ldots,w_{\tau(\xi)}')=(0,\ldots,0,1)$ so $\tau\circ\phi(\xi)=\tau(\xi)$
and the indicator above is $1$. If on the other hand $w_{\tau(\xi)}=1$
then $(w_{1},w_{2},\ldots,w_{\tau(\xi)})=(1,0,\ldots,1)$ so even
if $w'_{\tau(\xi)}=1$ we have $\tau\circ\phi(\xi)=1\neq\tau(\xi)$
and so $r(\xi)=0$. 
\end{example}

\section{Delayed rejection\label{sec:Delayed-rejection}}

In delayed rejection, several sources of randomness and involutions
are considered in turn until one is accepted.

\subsection{Stochastic delayed rejection\label{subsec:Stochastic-delayed-rejection}}

Let $\pi$ be a probability distribution on $(\mathsf{Z}_{0},\mathscr{Z}_{0})$.
For $k\in\mathbb{N}_{*}$ let $(\mathsf{Z}_{k},\mathscr{Z}_{k})$
be measurable spaces, for $\mathtt{Z}^{k-1}\in\mathsf{Z}^{k-1}:=\mathsf{Z}_{0}\times\mathsf{Z}_{1}\times\cdots\mathsf{Z}_{k-1}$
let $Q_{k}(\mathtt{Z}^{k-1},\cdot)$ be probability distributions
on $(\mathsf{Z}_{k},\mathscr{Z}_{k})$. Define $\eta_{k}({\rm d}\mathtt{Z}^{k}):=\pi({\rm d}z_{0})\prod_{i=1}^{k}Q_{i}(\mathtt{Z}^{i-1};{\rm d}z_{i})$,
for $k\in\mathbb{N}_{*}$ let $\phi_{k}\colon\mathsf{Z}^{k+1}\rightarrow\mathsf{Z}^{k+1}$
be involutions and let $\alpha_{k}=a_{k}\circ r_{k}$ for an acceptance
function $a_{k}$ and $\beta_{k}(\mathtt{Z}^{k})=\beta_{k-1}(\mathtt{Z}^{k-1})[1-\alpha_{k-1}(\mathtt{Z}^{k-1})]$
with $\beta_{0}\equiv1$ and $\alpha_{0}\equiv0$ where 
\[
r_{k}(\mathtt{Z}^{k}):=\begin{cases}
\frac{\beta_{k}\circ\phi_{k}}{\beta_{k}}(\mathtt{Z}^{k})\frac{{\rm d}\eta_{k,S_{k}}^{\phi_{k}}}{{\rm d}\eta_{k,S_{k}}}(\mathtt{Z}^{k}) & \mathtt{Z}^{k}\in S_{k}\\
0 & \text{otherwise}
\end{cases},
\]
with $S_{k}:=S(\eta_{k},\eta_{k}^{\phi_{k}})\cap\{\mathtt{Z}^{k}\in\mathsf{Z}^{k}\colon\beta_{k}(\mathtt{Z}^{k})\wedge\beta_{k}\circ\phi_{k}(\mathtt{Z}^{k})>0\}$,
where $S(\eta_{k},\eta_{k}^{\phi_{k}})$ is as in Theorem~\ref{thm:invo-rev}.
The delayed rejection algorithm is described in Alg. \ref{alg:Stochastic-delayed-rejection}
and its justification follows from the following:
\begin{prop}
\label{prop:DR-random}With the notation above, define the probability
distribution of marginal $\pi$,
\begin{align*}
\mu(k,{\rm d}\mathtt{Z}^{k}): & =\alpha_{k}(\mathtt{Z}^{k})\beta_{k}(\mathtt{Z}^{k})\eta_{k}({\rm d}\mathtt{Z}^{k}),
\end{align*}
on $E=\{(k,\mathtt{Z}^{k})\in\{k\}\times\mathsf{Z}^{k}:k\in\mathbb{N}\}$
and for any $\xi=(k,\mathtt{Z}^{k})\in E$ the involution $\phi(\xi)=\phi(k,\mathtt{Z}^{k})=\big(k,\phi_{k}(\mathtt{Z}^{k})\big)$.
Then,
\[
r(\xi)=\begin{cases}
1 & \xi\in S(\mu,\mu^{\phi})\\
0 & \text{otherwise }
\end{cases}.
\]
 
\end{prop}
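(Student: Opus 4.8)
The plan is to show that the measure $\mu$ on $E$ is \emph{exactly} invariant under $\phi$, i.e. $\mu^{\phi}=\mu$; once this is in hand, the conclusion is immediate from Theorem~\ref{thm:invo-rev}, since then $r(\xi)={\rm d}\mu_{S}^{\phi}/{\rm d}\mu_{S}(\xi)={\rm d}\mu_{S}/{\rm d}\mu_{S}(\xi)=1$ on $S=S(\mu,\mu^{\phi})$, with the complement being $\mu$-null. Because $\phi$ preserves the index $k$, acting as $\phi_{k}$ inside each block $\{k\}\times\mathsf{Z}^{k}$, it suffices to prove that each block measure $\mu^{(k)}:=\eta_{k}\cdot(\alpha_{k}\beta_{k})$ is $\phi_{k}$-invariant, i.e. $(\mu^{(k)})^{\phi_{k}}=\mu^{(k)}$; summing over $k\in\mathbb{N}$ then gives $\mu^{\phi}=\mu$. (Note $\mu$ is finite, with total mass $\leq1$, from the telescoping $\alpha_{k}\beta_{k}=\beta_{k}-\beta_{k+1}$, so Theorem~\ref{thm:invo-rev} applies.)

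First I would identify $r_{k}$ with the intrinsic acceptance ratio of Theorem~\ref{thm:invo-rev} associated with the pair $(\tilde{\mu}_{k},\phi_{k})$, where $\tilde{\mu}_{k}:=\eta_{k}\cdot\beta_{k}$. Using the elementary identity $(\nu\cdot f)^{\phi_{k}}=\nu^{\phi_{k}}\cdot(f\circ\phi_{k})$ (a direct consequence of Theorem~\ref{thm:change-of-variables}) together with the chain rule for Radon--Nikodym derivatives, one computes on $S_{k}$ that
\[
\frac{{\rm d}\tilde{\mu}_{k,S_{k}}^{\phi_{k}}}{{\rm d}\tilde{\mu}_{k,S_{k}}}=\frac{\beta_{k}\circ\phi_{k}}{\beta_{k}}\,\frac{{\rm d}\eta_{k,S_{k}}^{\phi_{k}}}{{\rm d}\eta_{k,S_{k}}}=r_{k},
\]
and moreover that the set $S_{k}$ in the statement coincides with $S(\tilde{\mu}_{k},\tilde{\mu}_{k}^{\phi_{k}})$: on $S(\eta_{k},\eta_{k}^{\phi_{k}})$ the measures $\eta_{k}$ and $\eta_{k}^{\phi_{k}}$ are equivalent (Theorem~\ref{thm:invo-rev}), so $\tilde{\mu}_{k}$ and $\tilde{\mu}_{k}^{\phi_{k}}$ are mutually equivalent there precisely where $\beta_{k}\wedge\beta_{k}\circ\phi_{k}>0$, while off $S(\eta_{k},\eta_{k}^{\phi_{k}})$ they inherit mutual singularity. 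Hence $\alpha_{k}=a_{k}\circ r_{k}$ is exactly the acceptance function produced by Theorem~\ref{thm:invo-rev} for $(\tilde{\mu}_{k},\phi_{k})$, and in particular it enjoys the involutive identity $\alpha_{k}=r_{k}\cdot(\alpha_{k}\circ\phi_{k})$ on $S_{k}$, with $\alpha_{k}=0$ off $S_{k}$.

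I would then conclude in either of two equivalent ways. \emph{Directly:} since $\mu^{(k)}$ is carried by $S_{k}$ and $\phi_{k}(S_{k})=S_{k}$ (Theorem~\ref{thm:invo-rev}), apply $(\nu\cdot f)^{\phi_{k}}=\nu^{\phi_{k}}\cdot(f\circ\phi_{k})$ with $f=\alpha_{k}\beta_{k}$ and rewrite $\eta_{k,S_{k}}^{\phi_{k}}=\eta_{k,S_{k}}\cdot({\rm d}\eta_{k,S_{k}}^{\phi_{k}}/{\rm d}\eta_{k,S_{k}})$; dividing by $\beta_{k}>0$, the target identity $(\mu^{(k)})^{\phi_{k}}=\mu^{(k)}$ reduces to exactly $r_{k}\,(\alpha_{k}\circ\phi_{k})=\alpha_{k}$ on $S_{k}$, which holds. \emph{Slicker:} apply the symmetry relation of Theorem~\ref{thm:invo-rev} to $(\tilde{\mu}_{k},\phi_{k},\alpha_{k})$ with $G\equiv1$, giving $\int F\,{\rm d}\mu^{(k)}=\int F\circ\phi_{k}\,{\rm d}\mu^{(k)}=\int F\,{\rm d}(\mu^{(k)})^{\phi_{k}}$ for every measurable $F$, i.e. $\phi_{k}$-invariance. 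Summing the block identities yields $\mu^{\phi}=\mu$, and the proposition follows.

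The step requiring the most care is not conceptual but the measure-theoretic bookkeeping: justifying that restriction to $S_{k}$, density weighting by $\beta_{k}$ and $\alpha_{k}$, and pushforward by $\phi_{k}$ commute as used, and the precise identification $S_{k}=S(\tilde{\mu}_{k},\tilde{\mu}_{k}^{\phi_{k}})$ together with the $\phi_{k}$-invariance of $S_{k}$. All of these follow from Theorem~\ref{thm:invo-rev}, Proposition~\ref{prop:r-density} and Theorem~\ref{thm:change-of-variables}; the single substantive input is the involutive identity $\alpha_{k}=r_{k}\,(\alpha_{k}\circ\phi_{k})$, which is precisely what makes the per-stage acceptance probabilities assemble into a globally $\phi$-invariant $\mu$, and hence forces the overall acceptance ratio $r$ to be identically one.
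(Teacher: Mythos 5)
Your proof is correct, and it rests on exactly the same key fact as the paper's own proof: part~\ref{enu:invo-rev-alpha-decomp} of Theorem~\ref{thm:invo-rev} applied with $\mu=\eta_{k}\cdot\beta_{k}$ and $\phi=\phi_{k}$, which yields $\alpha_{k}=r_{k}\cdot(\alpha_{k}\circ\phi_{k})$ on $S_{k}$ (your identification of $S_{k}$ with $S(\eta_{k}\cdot\beta_{k},(\eta_{k}\cdot\beta_{k})^{\phi_{k}})$ and of $r_{k}$ with the corresponding Radon--Nikodym derivative is the same implicit step the paper relies on). The only difference is presentational: the paper cancels this identity inside the blockwise density ratio to get $r(\xi)=r_{k}\cdot\frac{\alpha_{k}\circ\phi_{k}}{\alpha_{k}}=1$ directly, whereas you rearrange the same algebra into the equivalent statement $\mu^{\phi}=\mu$ and then invoke essential uniqueness of Radon--Nikodym derivatives.
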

\begin{algorithm}
\caption{Stochastic delayed rejection \label{alg:Stochastic-delayed-rejection}}

\begin{enumerate}
\item Given $z\in\mathsf{Z}$, set $k\leftarrow0$ and $z_{0}\leftarrow z$.
\item Set $k\leftarrow k+1$ and simulate $z_{k}\sim Q_{k}(\mathtt{Z}^{k-1},\cdot)$.
\item With probability $\alpha_{k}(\mathtt{Z}^{k})$ output $\phi_{k}(\mathtt{Z}^{k})_{0}$,
otherwise go to 2.
\end{enumerate}
\end{algorithm}

Although $\beta_{k}(\mathtt{Z}^{k})$ can be updated as the algorithm
progresses, the computation of $\beta_{k}\circ\phi_{k}(\mathtt{Z}^{k})$
can be expensive. Indeed, letting $\zeta^{k}:=\phi_{k}(\mathtt{Z}^{k})$
for $k\in\mathbb{N}_{*}$, we have
\[
\beta_{k}\circ\phi_{k}(\mathtt{Z}^{k})=\beta_{k}(\zeta^{k})=\prod_{i=1}^{k-1}[1-\alpha_{i}(\zeta^{i})],
\]
where for $i\in\llbracket k-1\rrbracket$ $\alpha_{i}(\zeta^{i})=a_{i}\circ r_{i}(\zeta^{i})$
which may need to be computed afresh for each value of $k$ in general.
We will see in Subsection~\ref{subsec:Deterministic-delayed-rejection}
an interesting scenario where this is not the case.
\begin{example}[Delayed-rejection of \citealp{tierney1999some}]
\label{eg:delayed-rejection} Assume $Q_{i}$ has density $q_{i}$
with respect to the Lebesgue or counting measure for each $i\in\llbracket n\rrbracket$
for some $n\in\mathbb{N_{*}}$ and $\phi_{i}(z_{1},z_{2},\ldots,z_{i-1},z_{i})=(z_{i},z_{i-1},\ldots,z_{2},z_{1})$
for $i\in\llbracket n-1\rrbracket$ ``reverse time'' and $\phi_{n}={\rm Id}$
(which ensures finite computations).
\end{example}
\begin{example}[Generalized delayed-rejection of \citealp{green2001delayed}]
 In this scenario involutions other than those of Example \ref{eg:delayed-rejection}
can be used. As an example, one can choose $Q_{1}(x,{\rm d}y)$ on
$\mathsf{X}\times\mathscr{Y}$, $Q_{2}(x,y;{\rm d}z,{\rm d}w)$ on
$\mathsf{X}\times\mathsf{Y}\times\mathscr{Z}\otimes\mathscr{W}$ and
$Q_{3}$ arbitrary. Assume $Q_{i}$ has density $q_{i}$ for each
$i\in\{1,2\}$, we may choose $\phi_{1}(x,y)=(y,x)$, $\phi_{2}(x,y,z,w)=(z,w,x,y)$
and $\phi_{3}(x,y,z,w,\ldots)=(x,y,z,w,\ldots)$.
\end{example}

\subsection{Deterministic delayed rejection\label{subsec:Deterministic-delayed-rejection}}

Delayed rejection can be usefully applied to sample from $\pi$ defined
on $(\mathsf{Z},\mathscr{Z})$ using purely deterministic proposals.
It is possible to use the framework above, but it is more convenient
notationally and conceptually to instead consider $E:=\{(k,z)\colon k\in\mathbb{N},z\in\mathsf{Z}\}$,
the embedding distribution, for $k\in\mathbb{N^{*}}$,
\[
\mu(k,{\rm d}z)=\alpha_{k}(z)\beta_{k}(z)\pi({\rm d}z),
\]
involutions $\phi_{k}\colon\mathsf{Z}\rightarrow\mathsf{Z}$ and as
before, for each $k\in\mathbb{N}$, let $\alpha_{k}=a_{k}\circ r_{k}$,
$\beta_{k}(z)=\beta_{k-1}(z)[1-\alpha_{k-1}(z)]$ with $\beta_{0}\equiv1$
and $\alpha_{0}\equiv0$, where 
\[
r_{k}(z)=\begin{cases}
\frac{{\rm d}\pi_{S_{k}}^{\phi_{k}}}{{\rm d}\pi_{S_{k}}}(z)\frac{\beta_{k}\circ\phi_{k}}{\beta_{k}}(z) & z\in S_{k}\\
0 & \text{otherwise}
\end{cases},
\]
with $S_{k}=S(\pi,\pi^{\phi_{k}})\cap\{z\in\mathsf{Z}\colon\beta_{k}(z)\wedge\beta_{k}\circ\phi_{k}(z)>0\}$.
An algorithmic presentation of delayed rejection with deterministic
proposals is given in Alg.~\ref{alg:Deterministic-delayed-rejection}.
Its justification follows along the same lines as above, and as before
one can choose $\phi_{n}={\rm Id}$ for some $n\in\mathbb{N}_{*}$
to ensure that the stage $n$ ``proposal'' is accepted.

As in the stochastic scenario, the computation of $\beta_{k}\circ\phi_{k}(z)$
can be expensive since this requires in particular the computation
of $r_{1}\circ\phi_{k}(z),\ldots,r_{k-1}\circ\phi_{k}(z)$. Assume
for simplicity that $\pi$ has a density $\varpi$ with respect to
some measure $\nu$ invariant under $\phi_{i}$ for $i\in\llbracket k-1\rrbracket$,
then we remark that on $S_{k}$
\[
r_{i}\circ\phi_{k}(z)=\frac{\varpi\circ\phi_{i}\circ\phi_{k}}{\varpi\circ\phi_{k}}(z)\frac{\beta_{i}\circ\phi_{i}\circ\phi_{k}}{\beta_{i}\circ\phi_{k}}(z).
\]
Then, if for $i\leq k$ the identity $\varpi\circ\phi_{i}\circ\phi_{k}=\varpi\circ\phi_{k-i}$
holds, we see that no new evaluation of the probability density is
required, which is to be contrasted with the general setup in Subsection
\ref{subsec:Stochastic-delayed-rejection}. This identity holds when
for $\psi\colon\mathsf{Z}\rightarrow\mathsf{Z}$, assumed invertible,
is such that for an involution $\sigma\colon\mathsf{Z}\rightarrow\mathsf{Z}$,
$\sigma\circ\psi\circ\sigma=\psi^{-1}$ one considers the involutions
$\phi_{i}=\sigma\circ\psi^{i}$ and has the property $\varpi\circ\sigma=\varpi$,
since
\[
\phi_{i}\circ\phi_{k}=\sigma\circ\psi^{i}\circ\sigma\circ\psi^{k}=\psi^{-i}\circ\psi^{k}=\psi^{k-i}=\sigma\circ\phi_{k-i}.
\]
 This is the setup considered in \citet{sohl2014hamiltonian,campos2015extra}
where an additional twist, detailed in the next subsection, is used.

\begin{algorithm}
\caption{Deterministic delayed rejection \label{alg:Deterministic-delayed-rejection}}

\begin{enumerate}
\item Given $z\in\mathsf{Z}$, set $k\leftarrow0$.
\item Set $k\leftarrow k+1$.
\item With probability $\alpha_{k}(z)$ output $\phi_{k}(z)$ otherwise
go to 2.
\end{enumerate}
\end{algorithm}

\begin{example}[DR deterministic]
 Consider $\pi$ defined on $\big(\mathsf{X}\times\mathsf{V},\mathscr{X}\otimes\mathscr{V}\big)$
of density $\varpi(x,v)=\gamma(x)\kappa(v)$ and let $\phi_{i}:=\sigma\circ\psi^{i}$
for $i\in\llbracket n-1\rrbracket$ and $\phi_{n}={\rm Id}$ with
$\psi(x,v):=(x+v,v)$ and $\sigma(x,v)=(x,-v)$. This can be useful
when trying to traverse a region of low probability. As an example,
let $\mathsf{X}\times\mathsf{V}=\llbracket3\rrbracket\times\{-1,1\}$
and $\varpi(1,1)=\varpi(1,-1)>0$, $\varpi(3,1)=\varpi(3,-1)>0$ but
$\varpi(2,1)=\varpi(2,-1)=0$. In this scenario it is a good idea
to choose $n=3$ rather than the standard $n=2$ choice.
\end{example}

\subsection{Sliced delayed rejection}

The introduction of an auxiliary slice variable can mitigate the computational
cost of the delayed rejection approach, and some recently proposed
algorithms \citet{sohl2014hamiltonian,campos2015extra} can be viewed
as following this principle. In particular, we can define
\[
\varpi(z,u)=\varpi(z)\mathbb{I}\{u\leq\varpi(z)\}/\varpi(z)=\mathbb{I}\{u\leq\varpi(z)\},
\]
and use a slice sampler \citep{neal2003slice} (see Definition~\ref{def:slice-sampler}),
that is a MwG alternating between updating $u$ given $z$ and vice-versa,
that is sampling uniformly from the ``slice'' $\{z\in\mathsf{Z}:\varpi(z)\geq u\}$,
for a fixed $u\in\mathbb{R}_{+}$. One may use any Markov kernel that
leaves this distribution invariant and we naturally focus on MH type
updates.
\begin{example}[Extra chance slice]
\label{eg:xtra-chance} For some fixed $u$, let $\varpi_{u}(z)\propto\mathbb{I}\{u\leq\varpi(z)\}$.
Let $\phi_{i}(z)=\sigma\circ\psi_{i}(z)$ for $i\in\llbracket n-1\rrbracket$
and $\phi_{n}={\rm Id}$. If $a_{i}(r)=a(r)=1\wedge r$, we find that
(see Appendix \ref{sec:X-tra-chance-proof} for a proof) for $z\in\mathsf{Z}$,
for $k\in\llbracket n\rrbracket$ and the convention $\vee_{i=1}^{0}=0$,
\[
r_{k}(z)=\mathbb{I}\{\vee_{i=1}^{k-1}\varpi\circ\phi_{i}(z)<u\leq\varpi(z)\wedge\varpi\circ\phi_{k}(z)\}
\]
while $\beta_{k}(z)=\mathbb{I}\{\varpi(z)\wedge\vee_{i=1}^{k-1}\varpi\circ\phi_{i}(z)<u\}$.
Hence, one accepts as soon as $\varpi\circ\phi_{k}(z)\geq u$ or one
reaches the identity involution $\phi_{n}={\rm Id}$. One notices
that for $\varpi(z)>0$ and $u\sim{\rm Uniform}(0,\varpi(z))$ then
$u_{0}:=u/\varpi(z)\sim{\rm Uniform}(0,1)$ and one can rewrite 
\[
r_{k}(z)=\mathbb{I}\Bigl\{\vee_{i=1}^{k-1}\varpi\circ\phi_{i}(z)/\varpi(z)<u_{0}\leq1\wedge\big(\varpi\circ\phi_{k}(z)/\varpi(z)\big)\Bigr\}.
\]
The overall slice sampler therefore looks like a standard MH algorithm
targetting $\varpi$, where given $u_{0}\sim{\rm Uniform}(0,1)$ one
scans the states $\phi_{i}(z)$ for $i\in\llbracket n\rrbracket$
until the right hand side inequality is satisfied or $n$ is reached.
When $n=2$ we recover the standard MH algorithm targetting $\varpi$
and with deterministic proposal, corresponding to a remark going as
far back as \citet{doi:10.1080/01621459.1998.10473712}.

In the context of HMC samplers this can be a way of taking into account
the oscillatory nature of the energy $i\mapsto H\circ\psi^{i}(x,v)$
under the leapfrog dynamics. More specifically we may have $\varpi\circ\phi_{k}(z)\geq u$
even though $\varpi\circ\phi_{i}(z)<u$ for $i\in\llbracket k-1\rrbracket$.
Note that $\psi$ may involve several steps of the numerical integrator
(which preserves Lebesgue measure and is time-reversible).
\end{example}
\begin{example}
The ``sequential-proposal Metropolis(--Hastings) algorithm'' of
\citet{park2020markov} shares the precise structure of \citet{campos2015extra}
albeit in the scenario where the states are proposed randomly, but
this connection was not made by the authors.
\end{example}

\subsection{Discrete time bouncy particle samplers}

Let $\mathrm{b}$ be a volume preserving ``bounce'' involution,
e.g. with $\mathrm{b}_{v}(x,v):=v-2\left\langle v,n(x)\right\rangle n(x)$
for some function $n:\mathbb{R}^{d}\to\mathbb{R}^{d}$ such that for
all $x\in\mathsf{X},$ $\|n(x)\|=1$ we let $\mathrm{b}(x,v):=\big(x,\mathrm{b}_{v}(x,v)\big)$
for $(x,v)\in\mathsf{X}\times\mathsf{V}$. To fix ideas, for the two
following examples the scenario where $\psi(x,v)=(x+v,v)$ corresponds
to the algorithms of \citet{sherlock2017discrete} and \citet{vanetti2017piecewise}.
Similar ideas are briefly alluded to in \citet{neal2003slice}. 
\begin{example}[Bouncy I - \citet{sherlock2017discrete}]
 Let $\phi_{1}=\sigma\circ\psi$ and $\phi_{2}=\phi_{1}\circ\mathrm{b}\circ\phi_{1}$.
Note that $\phi_{2}$ is an involution since $\phi_{1}$ and $\mathrm{b}$
are involutions. We have the convenient property that $\phi_{1}\circ\phi_{2}=\mathrm{b}\circ\phi_{1}$
(since $\phi_{1}$ is an involution), and $\mathrm{b}\circ\phi_{1}(z)$
will already have been computed to produce $\phi_{2}(z)$. We have
$\mathrm{b}\circ\phi_{1}(x,v)=\big(x+v,\mathrm{b}_{v}(x+v,-v)\big)$
and $\phi_{2}(x,v)=\big(x+v+\mathrm{b}_{v}(x+v,-v),-\mathrm{b}_{v}(x+v,-v)\big)$
and therefore for $\xi\in S$ the acceptance ratio is 
\[
r_{2}(\xi)=\frac{\bar{\alpha}_{1}\circ\phi_{2}(x,v)(\gamma\otimes\kappa)\circ\phi_{2}(x,v)}{\bar{\alpha}_{1}(x,v)\gamma\otimes\kappa(x,v)}=\frac{\gamma\big(x+v+\mathrm{b}_{v}(x+v,-v)\big)}{\gamma(x)}
\]
since, 
\[
r_{1}\circ\phi_{2}(x,v)=\frac{\gamma(x+v)\kappa\circ\mathrm{b}_{v}(x+v,-v)}{\gamma(x)\kappa(v)}=r_{1}(x,v).
\]
\end{example}
\begin{example}[Bouncy II - \citet{vanetti2017piecewise}]
 Let $\phi_{1}=\sigma\circ\psi$ and $\phi_{2}=\mathbf{\mathrm{b}}$,
where $\mathrm{b}$ is an involution. Here more computations are required
since $\phi_{1}\circ\phi_{2}=\phi_{1}\circ\mathrm{b}$ and $\phi_{1}\circ\mathrm{b}(z)$
is not typically computed as a by-product of computing $\phi_{1}(z)$
or $\phi_{2}(z)$. Here for $\xi=(x,v)\in S$
\[
r_{2}(\xi)=\frac{\bar{\alpha}_{1}\circ\phi_{2}(x,v)(\gamma\otimes\kappa)\circ\phi_{2}(x,v)}{\bar{\alpha}_{1}(x,v)\gamma\otimes\kappa(x,v)}=\frac{\bar{\alpha}_{1}\circ\phi_{2}(x,v)}{\bar{\alpha}_{1}(x,v)}
\]
where 
\[
r_{1}\circ\phi_{2}(x,v)=\frac{\gamma\big(x+{\rm b}_{v}(x,v)\big)\kappa\big(-\mathrm{b}_{v}(x+v,-v)\big)}{\gamma(x)\kappa(v)}=\frac{\gamma\big(x+{\rm b}_{v}(x,v)\big)}{\gamma(x)}.
\]
\end{example}

\subsection{Discrete-time exact event chain algorithms}

In a lineage of contributions \citet{jaster1999improved,bernard2009event,michel2014generalized,michel2015event,michel:2016}
efficient continuous time nonreversible Markov process Monte Carlo
(MPMC) algorithms have been developed to sample from models arising
in statistical physics. We show here that it is possible to develop
discrete time and exact counterparts of those, that is algorithms
of finite run time without any approximation but the machine numerical
precision limit and are ensured to leave the desired distribution
invariant. More specifically let $x=(x_{1},\ldots,x_{m})$ for some
$m\in\mathbb{N}$ where $x_{1},\ldots,x_{m}\in\mathsf{X}$. For example
$\mathsf{X}$ might be a bounded subset of $\mathbb{R}^{d}$ for some
$d$ or as is a common in the physics literature a torus. This can
be thought of as the positions of $m$ particles, modelled as spheres.
We also define a velocity variable $v\in\mathsf{\mathsf{V}\subset\mathbb{R}}^{d}$.
The target distribution has density with respect to some measure $\nu$,
which can be the product of the Lebesgue or Hausdorff or counting
measure depending on the scenario considered,
\begin{equation}
\varpi(x,v)=\kappa(v)\gamma(x)\propto\kappa(v)\prod_{1\leq i<j\leq m}\mathbb{I}\left\{ \Vert x_{i}-x_{j}\Vert>\delta_{ij}\right\} ,\label{eq:hard-sphere-model}
\end{equation}
where $\delta_{ij}\in\mathbb{R}_{+}$ for all $i,j\in\left\llbracket m\right\rrbracket $,
$i<j$ and it is assumed that $\kappa(v)=\kappa(-v)$ for all $v\in\mathsf{V}$.
No simplify notation we introduce the feasible set $F\subset\mathsf{X}$
such that $\mathbb{I}\{x\in F\}=\prod_{1\leq i<j\leq m}\mathbb{I}\left\{ \Vert x_{i}-x_{j}\Vert>\delta_{ij}\right\} $.
In the absence of mean field we see the necessity to constrain $\mathsf{X}$
to be ``bounded'' for this to define a probability distribution.
This clearly accommodates hard constraints on the distance between
the particles. We now briefly describe the aforementioned MPMC in
the hard sphere scenario given by (\ref{eq:hard-sphere-model}). This
MPMC is a so-called piecewise deterministic Markov process where a
sphere, labelled $i\in\llbracket m\rrbracket$, evolves continuously
along a straight line of direction the velocity $v$ until a collision
with another sphere occurs, say $j\in\llbracket m\rrbracket\setminus\{i\}$
or until an exponential clock of fixed parameter rings. When a collision
occurs the velocity is transferred to sphere $j$, while when the
clock rings a new velocity is drawn afresh from $\kappa$. For soft
potentials implementation of the algorithm will typically require
time discretisation in order to determine the time to a ``soft''
collision. For completeness we provide a description of the generator
of the MPMC above for soft potentials in Appendix \ref{sec:Event-chain-algorithms}.
In Alg. \ref{alg:BKM} we introduce a novel exact discretization of
aforementioned MPMC which circumvents the need for a time discretization
approximation thanks to a MH kernel involving delayed rejection. Note
that in practice this kernel is composed with $\mathfrak{S}$ such
that for any $f\in\mathsf{X}^{m}\times\mathsf{V}\times\llbracket m\rrbracket^{2}$,
$\mathfrak{S}f(x,v,i,j)=f\circ\sigma(x,v,i,j)$ where for any $(x,v,i,j)\in\mathsf{X}^{m}\times\mathsf{V}\times\llbracket m\rrbracket^{2}$,
$\sigma(x,v,i,j)=(x,-v,i,j)$ denotes the function that flips the
sign of $v$. 

\begin{algorithm}
\caption{\label{alg:BKM} $\mathfrak{S}-$ symmetrisation of the discrete time
event chain kernel}

Input: $(x,v,i)$
\begin{enumerate}
\item Set $I\leftarrow\emptyset$. For $k\in\left\llbracket m\right\rrbracket \setminus\{i\}$,
if $\Vert x_{i}+v-x_{k}\Vert\leq\delta_{ik}$ set $I\leftarrow I\cup\{k\}$.
\item If $I=\emptyset$, set $j=i$, $x_{i}\leftarrow x_{i}+v$ and output
$(x,-v,i,j)$.
\item Otherwise, sample $j\sim{\rm Uniform}(I)$.
\item Set $I'\leftarrow\emptyset$. For $k\in\left\llbracket m\right\rrbracket \setminus\{j\}$,
if $\Vert x_{k}+v-x_{j}\Vert\leq\delta_{jk}$ set $I'\leftarrow I'\cup\{k\}$.
\item With probability $a(|I|/|I'|)$ output $(x,-v,j)$, otherwise output
$(x,v,i)$.
\end{enumerate}
\end{algorithm}

To justify the algorithm we let $\xi:=(x,v,i,j)\in\mathsf{X}^{m}\times\mathsf{V}\times\llbracket m\rrbracket^{2}$:
here $i$ is the index of the particle that is ``moving'' and $j$
the index of a candidate particle that will be ``given'' the velocity
of the $i$th particle, in a way that will become clear. The algorithm
is a two-stage delayed rejection MH kernel. The first involution is
$\phi_{1}=\sigma\circ\psi_{1}$ where $\psi_{1}(x,v,i,j)=(x',v,i,j)$
with $x'_{i}=x_{i}+v$ and $x'_{j}=x_{j}$ for $j\in\left\llbracket m\right\rrbracket \setminus\{i\}$,
which we may denote $x'=x+\mathbf{e}_{i}\varoast v$ with $\varoast$
the Kronecker product and $\{\mathbf{e}_{i}\in\mathsf{X}^{m},i\in\llbracket m\rrbracket\}$
such that $(\mathbf{e}_{i})_{j}=\mathbb{I}\{i=j\}$. An interpretation
of $\psi_{1}$ is that the $i-$th particle is translated by $v$
and all other particles remain fixed. Let $I(x,v,i):=\{j\in\left\llbracket m\right\rrbracket \setminus\{i\}:\Vert x_{i}+v-x_{j}\Vert\leq\delta_{ij}\}$,
i.e. $I(x,v,i)$ is the set of particle indices $j$ such that $x_{i}+v$
``collides'' with $x_{j}$ and let $|I|(x,v,i):=|I(x,v,i)|$, so
that $|I|(x,v,i)$ is the number of such collisions. The second involution
is simply $\phi_{2}(x,v,i,j)=(x,-v,j,i)$, that is particle $j$ becomes
active and has velocity $-v$. We define $\mu({\rm d}\xi)\propto\gamma\otimes\kappa({\rm d}x,{\rm d}v)\mathbb{I}\{i\in\left\llbracket m\right\rrbracket \}q(j;x,v,i)$,
where we are free to choose the following proposal distribution for
the next active particle, among those in $I(x,v,i)$,
\[
q(j;x,v,i)=\begin{cases}
\mathbb{I}\{i=j\} & I(x,v,i)=\emptyset,\\
\frac{\mathbb{I}\{j\in I(x,v,i)\}}{|I|(x,v,i)} & I(x,v,i)\neq\emptyset.
\end{cases}
\]
We have (with $0/0=0$ here)
\[
\rho(x,v,i,j)=\frac{\kappa(v)}{m}\mathbb{I}\{x\in F,i\in\llbracket m\rrbracket\}\begin{cases}
\mathbb{I}\{I(x,v,i)=\emptyset,i=j\} & I(x,v,i)=\emptyset,\\
\frac{\mathbb{I}\{j\in I(x,v,i)\neq\emptyset\}}{|I|(x,v,i)} & I(x,v,i)\neq\emptyset.
\end{cases}
\]
Notice that $\mathbb{I}\{x+\mathbf{e}_{i}\varoast v\in F\}=\mathbb{I}\{I(x,v,i)=\emptyset\}$
and equivalently $\mathbb{I}\{x\in F\}=\mathbb{I}\{I\circ\tilde{\phi}_{1}(x,v,i)=\emptyset\}$
with $\tilde{\phi}_{1}(x,v,i)=(x+\mathbf{e}_{i}\varoast v,-v,i)$
and $I\circ\tilde{\phi}_{1}(x,v,i):=I\big(\tilde{\phi}_{1}(x,v,i)\big)$
\begin{align*}
\rho\circ\phi_{1}(x,v,i,j) & =\frac{\kappa(-v)}{m}\mathbb{I}\{x+\mathbf{e}_{i}\varoast v\in F,i\in\llbracket m\rrbracket\}\begin{cases}
\mathbb{I}\{x\in F,i=j\} & I(x,v,i)=\emptyset,\\
\frac{\mathbb{I}\{x\notin F,j\in I\circ\tilde{\phi}_{1}(x,v,i)\}}{|I|\circ\tilde{\phi}_{1}(x,v,i)} & I(x,v,i)\neq\emptyset.
\end{cases}
\end{align*}
where. Therefore, using that $\rho(x,v,i,j),\rho\circ\phi_{1}(x,v,i,j)\in\{0,\kappa(v)/m\}$
we obtain
\begin{align*}
r_{1}(x,v,i,j) & =\mathbb{I}\{x\in F,(x+\mathbf{e}_{i}\varoast v)\in F,i=j\}\\
 & =\mathbb{I}\{I\circ\tilde{\phi}_{1}(x,v,i)=I(x,v,i)=\emptyset,i=j\}.
\end{align*}
For the second stage observe that for $i,j\in\llbracket m\rrbracket$
we have $\Vert x_{i}+v-x_{j}\Vert=\Vert x_{i}-(x_{j}-v)\Vert$ and
therefore $\mathbb{I}\{j\in I(x,v,i)\neq\emptyset\}=\mathbb{I}\{i\in I(x,-v,j)\neq\emptyset\}$
and therefore
\[
\rho\circ\phi_{1}(x,v,i,j)=0\iff\rho\circ\phi_{1}\circ\phi_{2}(x,v,i,j)=0
\]
in which case
\[
\rho\circ\phi_{2}(x,v,i,j)=\mathbb{I}\{x\in F\}\frac{\mathbb{I}\{i\in I(x,-v,j)\neq\emptyset\}}{|I|(x,-v,j)}\frac{\kappa(v)}{m}.
\]
Therefore we conclude that
\[
r_{2}(x,v,i,j)=\mathbb{I}\{x\in F,i\in I(x,-v,j)\neq\emptyset,j\in I(x,v,i)\neq\emptyset\}\frac{|I|(x,v,i)}{|I|(x,-v,j)},
\]
and easy counterexamples show that there is no reason for the equality
$|I|(x,v,i)=|I|(x,-v,j)\neq0$ to hold in general, when the indicator
function is one. In practice, one can implement the combination
of this kernel with the refreshments for $(x,v,i)$ in various ways
to save time. In particular, one may be able to determine the first
time at which either a refreshment occurs or there is a collision.
We do not consider these details here.

We now show that Alg. \ref{alg:BKM} can be straightforwardly adapted
to accommodate ``soft potentials'' (or constraints) by using a slice
sampler strategy and hence the introduction of instrumental variables.
For example, assume that
\[
\gamma(x)=\prod_{1\leq i<j\leq m}\gamma_{ij}(x),
\]
where $\gamma_{ij}(x)=\Gamma(\Vert x_{i}-x_{j}\Vert)$ with $\Gamma\colon\mathbb{R}_{+}\rightarrow\mathbb{R}_{+}$
is non-decreasing and such that $\gamma$ is a probability density
on $\mathsf{X}$ for the relevant dominating measure. Then one can
consider the instrumental distribution, with $u=(u_{ij})\in\mathbb{R}_{+}^{m(m-1)/2}$
for $i\in\llbracket m-1\rrbracket$ and $j\in\llbracket i+1,m\rrbracket$,
\begin{align*}
\varpi(x,v,u) & :=\gamma(x)\prod_{1\leq i<j\leq m}\frac{\mathbb{I}\{u_{ij}\leq\gamma_{ij}(x)\}}{\gamma_{ij}(x)}\\
 & =\prod_{1\leq i<j\leq m}\mathbb{I}\{u_{ij}\leq\gamma_{ij}(x)\}\\
 & =\prod_{1\leq i<j\leq m}\mathbb{I}\left\{ \Vert x_{i}-x_{j}\Vert\geq\Gamma^{-1}(u_{ij})\right\} ,
\end{align*}
where $\Gamma^{-1}(u):=\inf\{y:\Gamma(y)\geq u\}$ . Hence, for a
fixed $u$, we have $\pi_{u}(x,v)$ of the same form as (\ref{eq:hard-sphere-model})
with $\delta_{ij}=\Gamma^{-1}(u_{ij})$, suggesting the use of a MwG
strategy to sample from $\pi$. It is naturally possible to consider
more general forms for the $\gamma_{ij}$ and adaptation of the algorithm
is straightforward.

\section{Acknowledgements}

CA and SL acknowledge support from EPSRC ``Intractable Likelihood:
New Challenges from Modern Applications (ILike)'' (EP/K014463/1).
CA and AL acknowledge support of EPSRC grant CoSInES (EP/R034710/1)
and CA acknowledges support of EPSRC grant Bayes4Health (EP/R018561/1).

\printbibliography

\appendix

\section{Proofs\label{sec:appendix-some-proofs}}
\begin{proof}[Proof of Theorem~\ref{thm:invo-rev}]
 Let $\lambda=\mu+\mu^{\phi}$ and define $\rho={\rm d}\mu/{\rm d}\lambda$.
We observe that $\lambda^{\phi}=\lambda$. Then for any $A\in\mathscr{E}$,
we find using Theorems~\ref{thm:change-of-variables}--\ref{thm:Radon-Nikodym},
\begin{align*}
\int{\bf 1}_{A}(\xi)\rho\circ\phi(\xi)\lambda({\rm d}\xi) & =\int{\bf 1}_{A}\circ\phi(\xi)\rho(\xi)\lambda^{\phi}({\rm d}\xi)\\
 & =\int{\bf 1}_{A}\circ\phi(\xi)\mu({\rm d}\xi)\\
 & =\int{\bf 1}_{A}(\xi)\mu^{\phi}({\rm d}\xi),
\end{align*}
so $\rho\circ\phi={\rm d}\mu^{\phi}/{\rm d}\lambda$. Define $S=\{\xi\in E:\rho(\xi)\wedge\rho\circ\phi(\xi)>0\}$,
which satisfies $\phi(S)=S$. Since $S$ is the intersection of two
measurable sets, it is measurable and so the restrictions of $\mu$
and $\mu^{\phi}$ to $S$ are well defined, and for $\xi\in S$,
\[
\frac{{\rm d}\mu_{S}^{\phi}}{{\rm d}\mu_{S}}(\xi)=\frac{\rho\circ\phi}{\rho}(\xi),\qquad\frac{{\rm d}\mu_{S}}{{\rm d}\mu_{S}^{\phi}}(\xi)=\frac{\rho}{\rho\circ\phi}(\xi),
\]
so $\mu_{S}\equiv\mu_{S}^{\phi}$. Let $A=\{\xi:\rho(\xi)=0\}$ and
$B=\{\xi:\rho(\xi)>0,\rho\circ\phi(\xi)=0\}$. We deduce that $\mu(A)=\int_{A}\rho(\xi)\lambda({\rm d}\xi)=0$
and $\mu^{\phi}(B)=\int_{B}\rho\circ\phi(\xi)\lambda({\rm d}\xi)=0$.
Since $A\cap B=\emptyset$, $A\cup B=S^{\complement}$ and $\mu(A)=\mu^{\phi}(B)=0$
we conclude that $\mu$ and $\mu^{\phi}$ are mutually singular on
$S^{\complement}$.

For part b(i), $\left(\rho\circ\phi/\rho\right)\circ\phi=\rho/\rho\circ\phi$
on $S$, since $\phi$ is an involution. Hence, $r\circ\phi=1/r$
on $S$ and since $a(0)=0$, $\alpha=r\cdot\alpha\circ\phi$ on $S$
and $\alpha=0$ on $S^{\complement}$ by the definition of $\alpha$
and condition on $a$.

For part b(ii), combining the first part with $a(0)=0$, Theorems~\ref{thm:change-of-variables}--\ref{thm:Radon-Nikodym}
and $\phi$ an involution, we obtain
\begin{align*}
\int_{E}F(\xi)G\circ\phi(\xi)\alpha(\xi)\mu({\rm d}\xi) & =\int_{S}F(\xi)G\circ\phi(\xi)\alpha(\xi)\mu_{S}({\rm d}\xi)\\
 & =\int_{S}F(\xi)G\circ\phi(\xi)r(\xi)\alpha\circ\phi(\xi)\mu_{S}({\rm d}\xi)\\
 & =\int_{S}F(\xi)G\circ\phi(\xi)\alpha\circ\phi(\xi)\mu_{S}^{\phi}({\rm d}\xi)\\
 & =\int_{S}F\circ\phi(\xi)G(\xi)\alpha(\xi)\mu_{S}({\rm d}\xi)\\
 & =\int_{E}F\circ\phi(\xi)G(\xi)\alpha(\xi)\mu({\rm d}\xi).
\end{align*}

For the part b(iii), we define the sub-Markov kernels
\[
T(\xi,A)=\alpha(\xi){\bf 1}_{A}(\phi(\xi)),\qquad\xi\in E,A\in\mathscr{E},
\]
and
\[
R(\xi,A)=\left\{ 1-\alpha(\xi)\right\} {\bf 1}_{A}(\xi),\qquad\xi\in E,A\in\mathscr{E},
\]
so that $\Pi=T+R$. First we observe that
\[
\int F(\xi)G(\xi')\mu({\rm d}\xi)R(\xi,{\rm d}\xi')=\int F(\xi)G(\xi)\left\{ 1-\alpha(\xi)\right\} \mu({\rm d}\xi)=\int G(\xi)F(\xi')\mu({\rm d}\xi)R(\xi,{\rm d}\xi').
\]

Then from the second part,
\begin{align*}
\int F(\xi)G(\xi')\mu({\rm d}\xi)T(\xi,{\rm d}\xi') & =\int F(\xi)G\circ\phi(\xi)\alpha(\xi)\mu({\rm d}\xi)\\
 & =\int F\circ\phi(\xi)G(\xi)\alpha(\xi)\mu({\rm d}\xi)\\
 & =\int G(\xi)F(\xi')\mu({\rm d}\xi)T(\xi,{\rm d}\xi').
\end{align*}
\end{proof}

\begin{proof}[Proof of Proposition~\ref{cor:rev-coordinate}]
Let $f,g:\mathsf{Z}\to[0,1]$ be measurable and let $F,G\colon\Xi\rightarrow[0,1]$
such that $F(\xi)=f(\xi_{0})$ and $G(\xi)=g(\xi_{0})$. Using reversibility
of $\Pi$, we find
\begin{align*}
\int f(\xi_{0})g(\xi_{0}')\pi({\rm d}\xi_{0})P(\xi_{0},{\rm d}\xi_{0}') & =\int f(\xi_{0})g(\xi_{0}')\pi({\rm d}\xi_{0})\mu_{\xi_{0}}({\rm d}\xi_{-0})\Pi(\xi;{\rm d}\xi')\\
 & =\int F(\xi)G(\xi')\mu({\rm d}\xi)\Pi(\xi;{\rm d}\xi')\\
 & =\int G(\xi)F(\xi')\mu({\rm d}\xi)\Pi(\xi;{\rm d}\xi')\\
 & =\int g(\xi_{0})f(\xi_{0}')\pi({\rm d}\xi_{0})P(\xi_{0},{\rm d}\xi_{0}').
\end{align*}
\end{proof}
\begin{proof}[Proof of Lemma~\ref{lemnu:leb-count-combi}]
Since $g$ is an involution and $g(\mathsf{Y})\subseteq\mathsf{Y}$,
we have $\lambda_{Y}=\lambda_{Y}^{g}$ as explained in Remark~\ref{rem:d-lambda-phi-common}.
Let $\psi$ be a non-negative function and $\lambda({\rm d}x,{\rm d}y)=\lambda_{X}({\rm d}x)\lambda_{Y}({\rm d}y)$.
Then we find
\begin{align*}
\int\psi(x,y)\lambda({\rm d}x,{\rm d}y) & =\int\psi(x,y)\lambda_{Y}({\rm d}y)\lambda_{X}({\rm d}x)\\
 & =\int\psi(x,y)\lambda_{Y}^{g}({\rm d}y)\lambda_{X}({\rm d}x)\\
 & =\int\psi(x,g(y))\lambda_{Y}({\rm d}y)\lambda_{X}({\rm d}x)\\
 & =\int\psi(f(x,y),g(y))\left|{\rm det}f_{y}'(x)\right|\lambda_{Y}({\rm d}y)\lambda_{X}({\rm d}x)\\
 & =\int\psi\circ\phi(x,y)\left|{\rm det}f_{y}'(x)\right|\lambda({\rm d}x,{\rm d}y)
\end{align*}
where $f_{y}(x)=x\mapsto f(x,y)$ for each $y\in\mathsf{Y}$. Since
\[
\int\psi\circ\phi(x,y)\lambda^{\phi}({\rm d}x,{\rm d}y)=\int\psi(x,y)\lambda({\rm d}x,{\rm d}y),
\]
and for an arbitrary, measurable, non-negative $g:E\to\mathbb{R}$
we can take $\psi=g\circ\phi^{-1}$ to obtain $g=\psi\circ\phi$,
we obtain that ${\rm d}\lambda^{\phi}/{\rm d}\lambda=\left|{\rm det}f_{y}'(x)\right|$.
\end{proof}
\begin{proof}[Proof of Proposition~\ref{prop:psi-is-sigma-phi}]
 For part \ref{enu:norev-prop-psi}, the identity $\psi^{-1}=\sigma\circ\psi\circ\sigma$
is verified by observing that, since $\psi=\sigma\circ\phi$, $\sigma\circ\psi\circ\sigma=\phi\circ\sigma$
and that indeed $\phi\circ\sigma\circ\psi=\psi\circ\phi\circ\sigma={\rm Id}$.
We then note that $\phi=\sigma\circ\psi$ and so for $A\in\mathscr{E}$,
\[
\lambda^{\phi}(A)=\lambda(\phi^{-1}(A))=\lambda(\phi(A)=\lambda(\sigma\circ\psi(A))=\lambda^{\sigma}(\psi(A))=\lambda(\psi(A))=\lambda^{\psi^{-1}}(A).
\]
Since $\lambda^{\sigma}=\lambda$, for $f$ integrable w.r.t. $\mu^{\sigma}$,
\begin{align*}
\int f(\xi)\mu^{\sigma}({\rm d}\xi) & =\int f\circ\sigma(\xi)\mu({\rm d}\xi)\\
 & =\int f\circ\sigma(\xi)\rho(\xi)\lambda({\rm d}\xi)\\
 & =\int f\circ\sigma(\xi)\rho(\xi)\lambda^{\sigma}({\rm d}\xi)\\
 & =\int f(\xi)\rho\circ\sigma(\xi)\lambda({\rm d}\xi),
\end{align*}
and so ${\rm d}\mu^{\sigma}/{\rm d}\lambda=\rho\circ\sigma$. Since
$\mu^{\sigma}=\mu$, we have $\rho\circ\sigma=\rho$. Hence, $\rho\circ\phi=\rho\circ\sigma\circ\psi=\rho\circ\psi$.
We proceed to part \ref{enu:non-rev-def-kernel}, and note that
\begin{align*}
\Pi(\xi,{\rm d}\xi') & =\alpha(\xi)\delta_{\phi(\xi)}({\rm d}\xi')+[1-\alpha(\xi)]\delta_{\xi}({\rm d}\xi'),
\end{align*}
from which for $\xi\in E$ and $f\colon E\rightarrow[0,1]$, with
$\Phi f:=f\circ\phi$,
\[
\Pi\mathfrak{S}f=\alpha(\xi)\cdot\Phi\mathfrak{S}f+[1-\alpha(\xi)]\mathfrak{S}^{2}f
\]
and use that $\mathfrak{S}^{2}={\rm Id}$ and $\Phi\mathfrak{S}f(\xi)=\phi\left(\mathfrak{S}f\right)(\xi)=\phi\left(f\circ\sigma\right)(\xi)=f\circ\sigma\circ\phi(\xi)$.
Using the identities from part \ref{enu:norev-prop-psi}, we can use
the general acceptance ratio for $\Pi$ from Proposition~\ref{prop:r-density}
to obtain,
\[
r(\xi)=\frac{\rho\circ\phi}{\rho}(\xi)\frac{{\rm d}\lambda^{\phi}}{{\rm d}\lambda}(\xi)=\frac{\rho\circ\psi}{\rho}(\xi)\frac{{\rm d}\lambda^{\psi^{-1}}}{{\rm d}\lambda}(\xi),\qquad\xi\in S.
\]
For part \ref{enu:nonrev-def-muS-reversible}, for $f,g\in E\rightarrow[0,1]$
we use that $\varPi\mathfrak{S}:=\Pi$ satisfies detailed balance
and $\mu\mathfrak{S}=\mu$ 
\begin{align*}
\int f(\xi)g(\xi')\mu({\rm d}\xi)\varPi(\xi,{\rm d}\xi') & =\int f(\xi)\mathfrak{S}g(\xi')\mu({\rm d}\xi)\varPi\mathfrak{S}(\xi,{\rm d}\xi')\\
 & =\int f(\xi)\mathfrak{S}g(\xi')\mu({\rm d}\xi')\varPi\mathfrak{S}(\xi',{\rm d}\xi)\\
 & =\int f(\xi)\mathfrak{S}g(\xi')\mu\mathfrak{S}({\rm d}\xi')\varPi\mathfrak{S}(\xi',{\rm d}\xi)\\
 & =\int f(\xi)g(\xi')\mu({\rm d}\xi')\mathfrak{S}\varPi\mathfrak{S}(\xi',{\rm d}\xi).
\end{align*}
For part \ref{enu:nonrev-SvarPi} we proceed as above and for $\xi\in S$
notice that
\begin{align*}
\mathfrak{S}\left[\frac{\rho\circ\phi}{\rho}\frac{{\rm d}\lambda^{\phi}}{{\rm d}\lambda}\right](\xi) & =\frac{\rho\circ\phi\circ\sigma}{\rho\circ\sigma}(\xi)\frac{{\rm d}\lambda^{\phi}}{{\rm d}\lambda}\circ\sigma(\xi)\\
 & =\frac{\rho\circ\phi\circ\sigma}{\rho}(\xi)\frac{{\rm d}\lambda^{\phi\circ\sigma}}{{\rm d}\lambda}(\xi),
\end{align*}
where we have used that
\begin{align*}
\int f(\xi)\frac{{\rm d}\lambda^{\phi}}{{\rm d}\lambda}\circ\sigma(\xi)\lambda({\rm d}\xi) & =\int f(\xi)\frac{{\rm d}\lambda^{\phi}}{{\rm d}\lambda}\circ\sigma(\xi)\lambda^{\sigma}({\rm d}\xi)\\
 & =\int f(\xi)\frac{{\rm d}\lambda^{\phi}}{{\rm d}\lambda}(\xi)\lambda({\rm d}\xi)\\
 & =\int f\circ\phi(\xi)\lambda({\rm d}\xi)\\
 & =\int f\circ\phi(\xi)\lambda^{\sigma}({\rm d}\xi)\\
 & =\int f\circ\phi\circ\sigma(\xi)\lambda({\rm d}\xi)\\
 & =\int f(\xi)\frac{{\rm d}\lambda^{\phi\circ\sigma}}{{\rm d}\lambda}\lambda({\rm d}\xi),
\end{align*}
where we have used that $\lambda^{\phi\circ\sigma}(A)=\lambda\big(\sigma\circ\phi(A)\big)=\lambda\big(\phi(A)\big)=\lambda^{\phi}\big(A\big)$.
\end{proof}
\begin{proof}[Proof of Lemma~\ref{lem:update-one-preserve}]
By the definition of $\psi$, for measurable $A\in\mathsf{X}\times\mathsf{Y}$
\[
\psi^{-1}(A)=\{(x,y):(x,y+f(x))\in A.
\]
Let $\lambda_{X}$ and $\lambda_{Y}$ be, respectively, the Lebesgue
measures on $\mathbb{R}^{d_{X}}$ and $\mathbb{R}^{d_{Y}}$. Using
the translation-invariance of the Lebesgue measure, we obtain that
for arbitrary, measurable $A$,
\begin{align*}
\lambda(\psi^{-1}(A)) & =\int_{\mathbb{R}^{d_{X}}}\int_{\{y:(x,y+f(x))\in A\}}\lambda_{Y}({\rm d}y)\lambda_{X}({\rm d}x)\\
 & =\int_{\mathbb{R}^{d_{X}}}\int_{\{y:(x,y)\in A\}}\lambda_{Y}({\rm d}y)\lambda_{X}({\rm d}x)\\
 & =\lambda(A),
\end{align*}
from which we can conclude that $\lambda^{\psi}=\lambda$.
\end{proof}
\begin{proof}[Proof of Lemma~\ref{lem:alternating-maps-preserve-reversible}]
Let $\lambda$ denote the Lebesgue measure on $\mathbb{R}^{2d}$.
By Lemma~\ref{lem:update-one-preserve}, $\psi_{A}$ and $\psi_{B}$
each preserve $\lambda$, and hence $\psi$ preserves $\lambda$ as
a composition of $\lambda$-preserving maps. We observe that 
\[
\sigma\circ\psi(x,v)=\big(x+\jmath(v+\imath(x)),-v-\imath(x)-\imath[x+\jmath(v+\imath(x)]\big),
\]
so that 
\[
\psi_{B}\circ\sigma\circ\psi(x,v)=\big(x+\jmath[v+\imath(x)],-v-\imath(x)\big),
\]
and using $\jmath(-v')=-\jmath(v')$,
\[
\psi_{A}\circ\psi_{B}\circ\sigma\circ\psi(x,v)=\big(x,-v-\imath(x)\big).
\]
It follows that 
\[
\psi\circ\sigma\circ\psi(x,v)=(x,-v),
\]
and so $\sigma\circ\psi\circ\sigma\circ\psi(x,v)=(x,v)$, from which
we conclude that $\psi^{-1}=\sigma\circ\psi\circ\sigma$.
\end{proof}
\begin{proof}[Proof of Lemma~\ref{lem:stop-IMH}]
Let $n=\tau(\mathtt{Z})$, so $s_{k}(\mathtt{Z})=0$ for $k\in\llbracket n-1\rrbracket$
and $s_{n}(\mathtt{Z})=1$. From (\ref{eq:generic-def-tau}) it is
sufficient to show that $s_{k}\circ\sigma_{l}(\mathtt{Z})=s_{k}(\mathtt{Z})$
for 
\[
(k,l)\in\llbracket n\rrbracket\times\llbracket n-1\rrbracket=\left\{ 1\leq l\leq k\leq n,l<n-1\right\} \cup\left\{ 1\leq k<l\leq n-1\right\} =:S_{1}\cup S_{2},
\]
to establish the result. From condition~\ref{enu:IMH-lem-tau-constant},
for $(k,l)\in S_{1}$, $s_{k}\circ\sigma_{l}(\mathtt{Z})=s_{k}(\mathtt{Z})$
and in particular $s_{n-1}\circ\sigma_{l}(\mathtt{Z})=0$ for $l\in\llbracket n-1\rrbracket$
while $s_{n}\circ\sigma_{l}(\mathtt{Z})=1$. From condition~\ref{enu:sk-nondecreasing}
and then condition~\ref{enu:IMH-lem-tau-constant}, for $(k,l)\in S_{2}$,
$s_{k}\circ\sigma_{l}(\mathtt{Z})\leq s_{l}\circ\sigma_{l}(\mathtt{Z})=s_{l}(\mathtt{Z})=0$
and so $s_{k}\circ\sigma_{l}(\mathtt{Z})=0=s_{k}(\mathtt{Z})$.
\end{proof}
\begin{proof}[Proof of Proposition \ref{prop:measure-preserving-reversal}]
From Theorem \ref{thm:change-of-variables} and the fact that $\nu^{\psi}=\nu$,
for any $f,g\colon\mathsf{Z}\rightarrow[0,1]$
\begin{align*}
\int f(z)g(z')\nu({\rm d}z)Q(z,{\rm d}z') & =\int f(z)g\circ\psi(z)\nu({\rm d}z)\\
 & =\int f\circ\psi^{-1}(z)g(z)\nu^{\psi}({\rm d}z)\\
 & =\int f\circ\psi^{-1}(z)g(z)\nu({\rm d}z)\\
 & =\int f(z')g(z)\nu({\rm d}z)Q^{*}(z,{\rm d}z'),
\end{align*}
from which one can conclude.
\end{proof}
\begin{proof}[Proof of Lemma \ref{lemma:markov-chain-change-of-measure}]
Let $k\in\mathbb{Z}$, then the probability measure $\Lambda^{k}$
has finite dimensional distributions satisfying, for $n\geq|k|$,
\[
\Lambda_{n}^{k}({\rm d}\mathtt{Z})=\pi({\rm d}z_{k})\prod_{i=k+1}^{n}Q(z_{i-1},{\rm d}z_{i})\prod_{i=-n}^{k-1}Q^{*}(z_{i+1},{\rm d}z_{i}),
\]
which also guarantee the existence of $\Lambda^{k}$ by Kolmogorov's
Extension Theorem \citep{billingsley1995probability}. Notice that
for $n\geq|k|$ and $\mathtt{Z}\in S_{k}$, since $(\nu,Q,Q^{*})$
is a reversible triplet,
\begin{align*}
\nu({\rm d}z_{k})\prod_{i=k+1}^{n}Q(z_{i-1},{\rm d}z_{i})\prod_{i=-n}^{k-1}Q^{*}(z_{i+1},{\rm d}z_{i})\\
=\nu({\rm d}z_{k-{\rm sign}(k)}) & \prod_{i=k+1-{\rm sign}(k)}^{n}Q(z_{i-1},{\rm d}z_{i})\prod_{i=-n}^{k-1-{\rm sgn}(k)}Q^{*}(z_{i+1},{\rm d}z_{i}),
\end{align*}
implying,
\begin{align*}
\Lambda_{n}^{k}({\rm d}\mathtt{Z}) & =\varpi(z_{k})\nu({\rm d}z_{k})\prod_{i=k+1}^{n}Q(z_{i-1},{\rm d}z_{i})\prod_{i=-n}^{k-1}Q^{*}(z_{i+1},{\rm d}z_{i})\\
 & =\varpi(z_{k})\nu({\rm d}z_{0})\prod_{i=1}^{n}Q(z_{i-1},{\rm d}z_{i})\prod_{i=-n}^{-1}Q^{*}(z_{i+1},{\rm d}z_{i})\\
 & =\frac{\varpi(z_{k})}{\varpi(z_{0})}\pi({\rm d}z_{0})\prod_{i=1}^{n}Q(z_{i-1},{\rm d}z_{i})\prod_{i=-n}^{-1}Q^{*}(z_{i+1},{\rm d}z_{i})\\
 & =\frac{\varpi(z_{k})}{\varpi(z_{0})}\Lambda_{n}^{0}({\rm d}\mathtt{Z}),
\end{align*}
from which we conclude by application of \citet[Theorem~4.3.5]{durrett2019probability},
which is a mild generalization of \citet{Engelbert1980}.
\end{proof}
\begin{proof}[Proof of Lemma~\ref{lem:non-measure-preserve-map}]
For $f,g\colon\mathsf{Z}\rightarrow[0,1]$ we have
\begin{align*}
\int f(z)g(z')\nu({\rm d}z)\Psi(z,{\rm d}z') & =\int f(z)g\circ\psi(z)\nu({\rm d}z)\\
 & =\int f\circ\psi^{-1}(z)g(z)\nu^{\psi}({\rm d}z)\\
 & =\int f(z')g(z)\nu^{\psi}({\rm d}z)\Psi^{*}(z,{\rm d}z').
\end{align*}
\end{proof}
\begin{proof}[Proof of Lemma \ref{lem:sn-properties-1}]
 Part~\ref{enu:lem:sn-enu1} is clear from the definition of $s_{n}$.
To establish parts \ref{enu:lem:sn-enu2} and \ref{enu:lem:sn-enu3}
we use the decomposition
\begin{align*}
s_{n}(\mathtt{Z},b) & =g_{n-1}(z_{-\ell_{n}(b)},\ldots,z_{m_{n}(b)})\vee f_{n-2}(z_{-\ell_{n}(b)},\ldots,z_{-\ell_{n}(b)+2^{n-2}-1})\vee f_{n-2}(z_{-\ell_{n}(b)+2^{n-2}},\ldots,z_{m_{n}(b)})\vee\\
 & \qquad g_{n-1}(z_{m_{n}(b)+1},\ldots,z_{r_{n}(b)})\vee f_{n-2}(z_{m_{n}(b)+1},\ldots,z_{m_{n}(b)+2^{n-2}})\vee f_{n-2}(z_{m_{n}(b)+2^{n-2}+1},\ldots,z_{r_{n}(b)}).
\end{align*}
It follows that $s_{n}(\mathtt{Z},b)\geq s_{n-1}(\mathtt{Z},b)$,
since if $b_{n}=0$ then $\llbracket-\ell_{n}(b),m_{n}(b)\rrbracket=\llbracket-\ell_{n-1}(b),r_{n-1}(b)\rrbracket$
and so
\[
s_{n-1}(\mathtt{Z},b)=f_{n-2}(z_{-\ell_{n}(b)},\ldots,z_{-\ell_{n}(b)+2^{n-2}-1})\vee f_{n-2}(z_{-\ell_{n}(b)+2^{n-2}},\ldots,z_{m_{n}(b)})
\]
and if $b_{n}=1$ then $\llbracket m_{n}(b)+1,r_{n}(b)\rrbracket=\llbracket-\ell_{n-1}(b),r_{n-1}(b)\rrbracket$
and so
\[
s_{n-1}(\mathtt{Z},b)=f_{n-2}(z_{m_{n}(b)+1},\ldots,z_{m_{n}(b)+2^{n-2}})\vee f_{n-2}(z_{m_{n}(b)+2^{n-2}+1},\ldots,z_{r_{n}(b)}).
\]
For the same reasons $s_{n}(\mathtt{Z},b)\geq g_{n-1}(z_{-\ell_{n-1}(b)},\ldots,z_{r_{n-1}(b)})$.
\end{proof}
\begin{proof}[Proof of Lemma~\ref{lem:nuts-tau-equal-1}]
 The uniqueness of $b'_{1:n-1}$ follows from the fact that $\ell$
is uniquely determined by $b_{1:n-1}$ and $b'_{1:n-1}$ is uniquely
determined by $\ell+k$. We have $\ell_{n-1}(b')=\ell+k$ and $r_{n-1}(b')=r-k$
by construction. Since for $i\in\mathbb{Z}$ $z'_{i}=z_{k+i}$, it
follows that $z'_{-\ell_{n-1}(b')}=z_{k-\ell-k}=z_{-\ell}$ and $z'_{r_{n-1}(b')}=z_{k+r-k}=z_{r}$,
so that indeed $\llbracket z'_{-(\ell+k)},z'_{r-k}\rrbracket=\llbracket z{}_{-\ell},z{}_{r}\rrbracket$.
Since $b'_{n}=b_{n}$ and $\ell_{n-1}(b')=\ell_{n-1}(b)+k$, we also
have $\ell_{n}(b')=\ell_{n-1}(b')+b_{n}2^{n-1}=\ell_{n}(b)+k$ and
similarly $r_{n}(b')=r_{n}(b)-k$, so $\llbracket z'_{-\ell_{n}(b')},z'_{r_{n}(b')}\rrbracket=\llbracket z{}_{-\ell_{n}(b)},z{}_{r_{n}(b)}\rrbracket$.
Since $\tau(\mathtt{Z},b)=n$, $s_{n}(\mathtt{Z},b)=1$ and $s_{n-1}(\mathtt{Z},b)=0$.
Since $s_{n-1}(\mathtt{Z},b)$ and $s_{n}(\mathtt{Z},b)$ depend only
on the values and the order of their inputs, and not the way they
are indexed, we have $s_{n-1}(\mathtt{Z},b)=s_{n-1}(\mathtt{Z}',b')=0$
and $s_{n}(\mathtt{Z},b)=s_{n}(\mathtt{Z}',b')=1$. To conclude that
$\tau(\mathtt{Z}',b')=n$, it remains only to show that $s_{i}(\mathtt{Z}',b')=0$
for all $i\in\llbracket1,n-2\rrbracket$, but this is implied by Lemma~\ref{lem:sn-properties-1}-\ref{enu:lem:sn-enu2}.
\end{proof}
\begin{proof}[Proof of Proposition \ref{prop:DR-random}]
 Let $\xi=(k,\mathtt{Z}^{k})\in S_{k}\cap\{\mathtt{Z}^{k}\in\mathsf{Z}^{k}\colon\alpha_{k}(\mathtt{Z}^{k})\wedge\alpha_{k}\circ\phi_{k}(\mathtt{Z}^{k})>0\}$.
Then
\begin{align*}
r(\xi) & =\frac{{\rm d}\eta_{k,S_{k}}^{\phi_{k}}}{{\rm d}\eta_{k,S_{k}}}(\mathtt{Z}^{k})\frac{\beta_{k}\circ\phi_{k}}{\beta_{k}}(\mathtt{Z}^{k})\frac{\alpha_{k}\circ\phi_{k}}{\alpha_{k}}(\mathtt{Z}^{k})\\
 & =r_{k}(\mathtt{Z}^{k})\frac{\alpha_{k}\circ\phi}{\alpha_{k}}(\mathtt{Z}^{k})\\
 & =1,
\end{align*}
where we have used that $\alpha_{k}=r_{k}\cdot\alpha_{k}\circ\phi_{k}$
on $S_{k}$ by part~\ref{enu:invo-rev-alpha-decomp} of Theorem~\ref{thm:invo-rev},
applied with $\mu=\eta_{k}\cdot\beta_{k}$ and $\phi=\phi_{k}$.
\end{proof}

\section{Measure theory tools\label{sec:Measure-theory-tools}}

\subsection{Standard results}
\begin{thm}[Change of variables formula for Lebesgue measure]
\label{thm:cov-leb}Let $\phi$ be a continuously differentiable,
invertible function. If $f:\mathbb{R}^{d}\to\mathbb{R}$ is integrable
then, with $\lambda$ the Lebesgue measure,
\[
\int_{\mathbb{R}^{d}}f\circ\phi(\xi)\left|{\rm det}\phi'(\xi)\right|\lambda({\rm d}\xi)=\int_{\mathbb{R}^{d}}f(\xi)\lambda({\rm d}\xi),
\]
where $\phi'(\xi)$ is the Jacobian matrix with entries $\phi'(\xi)_{ij}=\partial\phi_{i}/\partial\xi_{j}(\xi)$.
\end{thm}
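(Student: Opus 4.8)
The plan is to reduce the asserted integral identity to a purely set-theoretic statement about how $\phi$ distorts Lebesgue measure, and then to establish that statement by the classical two-step route: linear maps first, then general $C^1$ maps by local linearization. First I would test the claimed identity against indicators: taking $f=\mathbf{1}_B$ and using $\mathbf{1}_B\circ\phi=\mathbf{1}_{\phi^{-1}(B)}$ together with the bijectivity of $\phi$ (so that $\phi(\mathbb{R}^d)=\mathbb{R}^d$ and $B=\phi(A)$ with $A:=\phi^{-1}(B)$), the identity becomes equivalent to the set-function statement
\[
\lambda(\phi(A))=\int_A\left|\det\phi'(\xi)\right|\lambda(\mathrm{d}\xi),\qquad A\in\mathscr{E},
\]
which I will call $(\dagger)$; geometrically this says precisely that $\left|\det\phi'\right|$ is the local volume-distortion factor of $\phi$. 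Granting $(\dagger)$, the full statement for integrable $f$ follows by the standard bootstrap: linearity extends it from indicators to nonnegative simple functions, the monotone convergence theorem extends it to nonnegative measurable $f$, and the decomposition $f=f^{+}-f^{-}$ handles general integrable $f$.

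The core is therefore $(\dagger)$. I would first settle the linear case: for an invertible linear map $T$ one has $\lambda(T(A))=\left|\det T\right|\lambda(A)$, proved by factoring $T$ into elementary matrices (single-coordinate scalings, shears, and coordinate permutations), checking the identity for each elementary factor directly from translation invariance of $\lambda$ and Fubini's theorem, and then composing via multiplicativity of the determinant. For a general $C^1$ map I would localize: partition a bounded region into a fine grid of cubes, approximate $\phi$ on the cube containing $\xi_0$ by its affine part $\xi\mapsto\phi(\xi_0)+\phi'(\xi_0)(\xi-\xi_0)$, and use the linear case together with the uniform continuity of $\phi'$ on compact sets to bound $\lambda(\phi(\text{cube}))$ by $\left|\det\phi'(\xi_0)\right|$ times the volume of the cube, up to a relative error that vanishes with the mesh. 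Summing over cubes produces a Riemann sum for $\int_A\left|\det\phi'\right|\mathrm{d}\lambda$, so letting the mesh tend to zero gives $(\dagger)$ on bounded sets, and $\sigma$-additivity extends it to all of $\mathscr{E}$.

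The main obstacle is precisely this local linearization step: one must make the affine approximation error uniform across each cube and show that the errors accumulated over all cubes remain negligible relative to the total volume as the mesh shrinks, which is where the continuity of $\phi'$ and the nonsingularity of $\phi'(\xi_0)$ are genuinely used. Everything else --- the linear case and the measure-theoretic bootstrap --- is either the algebra of determinants or routine approximation. Since this is a classical result, not specific to our framework, in the manuscript I would carry out only the reduction above and then invoke a standard reference for $(\dagger)$ rather than reproduce these analytic estimates in full.
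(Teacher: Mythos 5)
Your proposal is correct and its ultimate disposition matches the paper's treatment exactly: the paper offers no proof of this theorem at all, simply citing Billingsley (Theorem 17.2), which is precisely what you conclude one should do. The sketch you give --- reduction to the set identity $\lambda(\phi(A))=\int_A\left|\det\phi'\right|{\rm d}\lambda$ via indicators, the linear case through elementary matrices, and local linearization on a shrinking grid of cubes --- is the standard argument underlying that reference, so there is nothing to reconcile beyond noting that the classical statement (and your linearization step) tacitly requires the Jacobian to be nonsingular, a hypothesis the paper's wording leaves implicit.
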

This is covered by \citet[Theorem 17.2]{billingsley1995probability}.
\begin{example}[Jacobian of a linear mapping]
 Consider the Lebesgue measure on $\big(\mathbb{R},\mathscr{B}(\mathbb{R})\big)$
such that for any $a,b\in\mathbb{R}$, $a<b$ $\lambda\big((a,b]\big)=b-a$
and consider the scenario $\phi(\xi)=\alpha\xi$ where, without lost
of generality, $\alpha>0$. Recalling the definition $\lambda^{\phi^{-1}}(A):=\lambda\big(\phi(A)\big)$
for any $A\in\mathscr{B}(\mathbb{R})$ we have for $a,b\in\mathbb{R},a<b$
\[
\lambda^{\phi^{-1}}\big((a,b]\big)=\lambda\big((\alpha a,\alpha b]\big)=\alpha(b-a)=\alpha\lambda\big((a,b]\big),
\]
that is $\lambda^{\phi^{-1}}=\alpha\lambda$ and $\lambda\equiv\lambda^{\phi^{-1}}$.
We deduce on the one hand that
\begin{align*}
\int f(\xi)\lambda({\rm d}\xi) & =\int f\circ\phi(\xi)\lambda^{\phi^{-1}}({\rm d}\xi)\\
 & =\int f\circ\phi(\xi)\alpha\lambda({\rm d}\xi)\\
 & =\int f\circ\phi(\xi){\rm det}\phi'(\xi)\lambda({\rm d}\xi)
\end{align*}
and using the Radon-Nikodym theorem (Theorem~\ref{thm:Radon-Nikodym})
we also have
\begin{align*}
\int f(\xi)\lambda({\rm d}\xi) & =\int f\circ\phi(\xi)\lambda^{\phi^{-1}}({\rm d}\xi)\\
 & =\int f\circ\phi(\xi)\frac{{\rm d}\lambda^{\phi^{-1}}}{{\rm d}\lambda}(\xi)\lambda({\rm d}\xi)
\end{align*}
and we deduce that, $\lambda-$almost everywhere,
\[
\frac{{\rm d}\lambda^{\phi^{-1}}}{{\rm d}\lambda}(\xi)=\left|{\rm det}\phi'(\xi)\right|.
\]
This result can be generalised to the multivariate scenario but also
to nonlinear invertible and smooth mappings $\phi\colon\mathbb{R}^{d}\rightarrow\mathbb{R}^{d}$
by local linearisation.
\end{example}

\subsection{Proofs\label{subsec:app-proofs-measure}}
\begin{proof}[Proof of Proposition \ref{prop:r-density} ]
 First observe that ${\rm d}\mu^{\phi}/{\rm d}\lambda^{\phi}=\rho\circ\phi$:
for any $A\in\mathscr{E}$,
\begin{align*}
\int{\bf 1}_{A}(\xi)\rho\circ\phi(\xi)\lambda^{\phi}({\rm d}\xi) & =\int{\bf 1}_{A}\circ\phi(\xi)\rho(\xi)\lambda({\rm d}\xi)\\
 & =\int{\bf 1}_{A}\circ\phi(\xi)\mu({\rm d}\xi)\\
 & =\int{\bf 1}_{A}(\xi)\mu^{\phi}({\rm d}\xi).
\end{align*}
Then we find for $A\in\mathscr{E}$, $A\subseteq S$,
\begin{align*}
\int{\bf 1}_{A}(\xi)\frac{\rho\circ\phi}{\rho}(\xi)\frac{{\rm d}\lambda^{\phi}}{{\rm d}\lambda}(\xi)\mu_{S}({\rm d}\xi) & =\int{\bf 1}_{A}(\xi)\frac{\rho\circ\phi}{\rho}(\xi)\frac{{\rm d}\lambda^{\phi}}{{\rm d}\lambda}(\xi)\mu_{S}({\rm d}\xi)\\
 & =\int{\bf 1}_{A}(\xi)\frac{\rho\circ\phi}{\rho}(\xi)\frac{{\rm d}\lambda^{\phi}}{{\rm d}\lambda}(\xi)\rho(\xi)\lambda({\rm d}\xi)\\
 & =\int{\bf 1}_{A}(\xi)\rho\circ\phi(\xi)\lambda^{\phi}({\rm d}\xi)\\
 & =\int{\bf 1}_{A}(\xi)\mu_{S}^{\phi}({\rm d}\xi),
\end{align*}
so that indeed ${\rm d}\mu_{S}^{\phi}/{\rm d}\mu_{S}=\frac{\rho\circ\phi}{\rho}\cdot\frac{{\rm d}\lambda^{\phi}}{{\rm d}\lambda}$.
The proof that $\mu$ and $\mu^{\phi}$ are mutually singular on $S^{\complement}$
follows the same arguments as in the proof of Theorem~\ref{thm:invo-rev}.
\end{proof}

\section{X-tra chance proof\label{sec:X-tra-chance-proof}}

This is a proof of the claims in Remark~\ref{eg:xtra-chance}. Fix
$u\in\mathbb{R}_{+}$, we show the result by induction. First we have
$\beta_{1}(z)=1\times1$, $\beta_{1}\circ\phi_{1}(z)=1$ and by considering
$z\in S(\varpi_{u},\varpi_{u}^{\phi_{1}})$ and $z\in S^{\complement}(\varpi_{u},\varpi_{u}^{\phi_{1}})$
separately, and Theorem~\ref{def:slice-sampler}-\ref{def:slice-sampler}
we obtain $r_{1}(z)=\mathbb{I}\{u\leq\varpi\}\mathbb{I}\{u\leq\varpi\circ\phi_{1}(z)\}=\mathbb{I}\{u\leq\varpi(z)\wedge\varpi\circ\phi_{1}(z)\}$
and therefore with $a(r)=1\wedge r$ we deduce $\beta_{2}(z)=\beta_{1}(z)[1-\mathbb{I}\{u\leq\varpi(z)\wedge\varpi\circ\phi_{1}(z)\}]=\mathbb{I}\{\varpi(z)\wedge\varpi\circ\phi_{1}(z)<u\}$.
Assume that for some $k\in\left\llbracket 2,n\right\rrbracket $ and
any $z\in\mathsf{Z}$
\[
\beta_{k}(z)=\mathbb{I}\{\varpi(z)\wedge\vee_{i=1}^{k-1}\varpi\circ\phi_{i}(z)<u\}.
\]
From the assumption $\varpi\circ\phi_{i}\circ\phi_{k}=\varpi\circ\phi_{k-i}$
this implies
\begin{align*}
\beta_{k}\circ\phi_{k}(z) & =\mathbb{I}\{\varpi\circ\phi_{k}(z)\wedge\vee_{i=1}^{k-1}\varpi\circ\phi_{k-i}(z)<u\}=\mathbb{I}\{\varpi\circ\phi_{k}(z)\wedge\vee_{i=1}^{k-1}\varpi\circ\phi_{i}(z)<u\}.
\end{align*}
Therefore, proceeding as for $r_{1}(z)$ above and taking advantage
of the fact that $\beta_{k}(\xi),\beta_{k}\circ\phi_{k}(z)\in\{0,1\}$
we obtain
\begin{align*}
r_{k}(z) & =\beta_{k}(z)\beta_{k}\circ\phi_{k}(z)\mathbb{I}\{u\leq\varpi(z)\wedge\varpi\circ\phi_{k}(z)\}\\
 & =\mathbb{I}\{\varpi(z)\wedge\vee_{i=1}^{k-1}\varpi\circ\phi_{i}(z)<u\leq\varpi(z)\}\mathbb{I}\{\varpi\circ\phi_{k}(z)\wedge\vee_{i=1}^{k-1}\varpi\circ\phi_{i}(z)<u\leq\varpi\circ\phi_{k}(z)\},\\
 & =\mathbb{I}\{\vee_{i=1}^{k-1}\varpi\circ\phi_{i}(z)<u\leq\varpi(z)\}\mathbb{I}\{\vee_{i=1}^{k-1}\varpi\circ\phi_{i}(z)<u\leq\varpi\circ\phi_{k}(z)\}\\
 & =\mathbb{I}\{\vee_{i=1}^{k-1}\varpi\circ\phi_{i}(z)<u\leq\varpi(z)\wedge\varpi\circ\phi_{k}(z)\}.
\end{align*}
When $\vee_{i=1}^{k-1}\varpi\circ\phi_{i}(z)<\varpi(z)\wedge\varpi\circ\phi_{k}(z)$
\[
1-r_{k}(z)=\mathbb{I}\{u\leq\vee_{i=1}^{k-1}\varpi\circ\phi_{i}(z)\}+\mathbb{I}\{\varpi(z)\wedge\varpi\circ\phi_{k}(z)<u\}
\]
therefore
\begin{align*}
\beta_{k+1}(z) & =\mathbb{I}\{\varpi(z)\wedge\vee_{i=1}^{k-1}\varpi\circ\phi_{i}(z)<u\}\left[\mathbb{I}\{\varpi(z)\wedge\varpi\circ\phi_{k}(z)<u\}+\mathbb{I}\{u\leq\vee_{i=1}^{k-1}\varpi\circ\phi_{i}(z)\}\right]\\
 & =\mathbb{I}\{\varpi(z)\wedge\vee_{i=1}^{k}\varpi\circ\phi_{i}(z)<u\}+\mathbb{I}\{\varpi(z)\wedge\vee_{i=1}^{k-1}\varpi\circ\phi_{i}(z)<u\leq\vee_{i=1}^{k-1}\varpi\circ\phi_{i}(z)\}\\
 & =\mathbb{I}\{\varpi(z)\wedge\vee_{i=1}^{k}\varpi\circ\phi_{i}(z)<u\},
\end{align*}
where we have used that $\mathbb{I}\{\varpi(z)\wedge\vee_{i=1}^{k-1}\varpi\circ\phi_{i}(z)<u\}=\mathbb{I}\{\vee_{i=1}^{k-1}\big(\varpi(z)\wedge\varpi\circ\phi_{i}(z)\big)<u\}$. 

When $\vee_{i=1}^{k-1}\varpi\circ\phi_{i}(z)\geq\varpi(z)\wedge\varpi\circ\phi_{k}(z)$,
using the same argument,
\begin{align*}
\beta_{k+1}(\xi) & =\mathbb{I}\{\varpi(z)\wedge\vee_{i=1}^{k-1}\varpi\circ\phi_{i}(z)<u\}\\
 & =\mathbb{I}\{\vee_{i=1}^{k}\big(\varpi(z)\wedge\varpi\circ\phi_{i}(z)\big)<u\}\\
 & =\mathbb{I}\{\varpi(z)\wedge\vee_{i=1}^{k}\varpi\circ\phi_{i}(z)<u\},
\end{align*}
which completes the proof. 

\section{\label{sec:app-NUTS-motivation}NUTS motivation}

The criterion consists of stopping when $\left\Vert x_{r}-x_{\ell}\right\Vert ^{2}$
reaches a stationary point, in the hope that it is a maximum. This
requires the computation of a differential, that is the first order
linear approximation of variations of $\left\Vert x_{r}-x_{\ell}\right\Vert ^{2}$when
$x_{\ell}$ (resp. $x_{r}$) is perturbed linearly $x_{\ell}+\epsilon v_{\ell}$
(resp. $x_{r}+\epsilon v_{r}$). This leads to
\[
\left\Vert x_{r}-(x_{\ell}+\epsilon v_{\ell})\right\Vert _{2}^{2}-\left\Vert x_{r}-x_{\ell}\right\Vert _{2}^{2}=-2\epsilon(x_{r}-x_{\ell})^{\top}v_{\ell}+\epsilon^{2}\left\Vert v_{\ell}\right\Vert _{2}^{2},
\]
and
\[
\left\Vert x_{r}+\epsilon v_{r}-x_{\ell}\right\Vert _{2}^{2}-\left\Vert x_{r}-x_{\ell}\right\Vert _{2}^{2}=2\epsilon(x_{r}-x_{\ell})^{\top}v_{r}+\epsilon^{2}\left\Vert v_{r}\right\Vert _{2}^{2},
\]
As $\epsilon\downarrow0$ the dominant and linear term has coefficient
\[
\lim_{\epsilon\downarrow0}\frac{1}{\epsilon}\left\{ \left\Vert x_{r}-(x_{\ell}+\epsilon v_{\ell})\right\Vert _{2}^{2}-\left\Vert x_{r}-x_{\ell}\right\Vert _{2}^{2}\right\} <0\iff(x_{r}-x_{\ell})^{\top}v_{\ell}>0,
\]
and
\[
\lim_{\epsilon\downarrow0}\frac{1}{\epsilon}\left\{ \left\Vert x_{r}+\epsilon v_{r}-x_{\ell}\right\Vert _{2}^{2}-\left\Vert x_{r}-x_{\ell}\right\Vert _{2}^{2}\right\} <0\iff(x_{r}-x_{\ell})^{\top}v_{r}<0.
\]

\section{Event chain algorithms \label{sec:Event-chain-algorithms}}

We briefly describe standard event chain processes for soft potentials
and pairwise interactions. Define $x=(x_{1},x_{2},\ldots,x_{m})\in\mathsf{X}^{m}$
with $\mathsf{X}=\mathbb{T}^{d}:=-1/2+\mathbb{R}^{d}/\mathbb{Z}^{d}$
and $v\in\mathsf{V}\subset\mathbb{R}^{d}$. The target distribution
of interest has density 
\[
\gamma(x,v,i)=\gamma(x)\kappa(v)\propto\exp\big(-U(x)\big)\kappa(v)\mathbb{I}\{i\in\llbracket m\rrbracket\}
\]
where $\gamma$ has density with respect to the measure induced by
the Lebesgue measure on $[-1/2,1/2)^{d}$ \citet[Chapter 6]{folland1999real}
and $\kappa$ is the density with respect to the Lebesgue measure
on $\mathsf{V}=\mathbb{R}^{d}$ or the Hausdorff measure on $\mathsf{V}=\mathbb{S}^{d-1}$.
It is further assumed that $\kappa(-v)=\kappa(v)$ for $v\in\mathsf{V}$
and we focus on the scenario involving pairwise interactions,
\[
U(x):=\sum_{1\leq i<j\leq m}V(x_{i}-x_{j}),
\]
where $V:(-1,1)^{d}\rightarrow\mathbb{R}_{+}$ is continuously differentiable
and such that $V(x_{*})=V(-x_{*})$ for all $x_{*}\in\mathsf{X}$.
This leads to a probability density with exchangeability properties.
The generator corresponding to event chain processes is given by
\[
Lf(x,v,i)=\langle\nabla_{x}f,\mathbf{e}_{i}\varoast v\rangle+\lambda(x,v,i)\cdot\big[Rf(x,v,i)-f(x,v,i)\big]+\lambda_{\text{{\rm ref}}}\cdot\left[\int f(x,w,i)\kappa({\rm d}w)-f(x,v,i)\right],
\]
for $\lambda_{{\rm ref}}>0$, $\{\mathbf{e}_{i},i\in\llbracket m\rrbracket\}$
the canonical basis vectors and here $\varoast$ the Kronecker product.
The intensity of the process is taken to be of the form
\[
\lambda(x,v,i)=\sum_{j=1}^{m}\lambda_{j}(x,v,i)
\]
with the convention $\lambda_{i}(x,v,i)=0$ and for $j\neq i$, with
$\langle\cdot,\cdot\rangle_{+}:=\max\left\{ 0,\langle\cdot,\cdot\rangle\right\} $
,
\[
\lambda_{j}\big(x,v,i\big):=\bigl\langle\nabla V_{*}(x_{i}-x_{j}),v\bigr\rangle_{+},
\]
and for $(x,v,i),(y,w,j)\in\mathsf{X}\times\mathsf{V}\times\llbracket m\rrbracket$,
\[
R\big((x,v,i),{\rm d}(y,w,j)\big):=\sum_{k=1,k\neq i}^{m}\frac{\lambda_{k}(x,v,i)}{\lambda(x,v,i)}\delta_{(x,v,k)}\big({\rm d}(y,w,j)\big).
\]
This means that we follow trajectories of the form $t\mapsto(x_{1},\ldots,x_{i-1},x_{i}+tv,x_{i+1},\ldots,x_{m},v,i)$
with $t\geq0$ for a random time arising from an inhomogeneous Poisson
process of intensity $t\mapsto\lambda(x+t\,\mathbf{e}_{i}\otimes v,v,i)+\lambda_{{\rm ref}}$,
a time at which one chooses between refreshing the velocity or selecting
a new active particle randomly.

We check now that the corresponding process leaves the correct distribution
invariant. We know that it is sufficient to show that $\mu(Lf)=0$
for all functions $f\colon\mathsf{X}\times\mathsf{V}\times\llbracket m\rrbracket\rightarrow\mathbb{R}$
in a core of $\big(L,D(L)\big)$. Using \citet{2018arXiv180705421D},
it can be shown that the functions $f\colon\mathsf{X}\times\mathsf{V}\times\llbracket m\rrbracket\rightarrow\mathbb{R}$
such that for $i\in\llbracket m\rrbracket$, $f(\cdot,i)\in\mathsf{C}_{b}^{2}(\mathsf{X\times\mathsf{V}})$
(bounded support and twice continuously differentiable) define such
a core. In fact with the isometric involution $\mathfrak{S}f(x,v,i)=f(x,-v,i)$,
we can show the stronger property $\bigl\langle Lf,g\bigr\rangle_{\mu}=\bigl\langle f,\mathfrak{S}L\mathfrak{S}g\bigr\rangle_{\mu}$,
for $f,g\colon\mathsf{X}\times\mathsf{V}\times\llbracket m\rrbracket\rightarrow\mathbb{R}$
such that the integral exists and where $\bigl\langle f,g\bigr\rangle_{\mu}:=\int fg{\rm d}\mu$,
which is the continuous time formulation of $(\mu,\mathfrak{S})-$reversibility
\citet{Andrieu2019}. The property $\mu(Lf)=0$ can be deduced by
setting $g=\mathbf{1}$. We establish an intermediate result from
which this latter property can be deduced.
\begin{lem}
\label{lem:result-intensities}Let\textup{ $V:\mathsf{X}=\mathbb{T}^{d}\rightarrow\mathbb{R}_{+}$
be continuously differentiable }and such that $V(x_{*})=V(-x_{*})$
for all $x_{*}\in\mathsf{X}$. Then for $i,j\in\llbracket m\rrbracket$,
$i\neq j$
\begin{enumerate}
\item $\lambda(x,v,i)-\mathfrak{S}\lambda(x,v,i)=\bigl\langle\nabla_{x}U(x),\mathbf{e}_{i}\varoast v\bigr\rangle$,
\item $\mathfrak{S}\lambda_{j}\big(x,v,i\big)=\lambda_{i}\big(x,v,j\big).$
\end{enumerate}
\end{lem}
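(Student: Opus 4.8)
The plan is to reduce both identities to one elementary fact: the gradient $\nabla V$ is odd. Indeed, since $V$ is continuously differentiable and satisfies $V(x_*)=V(-x_*)$, differentiating gives $\nabla V(-x_*)=-\nabla V(x_*)$ for all $x_*$. Alongside this I would use only that the velocity flip acts by $\mathfrak{S}\lambda(x,v,i)=\lambda(x,-v,i)$, and the pointwise identity $\langle a,-v\rangle_+=\max\{0,-\langle a,v\rangle\}$. Throughout I write $a_{ij}:=\nabla V(x_i-x_j)$ as a local shorthand.

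For part (ii) I would simply unfold the definitions. On one side, $\mathfrak{S}\lambda_j(x,v,i)=\lambda_j(x,-v,i)=\langle a_{ij},-v\rangle_+=\max\{0,-\langle a_{ij},v\rangle\}$. On the other side, $\lambda_i(x,v,j)=\langle \nabla V(x_j-x_i),v\rangle_+=\langle -a_{ij},v\rangle_+=\max\{0,-\langle a_{ij},v\rangle\}$, where the middle step is exactly oddness applied to $x_j-x_i=-(x_i-x_j)$. Comparing the two expressions gives the claim; this part is purely mechanical once oddness is available.

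For part (i) I would first compute the position gradient of $U$ in block $i$. A pair term $V(x_k-x_l)$ contributes to $\nabla_{x_i}U$ only when $i\in\{k,l\}$, giving $\nabla_{x_i}U(x)=\sum_{l>i}\nabla V(x_i-x_l)-\sum_{k<i}\nabla V(x_k-x_i)$; using oddness to flip the sign in the second sum collapses this to $\nabla_{x_i}U(x)=\sum_{j\neq i}\nabla V(x_i-x_j)$. Since $\mathbf{e}_i\varoast v$ places $v$ in the $i$-th block and zero elsewhere, $\langle\nabla_x U(x),\mathbf{e}_i\varoast v\rangle=\sum_{j\neq i}\langle a_{ij},v\rangle$. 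On the other hand $\lambda(x,v,i)-\mathfrak{S}\lambda(x,v,i)=\sum_{j\neq i}\big(\langle a_{ij},v\rangle_+-\langle a_{ij},-v\rangle_+\big)$, and invoking the scalar identity $t_+-(-t)_+=t$ with $t=\langle a_{ij},v\rangle$ reduces each summand to $\langle a_{ij},v\rangle$, so the two sums coincide.

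The only points requiring care, rather than genuine difficulty, are bookkeeping the two sign conventions at once (the oddness of $\nabla V$ together with the $(\cdot)_+$ operation) and correctly accounting for the two families of pair terms, $k=i$ versus $l=i$, when differentiating $U$. The scalar identity $t_+-(-t)_+=t$ is what converts the nonsmooth jump rates into the smooth drift term $\langle\nabla_x U,\mathbf{e}_i\varoast v\rangle$, and is the conceptual crux of part (i).
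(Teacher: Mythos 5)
Your proof is correct and follows essentially the same route as the paper's: part (ii) via oddness of $\nabla V$ obtained from differentiating $V(x_{*})=V(-x_{*})$, and part (i) via the scalar identity $t_{+}-(-t)_{+}=t$ applied to each $\langle\nabla V(x_{i}-x_{j}),v\rangle$ followed by identification of the block gradient $\nabla_{x_{i}}U$. If anything, you are slightly more explicit than the paper, which glosses over the fact that oddness of $\nabla V$ is also needed in part (i) to collapse the two families of pair terms ($k<i$ versus $l>i$) into $\sum_{j\neq i}\nabla V(x_{i}-x_{j})$.
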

\begin{proof}
The first relation follows, for $i\in\llbracket m\rrbracket$, from
\begin{align*}
\sum_{j\neq i}\lambda_{j}\big(x,v,i\big)-\mathfrak{S}\lambda_{j}\big(x,v,i\big) & =\sum_{j\neq i}\bigl\langle\nabla_{*}V(x_{i}-x_{j}),v\bigr\rangle\\
 & =\bigl\langle\sum_{j\neq i}\nabla_{*}V(x_{i}-x_{j}),v\bigr\rangle\\
 & =\bigl\langle\nabla_{x}U(x),\mathbf{e}_{i}\varoast v\bigr\rangle.
\end{align*}
The second property follows from the assumption $V(x_{*})=V(-x_{*})=V\circ s(x_{*})$
where $s(x_{*})=-x_{*}$. Indeed, in this scenario the chain rule
leads to $\nabla_{*}V(x_{*})=(\nabla_{*}\varoast s)(\nabla V)\circ s(x_{*})=-\nabla_{*}V(-x_{*})$
and consequently for $i,j\in\left\llbracket 1,m\right\rrbracket $,
$i\neq j$ 
\begin{align*}
\mathfrak{S}\lambda_{j}\big(x,v,i\big) & =\bigl\langle-\nabla_{*}V(x_{i}-x_{j}),v\bigr\rangle_{+}\\
 & =\bigl\langle\nabla_{*}V(x_{j}-x_{i}),v\bigr\rangle_{+}\\
 & =\lambda_{i}\big(x,v,j\big).
\end{align*}
\end{proof}
We now prove $\mu(Lf)=0$. We can clearly ignore the refreshment component
of the generator. An integration by part and Lemma~\ref{lem:result-intensities}
establish that
\begin{align*}
\int\langle\nabla_{x}f(x,v,i),\mathbf{e}_{i}\otimes v\rangle\mu\big({\rm d}(x,v,i)\big) & =\int\langle\nabla_{i}f(x,v,i),v\rangle\mu\big({\rm d}(x,v,i)\big)\\
 & =\int f(x,v,i)\langle\nabla_{i}U(x),v\rangle\mu\big({\rm d}(x,v,i)\big).\\
 & =\int f(x,v,i)\left[\lambda(x,v,i)-\mathfrak{S}\lambda(x,v,i)\right]\mu\big({\rm d}(x,v,i)\big)\\
 & =\int f(x,v,i)\sum_{j\neq i}\big[\lambda_{j}\big(x,v,i\big)-\mathfrak{S}\lambda_{j}\big(x,v,i\big)\big]\mu\big({\rm d}(x,v,i)\big)\\
 & =\int\sum_{j\neq i}\lambda_{j}\big(x,v,i\big)\big[f(x,v,i)-f(x,v,j)\big]\mu\big({\rm d}(x,v,i)\big).
\end{align*}
and we conclude by noting that
\[
\int\lambda(x,v,i)\cdot\big[Rf(x,v,i)-f(x,v,i)\big]\mu\big({\rm d}(x,v,i)\big)=\int\sum_{j\neq i}\lambda_{j}(x,v,i)\big[f(x,v,j)-f(x,v,i)\big]\mu\big({\rm d}(x,v,i)\big).
\]
 The same calculations can be used for higher order interactions \citet{harland2017event}.

\end{document}